\tikzset{
  cir/.style = {circle,draw,fill,inner sep=.7pt},
  circ/.style = {circle,draw,fill,inner sep=1.3pt},
  circg/.style = {circle,draw=lightgray,fill=lightgray,inner sep=1.3pt},
  circr/.style = {circle,draw=red,fill=red,inner sep=1.3pt},
  invisible/.style = {draw=none,inner sep=0pt,font=\tiny},
  nonedge/.style={decorate,decoration={snake,amplitude=.3mm,segment length=1mm},draw}
}
\theoremstyle{plain}
\newtheorem{theorem}{Theorem}
\newtheorem{corollary}[theorem]{Corollary}
\newtheorem{lemma}[theorem]{Lemma}
\newtheorem{observation}[theorem]{Observation}
\newtheorem{claim}{Claim}
\theoremstyle{definition}
\newtheorem{definition}{Definition}
\newtheorem{question}[theorem]{Question}
\theoremstyle{remark}
\newcommand{\eqnum}{\leavevmode\hfill\refstepcounter{equation}\textup{\tagform@{\theequation}}}
\date{}
\title{\Large Semitotal Domination: New hardness results and a polynomial-time algorithm for graphs of bounded mim-width}
\author{\normalsize Esther Galby, Andrea Munaro, Bernard Ries\\Department of Informatics, University of Fribourg, Bd de P\'erolles 90, 1700 Fribourg, Switzerland}
\begin{document}
\maketitle

\begin{abstract}
\textbf{Abstract.} A semitotal dominating set of a graph $G$ with no isolated vertex is a dominating set $D$ of $G$ such that every vertex in $D$ is within distance two of another vertex in $D$. The minimum size $\gamma_{t2}(G)$ of a semitotal dominating set of $G$ is squeezed between the domination number $\gamma(G)$ and the total domination number~$\gamma_{t}(G)$. 

\textsc{Semitotal Dominating Set} is the problem of finding, given a graph $G$, a semitotal dominating set of $G$ of size $\gamma_{t2}(G)$. In this paper, we continue the systematic study on the computational complexity of this problem when restricted to special graph classes. In particular, we show that it is solvable in polynomial time for the class of graphs with bounded mim-width by a reduction to \textsc{Total Dominating Set} and we provide several approximation lower bounds for subclasses of subcubic graphs. Moreover, we obtain complexity dichotomies in monogenic classes for the decision versions of \textsc{Semitotal Dominating Set} and \textsc{Total Dominating Set}. 

Finally, we show that it is $\mathsf{NP}$-complete to recognise the graphs such that $\gamma_{t2}(G) = \gamma_{t}(G)$ and those such that $\gamma(G) = \gamma_{t2}(G)$, even if restricted to be planar and with maximum degree at most $4$, and we provide forbidden induced subgraph characterisations for the graphs heriditarily satisfying either of these two equalities. 
\end{abstract}


\section{Introduction}

Two of the most studied parameters in graph theory are the domination number and the total domination number. Recall that a \textit{dominating set} of a graph $G$ is a subset $D \subseteq V(G)$ such that each vertex in $V(G) \setminus D$ is adjacent to a vertex in $D$. The \textit{domination number} $\gamma(G)$ is the minimum size of a dominating set of $G$. 
A \textit{total dominating set} (TD-set for short) of a graph $G$ with no isolated vertex is a dominating set $D$ of $G$ such that every vertex in $D$ is adjacent to another vertex in $D$. The \textit{total domination number} $\gamma_{t}(G)$ is the minimum size of a TD-set of $G$. Dominating sets and total dominating sets are the subjects of several monographs \citep{HHS98a,HHS98b,HY13}. 

The relaxed notion of semitotal dominating set was introduced by \citet{GHM14}. A \textit{semitotal dominating set} (semi-TD-set for short) of a graph $G$ with no isolated vertex is a dominating set $D$ of $G$ such that every vertex in $D$ is within distance two of another vertex in $D$. In this case, we say that every vertex has a witness: given a dominating set $D$ of $G$ and $v \in D$, a \textit{witness for $v$ with respect to $D$} is a vertex $w \in D$ such that $d(v, w) \leq 2$. In other words, a dominating set $D$ is a semi-TD-set if and only if every vertex in $D$ has a witness with respect to $D$. The \textit{semitotal domination number} $\gamma_{t2}(G)$ is the minimum size of a semi-TD-set of $G$. It immediately follows from the definitions that if $G$ is a graph with no isolated vertex, then $$\gamma(G) \leq \gamma_{t2}(G) \leq \gamma_{t}(G).$$

\citet{GHM14} showed that \textsc{Semitotal Dominating Set} is $\mathsf{NP}$-complete. This is the problem of deciding, given a graph $G$ and an integer $k$, whether $\gamma_{t2}(G) \leq k$. A systematic study on the computational complexity of \textsc{Semitotal Dominating Set} was initiated by \citet{HP17}. They showed that the problem is $\mathsf{NP}$-complete when restricted to the following graph classes: bipartite graphs, chordal bipartite graphs, planar graphs and split graphs. They also considered the minimisation version of \textsc{Semitotal Dominating Set}. Note that in this paper we denote the decision version and the minimisation version of a given problem by the same name. They showed that \textsc{Semitotal Dominating Set} has the same approximation hardness as the well-known \textsc{Set Cover}: it is not approximable within $(1 - \varepsilon)\ln n$, for any $\varepsilon > 0$, unless $\mathsf{NP} \subseteq \mathsf{DTIME}(n^{O(\log\log n)})$. On the other hand, using the natural greedy algorithm for \textsc{Set Cover}, they showed that it is in $\mathsf{APX}$ for graphs with bounded degree. Moreover, it is \textsc{APX}-complete for bipartite graphs with maximum degree $4$ and, on the positive side, they showed that it is solvable in $O(n^{2})$ time for interval graphs.  

In \Cref{secapp}, we provide the following approximation lower bounds for \textsc{Semitotal Dominating Set}. It is not approximable within $1.00013956$, unless $\mathsf{P} = \mathsf{NP}$, even when restricted to subcubic line graphs of bipartite graphs. Moreover, it is $\mathsf{APX}$-complete when restricted to cubic graphs and to subcubic bipartite graphs. The latter result is obtained by studying the effect of odd subdivisions on the semitotal domination number. Our results answer a question by \citet{HP17} on the complexity of \textsc{Semitotal Dominating Set} restricted to subcubic graphs.

In \Cref{reduc}, we introduce a graph transformation allowing to reduce \textsc{Semitotal Dominating Set} to \textsc{Total Dominating Set}. More specifically, from a given graph $G$, we construct the transformed graph $G'$ by first adding a true twin for each vertex of $G$ and then adding edges between vertices of $G$ at distance $2$ (in particular, $G'$ contains a copy of $G^2$). We then show that $\gamma_{t2}(G) = \gamma_t(G')$ and that a minimum semi-TD-set of $G$ can be obtained from a minimum TD-set of $G'$ in linear time. 

The class of graphs with bounded maximum induced matching width (mim-width for short) appears to behave well with respect to some domination problems, such as \textsc{Dominating Set} and \textsc{Total Dominating Set}. Mim-width is a parameter introduced by \citet{Vat12} and measuring how easy it is to decompose a graph along vertex cuts inducing a bipartite graph with small maximum induced matching size. Combining results in \citep{BV13,BTV13}, it is known that $(\sigma, \rho)$-domination problems (a class of graph problems including \textsc{Dominating Set} and \textsc{Total Dominating Set} introduced by \citet{TP97}) can be solved in $O(n^{w})$ time, assuming a decomposition tree with mim-width $w$ is provided as part of the input. Even though deciding the mim-width of a graph is $\mathsf{NP}$-hard in general and not in $\mathsf{APX}$ unless $\mathsf{NP} = \mathsf{ZPP}$ \citep{SV16}, \citet{BV13} showed that it is possible to find decomposition trees of constant mim-width in polynomial time for the following classes of graphs: permutation graphs, convex graphs and their complements, interval graphs and their complements, (circular $k$-) trapezoid graphs, circular permutation graphs, Dilworth-$k$ graphs, $k$-polygon graphs, circular arc graphs and complements of $d$-degenerate graphs. Another class allowing polynomial-time algorithms for some domination problems is that of dually chordal graphs, a superclass of interval graphs \citep{BCD98}. However, it should be remarked that dually chordal graphs do not have bounded mim-width \citep{Men17}.

In \Cref{mim}, we provide a polynomial-time algorithm for \textsc{Semitotal Dominating Set} restricted to graphs with bounded mim-width. In view of the reduction from \Cref{reduc}, we show that graphs with bounded mim-width are closed under the transformation therein and we rely on polynomial-time algorithms for \textsc{Total Dominating Set} restricted to graphs with bounded mim-width. Our result answers two questions in \citep{HP17} on the complexity of \textsc{Semitotal Dominating Set} for bipartite permutation graphs and convex bipartite graphs, two subclasses of chordal bipartite graphs having bounded mim-width. 

In \Cref{dually}, we show that also the class of dually chordal graphs is closed under the transformation above. Unfortunately, in this case, we cannot rely on any polynomial-time algorithm for \textsc{Total Dominating Set} restricted to this class (the algorithm provided in \citep{KS97} does not seem to be correct) and we leave the determination of the complexity status of \textsc{Semitotal Dominating Set} and \textsc{Total Dominating Set} for dually chordal graphs as an open problem. 

A class of graphs $\mathcal{G}$ is \textit{monogenic} if it is defined by a single forbidden induced subgraph, i.e. $\mathcal{G} = \mbox{\textit{Free}}(H)$, for some graph $H$. \citet{Kor92} showed that \textsc{Dominating Set} is decidable in polynomial time if $H$ is an induced subgraph of $P_{4} + tK_{1}$, for $t \geq 0$, and $\mathsf{NP}$-complete otherwise. In \Cref{sec:dicho}, combining results of \Cref{secapp} with existing ones, we obtain complexity dichotomies in monogenic classes for \textsc{Semitotal Dominating Set} and \textsc{Total Dominating Set}. It turns out that the complexities of \textsc{Dominating Set}, \textsc{Semitotal Dominating Set}, \textsc{Total Dominating Set} and \textsc{Connected Dominating Set} all agree when restricted to monogenic classes. 

\begin{center}
\begin{tabular}{l*{3}{l}}
Graph class                       & \textsc{Dominating Set} & \textsc{Semitotal Dominating Set} & \textsc{Total Dominating Set}  \\
\hline
bipartite                         & $\mathsf{NP}$-c \citep{Ber84} & $\mathsf{NP}$-c \citep{HP17} & $\mathsf{NP}$-c \citep{PLH83}   \\
line graph of bipartite           & $\mathsf{NP}$-c \citep{Kor92} & $\mathsf{NP}$-c (\Cref{semidec}) & $\mathsf{NP}$-c \citep{Mc94}  \\
circle                            & $\mathsf{NP}$-c \citep{Keil93} & \textbf{\color{red}?} & $\mathsf{NP}$-c \citep{Keil93}     \\
split                             & $\mathsf{NP}$-c \citep{Ber84} & $\mathsf{NP}$-c \citep{HP17} & $\mathsf{NP}$-c \citep{LP83}   \\
chordal bipartite                 & $\mathsf{NP}$-c \citep{MB87} & $\mathsf{NP}$-c \citep{HP17} & $\mathsf{P}$ \citep{DMK90}  \\
dually chordal                    & $\mathsf{P}$ \citep{BCD98} & \textbf{\color{red}?} & \textbf{\color{red}?}      \\
AT-free                           & $\mathsf{P}$ \citep{Kra00} & \textbf{\color{red}?} & $\mathsf{P}$ \citep{Kra00} \\ 
bounded clique-width              & $\mathsf{P}$ \citep{Cou90} & $\mathsf{P}$ \citep{Cou90}   & $\mathsf{P}$ \citep{Cou90}    \\
bounded mim-width                 & $\mathsf{P}$ \citep{BV13,BTV13} & $\mathsf{P}$ (\Cref{boundmimsemi}) & $\mathsf{P}$ \citep{BV13,BTV13}    \\  
tolerance                         & $\mathsf{P}$ \citep{GM16} & \textbf{\color{red}?} & \textbf{\color{red}?}     \\  
\end{tabular}
\captionof{table}{Comparison between the computational complexities of \textsc{Dominating Set}, \textsc{Semitotal Dominating Set} and \textsc{Total Dominating Set}.}
\label{table}
\end{center}

\begin{figure}[h]
\centering
\begin{tikzpicture}[scale=.95]
\node[rectangle,draw] (perfect) at (6,16) {perfect};
\node[rectangle,draw] (line) at (-1,15.2) {\parbox{2cm}{\centering line graph\\ of bipartite}};
\node[rectangle,draw] (comp) at (2,15) {comparability};
\node[rectangle,draw] (tolerance) at (4.5,15) {tolerance \textbf{\color{red}?}};
\node[rectangle,draw] (cocomp) at (8,14) {cocomparability \textbf{\color{red}?}};
\node[rectangle,draw] (chordal) at (11,15) {chordal};
\node[rectangle,draw] (atfree) at (8.2,15) {AT-free \textbf{\color{red}?}};
\node[rectangle,draw] (bipartite) at (-1,14) {bipartite};
\node[rectangle,draw] (chordalbi) at (-1,13) {chordal bipartite};
\node[rectangle,draw] (convex) at (-1,12) {convex};
\node[rectangle,draw] (circperm) at (2,12) {circular permutation};
\node[rectangle,draw] (perm) at (2,11) {permutation};
\node[rectangle,draw] (permbi) at (0,10) {bipartite permutation};
\node[rectangle,draw] (kpoly) at (5,12) {$k$-polygon};
\node[rectangle,draw] (circle) at (5,13) {circle \textbf{\color{red}?}};
\node[rectangle,draw] (circktrap) at (9,12) {circular $k$-trapezoid};
\node[rectangle,draw] (ktrap) at (9,11) {$k$-trapezoid};
\node[rectangle,draw] (trap) at (9,10) {trapezoid};
\node[rectangle,draw] (boundtol) at (8,9) {bounded tolerance};
\node[rectangle,draw] (disthere) at (4,9) {distance hereditary};
\node[rectangle,draw] (cograph) at (4,8) {cograph};
\node[rectangle,draw] (threshold) at (4,7) {threshold};
\node[rectangle,draw] (dilworth) at (7,8) {Dilworth $k$};
\node[rectangle,draw] (circarc) at (11.5,10) {circular arc};
\node[rectangle,draw] (interval) at (11,8) {interval};
\node[rectangle,draw] (dually) at (12.5,12) {dually chordal \textbf{\color{red}?}};
\node[rectangle,draw] (split) at (14.3,13) {split};
\node[rectangle,draw] (strong) at (12.5,11) {strongly chordal};
\node[rectangle,draw] (ktree) at (13,6) {$k$-tree, fixed $k$};
\node[rectangle,draw] (tree) at (2,5) {tree};

\draw[-] (perfect) -- (line) 
(perfect) -- (comp)
(perfect) -- (tolerance)
(perfect) -- (cocomp)
(perfect) -- (chordal)
(perfect) ..controls (6.5,11).. (disthere)
(cocomp) -- (atfree)
(cocomp) ..controls (6.7,11.5).. (ktrap)
(comp) -- (bipartite)
(comp) -- (circperm)
(bipartite) -- (chordalbi)
(chordalbi) -- (convex) 
(chordalbi) ..controls (-4,10).. (tree)
(circperm) -- (perm)
(perm) -- (permbi)
(convex) -- (permbi)
(perm) -- (boundtol)
(perm) -- (kpoly)
(perm) -- (boundtol)
(kpoly)-- (circle)
(disthere) -- (cograph)
(cograph) -- (threshold)
(threshold) -- (dilworth)
(threshold) -- (interval)
(circktrap) -- (ktrap) 
(ktrap) -- (trap)
(trap) -- (boundtol)
(circarc) -- (interval)
(chordal) ..controls (10.7,11.6).. (strong)
(dually) -- (strong)
(chordal) ..controls (14,14).. (split)
(strong) ..controls (13,10).. (interval)
(circarc) ..controls (10.7,11).. (circktrap)
(tolerance) ..controls (6.2,13).. (boundtol)
(boundtol) -- (interval)
(split) ..controls (13.6,7).. (threshold)
(tree) ..controls (2.5,8).. (disthere)
(tree) -- (ktree)
(chordal) ..controls (16,14).. (ktree);

\draw[dashed,blue,thick] (-3.5,12.5) -- (15.3,12.5) node[pos=.01,above] {$\mathsf{NP}$-c} node[pos=0,below] {$\mathsf{P}$};
\end{tikzpicture}
\caption{Computational complexity of \textsc{Semitotal Dominating Set}. The position of the graph classes whose complexity is unknown (red question mark) with respect to the boundary $\mathsf{P}$/$\mathsf{NP}$-c is completely arbitrary.} 
\label{fig:dia}
\end{figure}

Since $\gamma(G) \leq \gamma_{t2}(G) \leq \gamma_{t}(G)$, for any graph $G$ with no isolated vertex, it is natural to ask for which graphs equalities hold. In \Cref{sec:perfect}, we show that the graphs attaining either of the two equalities are unlikely to have a polynomial characterisation. Indeed, it is $\mathsf{NP}$-complete to recognise the graphs such that $\gamma_{t2}(G) = \gamma_{t}(G)$ and those such that $\gamma(G) = \gamma_{t2}(G)$, even if restricted to be planar and with maximum degree at most $4$. Therefore, it makes sense to study the class of graphs obtained by further requiring equality for every induced subgraph. Graphs for which the equality between certain domination parameters holds for every induced subgraph received considerable attention (see, e.g., \citep{ADR15,CP17,HH18,Sch12,Zve03}).
We provide forbidden induced subgraph characterisations for the graphs heriditarily satisfying $\gamma_{t2} = \gamma_{t}$ and $\gamma = \widetilde{\gamma}_{t2}$, where 
\begin{equation*}
  \widetilde{\gamma}_{t2}(G)=
     \begin{cases}
        1 & \text{if $G$ has a dominating vertex, i.e. a vertex adjacent to all the other vertices;} \\
        \gamma_{t2}(G) & \text{otherwise.}
     \end{cases}
\end{equation*}
This variation in the definition of $\gamma_{t2}$ is motivated by the fact that if $G$ has no isolated vertex and contains a dominating vertex, then $\gamma(G) = 1$ but $\gamma_{t2}(G) = 2$. Our characterisations have been independently obtained by \citet{HH18} but the proofs we provide have the nice feature of being considerably shorter.

\section{Preliminaries}\label{prel}

In this paper we consider only finite simple graphs. Given a graph $G$, we usually denote its vertex set by $V(G)$ and its edge set by $E(G)$. 

\vspace{0.3cm}

\textbf{Neighbourhoods and degrees.} For a vertex $v \in V(G)$, the \textit{neighbourhood} $N_{G}(v)$ is the set of vertices adjacent to $v$ in $G$. The \textit{degree} $d_{G}(v)$ of a vertex $v \in V(G)$ is the number of edges incident to $v$ in $G$. A \textit{$k$-vertex} is a vertex of degree $k$. We refer to a $3$-vertex as a \textit{cubic} vertex and to a $0$-vertex as an \textit{isolated} vertex. The \textit{maximum degree} $\Delta(G)$ of $G$ is the quantity $\max\left\{d_{G}(v): v \in V\right\}$ and $G$ is \textit{subcubic} if $\Delta(G) \leq 3$. If all the vertices of $G$ have the same degree $k$, then $G$ is \textit{$k$-regular} and a \textit{cubic} graph is a $3$-regular graph.  

Two subsets $X$ and $Y$ of $V(G)$ with $X \cap Y = \varnothing$ are \textit{complete} to each other if every vertex of $X$ is adjacent to every vertex of $Y$. 

\vspace{0.3cm}

\textbf{Paths and cycles.} A \textit{path} is a non-empty graph $P = (V, E)$ with $V = \left\{x_{0}, x_{1}, \dots, x_{k}\right\}$ and $E = \left\{x_{0}x_{1}, x_{1}x_{2}, \dots, x_{k-1}x_{k}\right\}$, and where the $x_{i}$'s are all distinct. The vertices $x_{0}$ and $x_{k}$ are \textit{linked} by $P$ and they are called the \textit{ends} of $P$. The \textit{length} of a path is its number of edges and the path on $n$ vertices is denoted by $P_{n}$. We refer to a path $P$ by a natural sequence of its vertices: $P = x_{0}x_{1}\cdots x_{k}$. Such a path $P$ is a path \textit{between $x_{0}$ and $x_{k}$}, or a \textit{$x_{0}, x_{k}$-path}. If $P = x_{0}\cdots x_{k}$ is a path and $k \geq 2$, the graph with vertex set $V(P)$ and edge set $E(P) \cup \left\{x_{k}x_{0}\right\}$ is a \textit{cycle}. The cycle on $n$ vertices is denoted by $C_{n}$.  

The \textit{distance} $d_{G}(u, v)$ from a vertex $u$ to a vertex $v$ in a graph $G$ is the length of a shortest path between $u$ and $v$. If $u$ and $v$ are not linked by any path in $G$, we set $d_{G}(u, v) = \infty$. Given two subsets $X$ and $Y$ of $V(G)$, the \textit{distance} from $X$ to $Y$ is the quantity $d_{G}(X, Y) = \min_{x \in X, y \in Y} d_{G}(x, y)$, i.e. it is the minimum length of a path between a vertex in $X$ and a vertex in $Y$. The \textit{disk centered at $z$ of radius $r$} is the set $N^r[z] = \{v \in V(G) : d_{G}(z, v) \leq r\}$. 

The \textit{girth} of a graph containing a cycle is the length of a shortest cycle and a graph with no cycle has infinite girth.

\vspace{0.3cm}

\textbf{Graph operations.} Given a graph $G = (V, E)$ and $V' \subseteq V$, the operation of \textit{deleting the set of vertices $V'$} from $G$ results in the graph $G - V' = G[V\setminus V']$. The \textit{union} of simple graphs $G$ and $H$ is the graph $G \cup H$ with vertex set $V(G) \cup V(H)$ and edge set $E(G) \cup E(H)$. If $G$ and $H$ are disjoint, their union is denoted by $G + H$ and the union of $k$ (disjoint) copies of $G$ is denoted by $kG$. A \textit{$k$-subdivision} of $G$ is the graph obtained from $G$ by adding $k$ new vertices for each edge of $G$, i.e. each edge is replaced by a path of length $k + 1$. 
     
\vspace{0.3cm}  

\textbf{Graph classes and special graphs.} If a graph does not contain induced subgraphs isomorphic to graphs in a set $Z$, it is \textit{$Z$-free} and the set of all $Z$-free graphs is denoted by $\mbox{\textit{Free}}(Z)$. A class of graphs is \textit{hereditary} if it is closed under deletions of vertices. It is well-known and easy to see that a class of graphs $X$ is hereditary if and only if it can be defined by a set of forbidden induced subgraphs, i.e. $X = \mbox{\textit{Free}}(Z)$ for some set of graphs $Z$. 

A \textit{complete graph} is a graph whose vertices are pairwise adjacent and the complete graph on $n$ vertices is denoted by $K_{n}$. A \textit{triangle} is the graph $K_{3}$. A graph $G$ is \textit{$r$-partite}, for $r \geq 2$, if its vertex set admits a partition into $r$ classes such that every edge has its endpoints in different classes. $2$-partite graphs are usually called \textit{bipartite}. A \textit{split graph} is a graph whose vertices can be partitioned into a clique and an independent set (see below for the definitions of clique and independent set). The minimal set of forbidden induced subgraphs for the class of split graphs is $\{2K_{2}, C_{4}, C_{5}\}$. A \textit{cograph} is defined recursively as follows: $K_{1}$ is a cograph, the disjoint union of cographs is a cograph, the complement of a cograph is a cograph. The class of cographs coincides with that of $P_{4}$-free graphs.

\vspace{0.3cm}
           
\textbf{Graph parameters.} A set of vertices or edges of a graph is \textit{minimum} with respect to the property $\mathcal{P}$ if it has minimum size among all subsets having property $\mathcal{P}$. The term \textit{maximum} is defined analogously. In this paper we often consider the following parameters of a graph $G$.

An \textit{independent set} of a graph is a set of pairwise non-adjacent vertices. A \textit{clique} of a graph is a set of pairwise adjacent vertices. A \textit{matching} of a graph is a set of pairwise non-incident edges. A \textit{vertex cover} of a graph is a subset of vertices containing at least one endpoint of every edge. The size of a minimum vertex cover of $G$ is denoted by $\beta(G)$. Clearly, $S \subseteq V(G)$ is a vertex cover of $G$ if and only if $V(G)\setminus S$ is an independent set of $G$. 

\vspace{0.3cm}

\textbf{Monadic second-order logic of graphs.} We refer to \citep{KLM09} for the definitions of tree-width and clique-width. Graphs of bounded tree-width are particularly interesting from an algorithmic point of view: many $\mathsf{NP}$-complete problems can be solved in linear time for them. A celebrated algorithmic meta-theorem of \citet{Cou90} provides a way to quickly establish that a certain problem is decidable in linear time on graphs of bounded tree-width: all (graph) problems expressible in monadic second-order logic with edge-set quantification are decidable in linear time on graphs of bounded tree-width, assuming a tree decomposition is given (see also \citep{ALS91}). Moreover, for a fixed $k$, it is in fact possible to test in linear time whether a graph has tree-width at most $k$ and, if so, to find a tree-decomposition with width at most $k$ \citep{Bod96}. 

Let us briefly recall that monadic second-order logic is an extension of first-order logic by quantification over sets. The language of \textit{monadic second-order logic of graphs} ($\mbox{MSO}_{1}$ for short) contains the expressions built from the following elements:
\begin{itemize}
\item Variables $x, y, \dots$ for vertices and $X, Y, \dots$ for sets of vertices;
\item Predicates $x \in X$ and $\textit{adj}(x, y)$;
\item Equality for variables, standard Boolean connectives and the quantifiers $\forall$ and $\exists$.
\end{itemize}
By considering edges and sets of edges as other sorts of variables and the incidence predicate $\textit{inc}(v, e)$, we obtain \textit{monadic second-order logic of graphs with edge-set quantification} ($\mbox{MSO}_{2}$ for short).

A notion related to tree-width is that of clique-width. As shown by \citet{CO00}, every graph of bounded tree-width has bounded clique-width but there are graphs of bounded clique-width having unbounded tree-width (for example, complete graphs). Therefore, clique-width can be viewed as a more general concept than tree-width. An important class of graphs having bounded clique-width is that of cographs: they have clique-width at most $2$. 

Similarly to tree-width, having bounded clique-width has interesting algorithmic implications. If a graph property is expressible in the more restricted $\mbox{MSO}_{1}$, then \citet{CMR00} showed that it is decidable in linear time even for graphs of bounded clique-width, assuming a $k$-expression of the graph is explicitly given. On the other hand, for fixed $k$, \citet{OS06} provided a polynomial-time algorithm that given a graph $G$ either decides $G$ has clique-width at least $k + 1$ or outputs a $2^{3k + 2} - 1$-expression. Therefore, a graph property expressible in $\mbox{MSO}_{1}$ is decidable in polynomial time for graphs of bounded clique-width. 

\vspace{0.3cm}

\textbf{Approximation hardness.} We refer the reader to \citep{Du12,Vaz01} for introductions to approximation hardness. Recall that an \textit{optimisation problem} $\Pi$ is a quadruple $(\mathcal{I}, S, c, \mbox{opt})$, where $\mathcal{I}$ is the set of instances of $\Pi$, $S(I)$ is the set of feasible solutions of an instance $I \in \mathcal{I}$, the function $c\colon \mathcal{I} \times S \rightarrow \mathbb{N}$ is the objective function and $\mbox{opt} \in \{\max, \min\}$. We denote by $\mbox{opt}(I)$ the value $\mbox{opt}\{c(I, s) : s \in S(I)\}$.

Let $\Pi = (\mathcal{I}, S, c, \min)$ and $\Pi' = (\mathcal{I}', S', c', \min)$ be two minimisation problems. We say that $\Pi$ \textit{L-reduces} to $\Pi'$ if there exists a polynomial-time function $f\colon \mathcal{I} \rightarrow \mathcal{I}'$ and positive constants $\alpha$ and $\beta$ such that for every instance $I$ of $\Pi$ the following hold:

\begin{itemize}
\item $\min(f(I)) \leq \alpha\cdot\min(I)$;
\item For every feasible solution $s' \in S'(f(I))$, we can find in polynomial time a feasible solution $s \in S(I)$ such that $|\min(I) - c(I, s)| \leq \beta\cdot|\min(f(I)) - c'(f(I), s')|$.     
\end{itemize}

Let $0 < \alpha < \beta$. A minimisation problem $\Pi = (\mathcal{I}, S, c, \min)$ has an \textit{$\mathsf{NP}$-hard gap} $[\alpha, \beta]$ if there exist an $\mathsf{NP}$-complete decision problem $\Lambda = (\mathcal{I}_{\Lambda}, S_{\Lambda})$ and a polynomial-time reduction $f$ from $\Lambda$ to $\Pi$ such that, for every $I \in \mathcal{I}_{\Lambda}$, the following hold:
\begin{itemize}
\item If $S_{\Lambda}(I) = \mbox{``yes''}$, then $\min(f(I)) \leq \alpha$;
\item If $S_{\Lambda}(I) = \mbox{``no''}$, then $\min(f(I)) > \beta$.
\end{itemize} 

\begin{lemma}[see \citep{Du12}]\label{gapdec} If $\Pi$ is an optimisation problem with an $\mathsf{NP}$-hard gap $[\alpha, \beta]$, for some $0 < \alpha < \beta$, then there is no $\frac{\beta}{\alpha}$-approximation algorithm for $\Pi$, unless $\mathsf{P} = \mathsf{NP}$.
\end{lemma}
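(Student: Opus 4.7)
The plan is a straightforward contradiction argument: assume a $(\beta/\alpha)$-approximation algorithm $A$ for $\Pi$ exists, and use it together with the reduction $f$ witnessing the $\mathsf{NP}$-hard gap to decide the $\mathsf{NP}$-complete problem $\Lambda$ in polynomial time.

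More concretely, given an instance $I \in \mathcal{I}_{\Lambda}$, I would first compute $f(I)$ in polynomial time (possible by definition of the reduction), and then run $A$ on $f(I)$ to obtain, in polynomial time, a feasible solution $s \in S(f(I))$ satisfying $c(f(I), s) \leq (\beta/\alpha) \cdot \min(f(I))$. The crucial step is then to observe that the value $c(f(I), s)$ cleanly separates the two cases of the gap condition. Indeed, if $S_{\Lambda}(I) = \text{``yes''}$, then $\min(f(I)) \leq \alpha$, and the approximation guarantee yields $c(f(I), s) \leq (\beta/\alpha) \cdot \alpha = \beta$. On the other hand, if $S_{\Lambda}(I) = \text{``no''}$, then every feasible solution $s'$ of $f(I)$ has value at least $\min(f(I)) > \beta$, and in particular $c(f(I), s) > \beta$.

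Hence the polynomial-time procedure which, on input $I$, computes $f(I)$, runs $A$ to get $s$, and answers ``yes'' if and only if $c(f(I), s) \leq \beta$, correctly decides $\Lambda$ in polynomial time. Since $\Lambda$ is $\mathsf{NP}$-complete, this forces $\mathsf{P} = \mathsf{NP}$, contradicting the hypothesis and thereby proving the claim.

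There is essentially no obstacle here beyond being careful with the direction of the inequalities; the argument is a textbook gap amplification reading, and the only subtle point is to notice that the approximation factor $\beta/\alpha$ is tuned precisely so that the approximate value on ``yes'' instances is bounded by $\beta$, matching the threshold that is strictly exceeded on ``no'' instances.
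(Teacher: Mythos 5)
Your argument is correct and is exactly the standard proof of this fact; the paper itself gives no proof, deferring to the cited reference, and the argument there is the same contradiction via the threshold test $c(f(I),s)\leq\beta$. The inequalities are handled in the right direction, so there is nothing to add.
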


A \textit{gap-preserving reduction} from a minimisation problem $\Pi = (\mathcal{I}_{\Pi}, S_{\Pi}, c_{\Pi}, \min)$ to a minimisation problem $\Lambda = (\mathcal{I}_{\Lambda}, S_{\Lambda}, c_{\Lambda}, \min)$ is a function $f$ mapping every instance of $\Pi$ to an instance of $\Lambda$ in polynomial time, together with constants $\alpha_{\Pi} \geq 1$ and $\alpha_{\Lambda} \geq 1$ and functions $g_{\Pi}$ and $g_{\Lambda}$ such that:
\begin{itemize}
\item If $\min(I) \leq g_{\Pi}(I)$, then $\min(f(I)) \leq g_{\Lambda}(f(I))$;
\item If $\min(I) > \alpha_{\Pi}g_{\Pi}(I)$, then $\min(f(I)) > \alpha_{\Lambda}g_{\Lambda}(f(I))$.
\end{itemize}  
Similarly to \Cref{gapdec}, the following holds:

\begin{lemma}[see \citep{Du12}]\label{gapred} Suppose there exists a \textit{gap-preserving reduction} from a minimisation problem $\Pi = (\mathcal{I}_{\Pi}, S_{\Pi}, c_{\Pi}, \min)$ to a minimisation problem $\Lambda = (\mathcal{I}_{\Lambda}, S_{\Lambda}, c_{\Lambda}, \min)$. If it is $\mathsf{NP}$-hard to distinguish between those $I \in \mathcal{I}_{\Pi}$ for which $\min(I) \leq g_{\Pi}(I)$ and those for which $\min(I) > \alpha_{\Pi}g_{\Pi}(I)$, then it is $\mathsf{NP}$-hard to distinguish between those $f(I) \in \mathcal{I}_{\Lambda}$ for which $\min(f(I)) \leq g_{\Lambda}(f(I))$ and those for which $\min(f(I)) > \alpha_{\Lambda}g_{\Lambda}(f(I))$. In particular, $\Lambda$ is not approximable within $\alpha_{\Lambda}$, unless $\mathsf{P} = \mathsf{NP}$.
\end{lemma}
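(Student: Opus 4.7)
The plan is to prove the lemma in two stages: first transfer the promise via the reduction $f$, then deduce the inapproximability as a corollary. The core observation is that the two defining inequalities of a gap-preserving reduction are precisely what is needed to move a hardness gap from $\Pi$ to $\Lambda$ without loss, so the proof is essentially a direct chase through these inequalities.

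For the distinguishability part, I would argue by contradiction. Suppose there is a polynomial-time algorithm $A$ that, under the promise, correctly decides on input $J \in \mathcal{I}_{\Lambda}$ whether $\min(J) \leq g_{\Lambda}(J)$ or $\min(J) > \alpha_{\Lambda} g_{\Lambda}(J)$. Composing $A$ with $f$ yields a polynomial-time algorithm $B$ for $\Pi$: on input $I \in \mathcal{I}_{\Pi}$, compute $f(I)$ and return $A(f(I))$. The two bullets in the definition of gap-preserving reduction guarantee that if $\min(I) \leq g_{\Pi}(I)$ then $\min(f(I)) \leq g_{\Lambda}(f(I))$, so $A$ outputs "small"; and if $\min(I) > \alpha_{\Pi} g_{\Pi}(I)$ then $\min(f(I)) > \alpha_{\Lambda} g_{\Lambda}(f(I))$, so $A$ outputs "large". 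Hence $B$ resolves the $\mathsf{NP}$-hard promise problem for $\Pi$ in polynomial time, which is the desired contradiction.

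For the inapproximability consequence, suppose towards a contradiction that $\Lambda$ admits a polynomial-time $\alpha_{\Lambda}$-approximation algorithm $\mathcal{A}$, producing for every $J \in \mathcal{I}_{\Lambda}$ a feasible solution $s' \in S_{\Lambda}(J)$ with $c_{\Lambda}(J, s') \leq \alpha_{\Lambda}\min(J)$. Given $J = f(I)$ under the promise, one tests in polynomial time whether $c_{\Lambda}(J, s') \leq \alpha_{\Lambda} g_{\Lambda}(J)$: if $\min(J) \leq g_{\Lambda}(J)$ the inequality holds by the approximation guarantee, whereas if $\min(J) > \alpha_{\Lambda} g_{\Lambda}(J)$ then $c_{\Lambda}(J, s') \geq \min(J) > \alpha_{\Lambda} g_{\Lambda}(J)$ and the test fails. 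Hence $\mathcal{A}$ provides a distinguisher, contradicting the first part of the lemma (and so $\mathsf{P} \neq \mathsf{NP}$).

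There is essentially no substantive obstacle; the only mild subtlety is the implicit assumption that $g_{\Lambda}$ is polynomially computable on instances of $\Lambda$, which is part of the standard definition of gap-preserving reductions in \citep{Du12} and is needed to make the threshold test $c_{\Lambda}(J, s') \leq \alpha_{\Lambda} g_{\Lambda}(J)$ efficient.
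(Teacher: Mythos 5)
Your proof is correct. Note that the paper itself gives no proof of this lemma: it is stated as a known result imported from \citep{Du12} (just like \Cref{gapdec}), so there is nothing internal to compare against. Your argument is the standard one — compose a hypothetical distinguisher for $\Lambda$ with $f$ and use the two defining implications of a gap-preserving reduction to transfer the promise, then observe that an $\alpha_{\Lambda}$-approximation algorithm yields such a distinguisher via the threshold test $c_{\Lambda}(J, s') \leq \alpha_{\Lambda} g_{\Lambda}(J)$ — and both steps are carried out correctly; your remark that $g_{\Lambda}$ must be efficiently computable on the produced instances is the right caveat and is indeed part of the standard setting.
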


\section{Approximation hardness}\label{secapp}

\citet{HP17} showed that \textsc{Semitotal Dominating Set} has the same approximation hardness as the well-known \textsc{Set Cover}: \textsc{Semitotal Dominating Set} is not approximable within $(1 - \varepsilon)\ln n$, for any $\varepsilon > 0$, unless $\mathsf{NP} \subseteq \mathsf{DTIME}(n^{O(\log\log n)})$. Note that the same inapproximability result holds for \textsc{Dominating Set}, \textsc{Total Dominating Set} and \textsc{Connected Dominating Set} \citep{CC08}. On the other hand, using the natural greedy algorithm for \textsc{Set Cover}, \citet{HP17} showed that \textsc{Semitotal Dominating Set} is in $\mathsf{APX}$ for graphs with bounded degree: it admits a $2 + 3 \ln(\Delta + 1)$ approximation algorithm for graphs with maximum degree $\Delta$. Moreover, they showed that it is \textsc{APX}-complete for bipartite graphs with maximum degree $4$ and asked whether the same holds for subcubic graphs.

In this section, we answer their question in the affirmative and provide approximation lower bounds for \textsc{Semitotal Dominating Set} when restricted to some subclasses of subcubic graphs. We begin by showing the following inapproximability result:  

\begin{theorem}\label{inapprox} \textsc{Semitotal Dominating Set} is not approximable within $1.00013956$, unless $\mathsf{P} = \mathsf{NP}$, even when restricted to subcubic line graphs of bipartite graphs.
\end{theorem}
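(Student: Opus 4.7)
The plan is to establish a gap-preserving (equivalently, an L-) reduction from a source problem with known APX-hardness and a precise inapproximability threshold to \textsc{Semitotal Dominating Set} on subcubic line graphs of bipartite graphs. The natural candidate for the source problem is \textsc{Minimum Edge Dominating Set} on bipartite graphs of bounded degree (equivalently, \textsc{Minimum Maximal Matching}), or \textsc{Minimum Vertex Cover} in $3$-regular graphs, both of which admit explicit hardness factors obtained through PCP-based reductions.

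The starting observation is the folklore correspondence that, for any graph $G$, the edge dominating sets of $G$ are exactly the (vertex) dominating sets of the line graph $L(G)$, so that $\gamma(L(G))$ equals the minimum edge domination number of $G$. The subtlety is the semitotal witness condition: two edges $e,f$ of $G$ are at $L(G)$-distance at most two precisely when they share an endpoint, or some third edge of $G$ has one endpoint in $e$ and another in $f$. Hence designing the reduction amounts to producing a bipartite graph $H$ in which (i) every edge has endpoints whose degrees sum to at most $5$, forcing $L(H)$ to be subcubic, and (ii) the semitotal witness condition in $L(H)$ is ``free'', i.e.\ any reasonable edge dominating set of $H$ already satisfies it.

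The construction I would use takes an instance of the source problem and attaches to each edge a small pendant gadget of constant size. The gadget must simultaneously (a) tame the degrees so that $L(H)$ is subcubic, (b) force every minimum edge dominating set of $H$ to pick certain ``gadget edges'' that serve as witnesses for every selected edge, and (c) translate, in both directions, approximate solutions between $H$ and the source instance while losing at most a constant additive amount per gadget. The net effect is that $\gamma_{t2}(L(H))$ is an affine function of the optimum of the source instance, and the forward/backward translations are computable in polynomial time. Using \Cref{gapred} (or directly the L-reduction machinery preceding \Cref{gapdec}), the known hardness gap transfers to \textsc{Semitotal Dominating Set} on the resulting subcubic line graphs of bipartite graphs.

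The main obstacle is the quantitative accounting. To turn the general APX-hardness into the specific constant $1.00013956$, one has to (i) start from a source problem with an explicit, sharp hardness threshold, and (ii) bound very carefully the L-reduction constants $\alpha$ (controlling $\mathrm{opt}(H)\le\alpha\cdot\mathrm{opt}(I)$) and $\beta$ (controlling how closely a feasible solution on $H$ lifts back to $I$), since the resulting inapproximability ratio behaves like $1+(r-1)/(\alpha\beta)$ where $r$ is the source threshold. Keeping every inequality tight — in particular, showing that the gadgets only contribute a predictable additive cost to any near-optimal semi-TD-set, and that no ``cheating'' semi-TD-set of $L(H)$ can exploit the gadget structure to save edges — is the delicate part of the argument; the rest is a direct application of \Cref{gapdec}.
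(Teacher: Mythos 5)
Your proposal is a strategy outline rather than a proof, and the gap is concentrated in exactly the two places you flag as ``delicate'' but then leave unresolved.

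First, the constant $1.00013956$ is not a generic APX-hardness constant that any careful L-reduction would produce: it equals $5.28792798/5.28719008$, and it arises specifically because the paper reduces from \textsc{Dominating Set} on \emph{subcubic graphs}, for which Chleb\'{\i}k and Chleb\'{\i}kov\'a proved it is $\mathsf{NP}$-hard to distinguish $\gamma(G) < 0.28719008n$ from $\gamma(G) > 0.28792798n$, and because the paper's vertex gadget yields the exact identity $\gamma_{t2}(G') = \gamma(G) + 5n$ (so the gap shifts additively by $5n$ and the ratio becomes $5.28792798/5.28719008$). Your proposed sources --- \textsc{Minimum Maximal Matching} on bounded-degree bipartite graphs or \textsc{Vertex Cover} on cubic graphs --- come with their own, different explicit thresholds, and composing them with an unspecified gadget of unspecified additive overhead cannot be expected to land on $1.00013956$. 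Without naming the source gap and computing the exact per-gadget cost, the stated bound is out of reach.

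Second, the entire technical content of the theorem is the gadget and the two-sided counting, and you supply neither. The forward direction needs an explicit set of gadget vertices realising $\gamma_{t2}(G') \leq \gamma(G) + 5n$; the backward direction needs a structural claim of the form ``every semi-TD-set contains at least $5$ vertices of each gadget, with equality only if it avoids the three boundary vertices,'' which is what lets one read off a dominating set of the source instance from a semi-TD-set of the target and is where the semitotal witness condition actually has to be fought with (witnesses can sit at distance two, possibly in a neighbouring gadget, so ``any reasonable edge dominating set already satisfies the witness condition for free'' is precisely the kind of claim that requires a case analysis, not an assertion). Listing the properties (a)--(c) that a gadget \emph{must} have is a specification of the proof obligation, not a discharge of it. As a minor architectural remark, the paper replaces \emph{vertices} of a subcubic \textsc{Dominating Set} instance by gadgets and then verifies that the result is the line graph of a bipartite graph obtained from a $1$-subdivision; your edge-pendant construction on a bipartite $H$ is a different architecture that might be made to work, but nothing in the proposal shows that it does.
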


\begin{proof} We construct a gap-preserving reduction from \textsc{Dominating Set} restricted to subcubic graphs. Let $G$ be an instance of this problem. Clearly, we may assume that $G$ is a subcubic graph of order $n$ without isolated vertices. We build a graph $G'$ by replacing each vertex $v$ of $G$ with the gadget $G_{v}$ depicted in \Cref{fig:approx} and by identifying edges incident to different gadgets and corresponding to the same $e_{i} \in E(G)$ (vertices of degree $1$ or $2$ are treated similarly). This means that the contraction of all the edges in the gadgets in $G'$ results in the graph $G$. Note that $V(G_{v}) = V_{1} \cup V_{2} \cup V_{3}$, where each $V_{i} = \{v_{i}, u_{i}, w_{i}, a_{i}, b_{i}, c_{i}\}$ naturally corresponds to the edge $e_{i}$ incident to $v$. Observe also that $G'$ is subcubic and is in fact the line graph of the graph $G''$ obtained first by a $1$-subdivision of $G$ and then by replacing each vertex $v$ of $G$ with the gadget depicted in \Cref{fig:line} and identifying the $d(v)$ edges incident to $v$ with $d(v)$ edges incident to the gadget.   
 
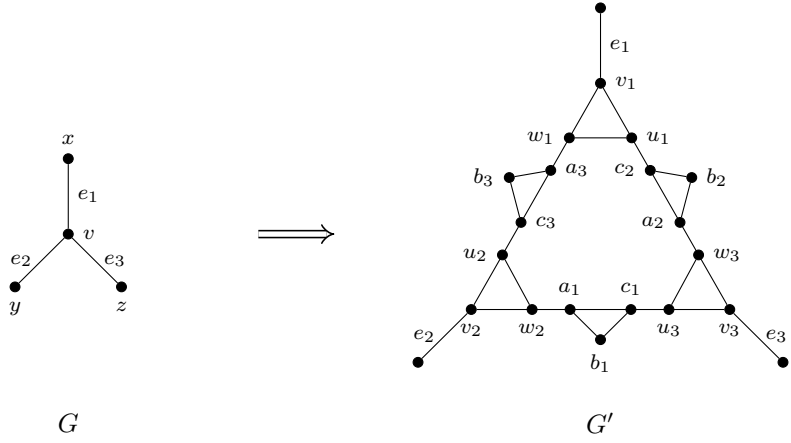
\begin{figure}[h!]
\centering
\begin{tikzpicture}
\node[circ,label=right:{\footnotesize $v$}] (v) at (-1,0) {};
\node[circ,label=below:{\footnotesize $y$}] (y) at (-1.7,-.7) {};
\node[circ,label=below:{\footnotesize $z$}] (z) at (-.3,-.7) {};
\node[circ,label=above:{\footnotesize $x$}] (x) at (-1,1) {};
\draw (v) -- (x) node[midway,right] {\footnotesize $e_1$};
\draw (v) -- (y) node[midway,left] {\footnotesize $e_2$};
\draw (v) -- (z) node[midway,right] {\footnotesize $e_3$};

\node[draw=none] at (-1,-2.5) {$G$};

\draw[-Implies,line width=.6pt,double distance=2pt] (1.5,0) -- (2.5,0);

\node[circ,label=right:{\footnotesize $v_1$}] (v1) at (6,2) {};
\node[circ,label=below:{\footnotesize $v_2$}] (v2) at (4.3,-1) {};
\node[circ,label=below:{\footnotesize $v_3$}] (v3) at (7.7,-1) {}; 
\node[circ,label=below:{\footnotesize $w_2$}] (w2) at (5.1,-1) {};
\node[circ,label=above:{\footnotesize $a_1$}] (a1) at (5.6,-1) {};
\node[circ,label=above:{\footnotesize $c_1$}] (c1) at (6.4,-1) {};
\node[circ,label=below:{\footnotesize $u_3$}] (u3) at (6.9,-1) {};
\node[circ,label=below:{\footnotesize $b_1$}] (b1) at (6,-1.4) {};
\draw (v2) -- (v3)
(a1) -- (b1)
(b1) -- (c1);

\draw (v2) -- (v1) node[circ,label=left:{\footnotesize $u_2$},pos=.23] (u2) {}
node[circ,label=right:{\footnotesize $c_3$},pos=.38] (c3) {}
node[circ,label=right:{\footnotesize $a_3$},pos=.62] (a3) {}
node[circ,label=left:{\footnotesize $w_1$},pos=.77] (w1) {};

\draw (v3) -- (v1) node[circ,label=right:{\footnotesize $w_3$},pos=.23] (w3) {}
node[circ,label=left:{\footnotesize $a_2$},pos=.38] (a2) {}
node[circ,label=left:{\footnotesize $c_2$},pos=.62] (c2) {}
node[circ,label=right:{\footnotesize $u_1$},pos=.77] (u1) {};

\node[circ,label=right:{\footnotesize $b_2$}] (b2) at (7.2,.75) {};
\draw (a2) -- (b2)
(b2) -- (c2);

\node[circ,label=left:{\footnotesize $b_3$}] (b3) at (4.8,.75) {};
\draw (a3) -- (b3)
(b3) -- (c3);

\draw (u1) -- (w1) 
(u2) -- (w2)
(u3) -- (w3);

\node[circ] (e1) at (6,3) {};
\draw (v1) -- (e1) node[midway,right] {\footnotesize $e_1$};
\node[circ] (e2) at (3.6,-1.7) {};
\draw (v2) -- (e2) node[midway,left] {\footnotesize $e_2$};
\node[circ] (e3) at (8.4,-1.7) {};
\draw (v3) -- (e3) node[midway,right] {\footnotesize $e_3$};

\node[draw=none] at (6,-2.5) {$G'$};
\end{tikzpicture}
\caption{The gadget $G_{v}$ in $G'$ replacing the cubic vertex $v$ of $G$.}
\label{fig:approx}
\end{figure}

\begin{center}
\begin{figure}[h]
\centering
\begin{tikzpicture}
\node[circ] (v1) at (6,2) {};
\node[circ] (v2) at (4.3,-1) {};
\node[circ] (v3) at (7.7,-1) {};

\draw (v1) -- (v2) node[pos=.25,circ] {}
node[pos=.5,circ] (a1) {}
node[pos=.75,circ] {};

\draw (v1) -- (v3) node[pos=.25,circ] {}
node[pos=.5,circ] (a2) {}
node[pos=.75,circ] {};

\draw (v2) -- (v3) node[pos=.25,circ] {}
node[pos=.5,circ] (a3) {}
node[pos=.75,circ] {};

\node[circ] (e1) at (6,3) {};
\draw (v1) -- (e1); 
\node[circ] (e2) at (3.6,-1.7) {};
\draw (v2) -- (e2);
\node[circ] (e3) at (8.4,-1.7) {};
\draw (v3) -- (e3);

\node[circ] (b1) at (6,-1.4) {};
\draw (b1) -- (a3);

\node[circ] (b2) at (7.2,.75) {};
\draw (b2) -- (a2);

\node[circ] (b3) at (4.8,.75) {};
\draw (b3) -- (a1);

\end{tikzpicture}
\caption{The gadget in $G''$ replacing each vertex $v$ of $G$.} 
\label{fig:line}
\end{figure}
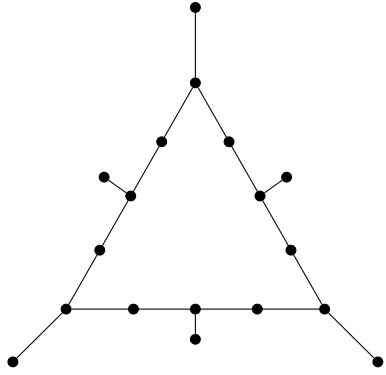
\end{center}

We claim that $\gamma_{t2}(G') = \gamma(G) + 5n$. Suppose first that $D$ is a minimum dominating set of $G$. We construct a semi-TD-set $D'$ of $G'$ as follows. For each vertex $v \in D$, we select the six vertices of $G_{v}$ depicted in light gray in \Cref{fig:dom}, i.e. we add $D_{v} = \{v_{1}, v_{2}, v_{3}, c_{1}, c_{2}, c_{3}\}$ to $D'$. Observe that each vertex of $G_{v}$ is dominated by one vertex in $D_{v}$ and each vertex in $D_{v}$ is within distance two of another vertex of $D_v$. For each vertex $v \notin D$, there exists a vertex $x \in D$ adjacent to $v$, say $e_{1} = xv$. In the previous step, we have selected all the vertices $x_{i}$ of $G_{x}$ corresponding to $x$ and so, by further selecting the five vertices of $G_{v}$ depicted in light gray in \Cref{fig:notdom}, we obtain a subset dominating each vertex of $G_{v}$. Moreover, each newly selected vertex is within distance two of another newly selected vertex. This implies that $D'$ is a semi-TD-set of size $\gamma(G) + 5n$ and so $\gamma_{t2}(G') \leq \gamma(G) + 5n$. 

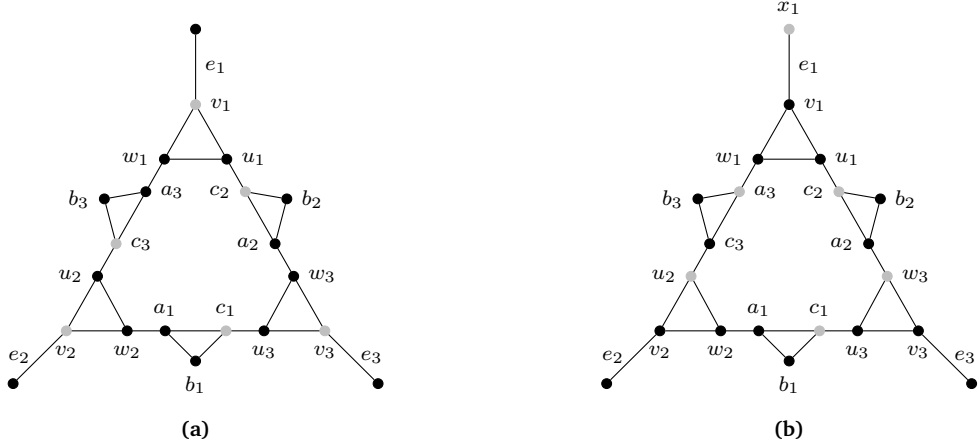
\begin{figure}[h!]
\centering
\hspace*{\fill}
\begin{subfigure}[b]{0.45\textwidth}
\centering
\begin{tikzpicture}
\node[circg,label=right:{\footnotesize $v_1$}] (v1) at (6,2) {};
\node[circg,label=below:{\footnotesize $v_2$}] (v2) at (4.3,-1) {};
\node[circg,label=below:{\footnotesize $v_3$}] (v3) at (7.7,-1) {}; 
\node[circ,label=below:{\footnotesize $w_2$}] (w2) at (5.1,-1) {};
\node[circ,label=above:{\footnotesize $a_1$}] (a1) at (5.6,-1) {};
\node[circg,label=above:{\footnotesize $c_1$}] (c1) at (6.4,-1) {};
\node[circ,label=below:{\footnotesize $u_3$}] (u3) at (6.9,-1) {};
\node[circ,label=below:{\footnotesize $b_1$}] (b1) at (6,-1.4) {};
\draw (v2) -- (c1)
(c1) -- (v3)
(a1) -- (b1)
(b1) -- (c1);

\draw (v2) -- (v1) node[circ,label=left:{\footnotesize $u_2$},pos=.23] (u2) {}
node[circg,label=right:{\footnotesize $c_3$},pos=.38] (c3) {}
node[circ,label=right:{\footnotesize $a_3$},pos=.62] (a3) {}
node[circ,label=left:{\footnotesize $w_1$},pos=.77] (w1) {};

\draw (v3) -- (v1) node[circ,label=right:{\footnotesize $w_3$},pos=.23] (w3) {}
node[circ,label=left:{\footnotesize $a_2$},pos=.38] (a2) {}
node[circg,label=left:{\footnotesize $c_2$},pos=.62] (c2) {}
node[circ,label=right:{\footnotesize $u_1$},pos=.77] (u1) {};

\node[circ,label=right:{\footnotesize $b_2$}] (b2) at (7.2,.75) {};
\draw (a2) -- (b2)
(b2) -- (c2);

\node[circ,label=left:{\footnotesize $b_3$}] (b3) at (4.8,.75) {};
\draw (a3) -- (b3)
(b3) -- (c3);

\draw (u1) -- (w1) 
(u2) -- (w2)
(u3) -- (w3);

\node[circ] (e1) at (6,3) {};
\draw (v1) -- (e1) node[midway,right] {\footnotesize $e_1$};
\node[circ] (e2) at (3.6,-1.7) {};
\draw (v2) -- (e2) node[midway,left] {\footnotesize $e_2$};
\node[circ] (e3) at (8.4,-1.7) {};
\draw (v3) -- (e3) node[midway,right] {\footnotesize $e_3$};
\end{tikzpicture}
\caption{}
\label{fig:dom}
\end{subfigure}
\hfill
\begin{subfigure}[b]{0.45\textwidth}
\centering
\begin{tikzpicture}
\node[circ,label=right:{\footnotesize $v_1$},] (v1) at (6,2) {};
\node[circ,label=below:{\footnotesize $v_2$}] (v2) at (4.3,-1) {};
\node[circ,label=below:{\footnotesize $v_3$}] (v3) at (7.7,-1) {}; 
\node[circ,label=below:{\footnotesize $w_2$}] (w2) at (5.1,-1) {};
\node[circ,label=above:{\footnotesize $a_1$}] (a1) at (5.6,-1) {};
\node[circg,label=above:{\footnotesize $c_1$}] (c1) at (6.4,-1) {};
\node[circ,label=below:{\footnotesize $u_3$}] (u3) at (6.9,-1) {};
\node[circ,label=below:{\footnotesize $b_1$}] (b1) at (6,-1.4) {};
\draw (v2) -- (c1)
(c1) -- (v3)
(a1) -- (b1)
(b1) -- (c1);

\draw (v2) -- (v1) node[circg,label=left:{\footnotesize $u_2$},pos=.23] (u2) {}
node[circ,label=right:{\footnotesize $c_3$},pos=.38] (c3) {}
node[circg,label=right:{\footnotesize $a_3$},pos=.62] (a3) {}
node[circ,label=left:{\footnotesize $w_1$},pos=.77] (w1) {};

\draw (v3) -- (v1) node[circg,label=right:{\footnotesize $w_3$},pos=.23] (w3) {}
node[circ,label=left:{\footnotesize $a_2$},pos=.38] (a2) {}
node[circg,label=left:{\footnotesize $c_2$},pos=.62] (c2) {}
node[circ,label=right:{\footnotesize $u_1$},pos=.77] (u1) {};

\node[circ,label=right:{\footnotesize $b_2$}] (b2) at (7.2,.75) {};
\draw (a2) -- (b2)
(b2) -- (c2);

\node[circ,label=left:{\footnotesize $b_3$}] (b3) at (4.8,.75) {};
\draw (a3) -- (b3)
(b3) -- (c3);

\draw (u1) -- (w1) 
(u2) -- (w2)
(u3) -- (w3);

\node[circg,label=above:{\footnotesize $x_1$}] (e1) at (6,3) {};
\draw (v1) -- (e1) node[midway,right] {\footnotesize $e_1$};
\node[circ] (e2) at (3.6,-1.7) {};
\draw (v2) -- (e2) node[midway,left] {\footnotesize $e_2$};
\node[circ] (e3) at (8.4,-1.7) {};
\draw (v3) -- (e3) node[midway,right] {\footnotesize $e_3$};
\end{tikzpicture}
\caption{}
\label{fig:notdom}
\end{subfigure}
\hspace*{\fill}
\caption{Construction of a semi-TD-set of $G'$ from a dominating set of $G$.}
\end{figure} 

Conversely, let $D'$ be a minimum semi-TD-set of $G'$. We first observe the following.

\begin{observation}
\label{obs:si}
For each $i \in \{1,2,3\}$, at least one vertex of $S_i = \{a_i,b_i,c_i\}$ belongs to $D'$.
\end{observation}

We next prove the following claim:

\begin{claim}\label{claimcount}
$|D' \cap V(G_{v})| \geq 5$, for each $v \in V(G)$. Moreover, equality holds only if $D' \cap \{v_{1}, v_{2}, v_{3}\} = \varnothing$. 
\end{claim}

By Observation \ref{obs:si}, $|D' \cap S_i| \geq 1$ for each $i \in \{1,2,3\}$. Since for any $j \neq i$, $d_{G'}(S_{i}, S_{j}) \geq 3$, it is easy to see that $D'$ must further contain at least two vertices of $V(G_{v})$. Therefore, we have $|D' \cap V(G_{v})| \geq 5$. 

We now show that if $D'$ contains a vertex in $\{v_{1}, v_{2}, v_{3}\}$, then $|D' \cap V(G_{v})| \geq 6$. Suppose first that $D' \cap \{v_i,w_i,u_i\} \neq \varnothing$, for each $i \in \{1,2,3\}$. Observation \ref{obs:si} immediately implies that $|D' \cap V(G_{v})| \geq 3 + 3 = 6$. Suppose finally that there exists $i \in \{1,2,3\}$ such that $D' \cap \{v_i,w_i,u_i\} = \varnothing$. We then have that $\{a_{i-1}, c_{i+1}\} \subseteq D'$ (indices are taken modulo $3$), as $w_i$ and $u_i$ are dominated. But then the witness of $a_{i-1}$ and that of $c_{i+1}$ are distinct (as $D' \cap \{v_i,w_i,u_i\} = \varnothing$) and different from $v_{i-1}$, $v_{i+1}$ and any vertex of $S_i$. Since by Observation~\ref{obs:si} at least one vertex in $S_i$ belongs to $D'$ and, by assumption, $|D' \cap \{v_{1}, v_{2}, v_{3}\}| \geq 1$, it then follows that $|D' \cap V(G_{v})| \geq 2 + 2 + 1 + 1 = 6$. $\lozenge$

\smallskip

Denoting by $t$ the number of gadgets $G_{v}$ of $G'$ such that $|D' \cap V(G_{v})| \geq 6$, \Cref{claimcount} implies that $|D'| \geq 6t + 5(n - t) = t + 5n$. Let now $D$ be the set of vertices $v$ of $G$ corresponding to those gadgets $G_{v}$ in $G'$ such that $|D' \cap V(G_{v})| \geq 6$. We claim that $D$ is a dominating set of $G$. Indeed, if $v \notin D$, then $|D' \cap V(G_{v})| = 5$ and \Cref{claimcount} implies that $D' \cap \{v_{1}, v_{2}, v_{3}\} = \varnothing$. Therefore, at least one vertex of $\{v_{1}, v_{2}, v_{3}\}$ is dominated by a vertex in a different gadget and since this vertex is of the form $x_{i}$, for some $x$ adjacent to $v$, we have that $|D' \cap V(G_{x})| \geq 6$ and so $x \in D$. Finally, it is enough to notice that $\gamma(G) \leq |D| = t \leq |D'| - 5n = \gamma_{t2}(G') - 5n$ which proves our initial claim.      

For a given subcubic graph $G$ of order $n$, \citet{CC08} showed that it is $\mathsf{NP}$-hard to decide whether $\gamma(G) > 0.28792798n$ or $\gamma(G) < 0.28719008n$ and so it is $\mathsf{NP}$-hard to decide whether $\gamma_{t2}(G') > 5.28792798n$ or $\gamma_{t2}(G') < 5.28719008n$. Therefore, by \Cref{gapred}, there is no $1.00013956$-approximation algorithm for \textsc{Semitotal Dominating Set}, unless $\mathsf{P} = \mathsf{NP}$, even when restricted to subcubic line graphs of bipartite graphs.
\end{proof}

By slightly modifying the gadget used in the previous proof, we now show that \textsc{Semitotal Dominating Set} is $\mathsf{APX}$-complete even when restricted to cubic graphs.

\setcounter{claim}{0}

\begin{theorem}\label{semicubic} \textsc{Semitotal Dominating Set} is $\mathsf{APX}$-complete when restricted to cubic graphs.
\end{theorem}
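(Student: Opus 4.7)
Membership in $\mathsf{APX}$ is immediate from \citet{HP17}, whose $2+3\ln(\Delta+1)$-approximation algorithm yields a constant-factor approximation for cubic graphs. The plan for $\mathsf{APX}$-hardness is to construct an L-reduction from \textsc{Dominating Set} restricted to cubic graphs, a problem known to be $\mathsf{APX}$-hard.

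We recycle the construction of $G'$ from the proof of \Cref{inapprox} but modify the gadget so that the resulting graph becomes cubic. Inspecting \Cref{fig:approx}, the only vertices of $G_v$ of degree other than $3$ are $b_1, b_2, b_3$, which have degree $2$; moreover, if $G$ itself is cubic, every remaining vertex of $G'$ is automatically of degree $3$. The planned repair is to add to each gadget $G_v$ a single new vertex $d_v$ joined to $b_1, b_2, b_3$: after this modification each $b_i$ and $d_v$ have degree exactly $3$, so $G'$ is cubic.

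The technical heart of the proof is to re-establish an identity of the form $\gamma_{t2}(G') = \gamma(G) + Kn$ for the modified gadget, where $K$ is a small constant (we anticipate $K=6$). The upper bound follows by imitating the construction in the proof of \Cref{inapprox} starting from a minimum dominating set $D$ of $G$, with the addition of the vertex $d_v$ in every gadget; the semitotal condition for $d_v$ is automatically met since each $c_i^v$ lies at distance $2$ from $d_v$ inside the gadget. The lower bound requires adapting \Cref{claimcount}, and this is the main obstacle: the presence of $d_v$ brings the three triangles $S_i = \{a_i, b_i, c_i\}$ within distance $2$ of one another, so the previous forcing that $|D' \cap S_i| \geq 1$ for each $i$ can fail when $d_v \in D'$.

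The plan to overcome this is to observe that if $d_v \in D'$, it must have a witness at distance at most $2$ inside the gadget, and its radius-$2$ disk is confined to $\{d_v\} \cup S_1 \cup S_2 \cup S_3$; hence any saving obtained by letting $d_v$ simultaneously dominate all three $b_i$ is offset by the extra witness plus the cost of $d_v$ itself. The remainder of the argument, including the distance-based counting of witnesses along the three paths joining the $v_i$, then proceeds as in the proof of \Cref{claimcount}, yielding $|D' \cap V(G_v)| \geq K$ with equality only if $D' \cap \{v_1, v_2, v_3\} = \varnothing$. Once the identity $\gamma_{t2}(G') = \gamma(G) + Kn$ is available, the L-reduction conditions follow at once: since $\gamma(G) \geq n/4$ for a cubic graph $G$, we obtain $\gamma_{t2}(G') \leq (1 + 4K)\gamma(G)$, and given any semi-TD-set $D'$ of $G'$, selecting $v \in D$ precisely when $|D' \cap V(G_v)|$ exceeds $K$ produces a dominating set of $G$ with $|D| \leq |D'| - Kn$, giving the second L-reduction constant.
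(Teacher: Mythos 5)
The degree repair is fine, but the counting lemma at the heart of your reduction is false for your gadget, and this is a genuine gap. Adding a single vertex $d_v$ adjacent to $b_1,b_2,b_3$ destroys the two properties that drive the lower bound in \Cref{claimcount}: first, $b_i$ can now be dominated by $d_v$ alone, so the forcing $|D' \cap S_i| \geq 1$ for each $i$ disappears; second, $d(S_i,S_j)$ drops from at least $3$ to $2$ (via $b_i d_v b_j$), so vertices chosen in different $S_i$'s can serve as witnesses for one another. Concretely, the five vertices $\{d_v, a_1, w_3, u_1, c_3\}$ dominate every vertex of $V(G_v)\cup\{d_v\}$ except $v_2$: the twelve vertices $\{a_i,b_i\ldots\}\setminus\{v_i\}$ other than the $b_i$'s form a $12$-cycle $w_2 a_1 c_1 u_3 w_3 a_2 c_2 u_1 w_1 a_3 c_3 u_2$, which $\{a_1,w_3,u_1,c_3\}$ dominates, while $d_v$ dominates $b_1,b_2,b_3$; moreover $a_1$ and $c_3$ are witnesses for $d_v$ (and vice versa) via $b_1$ and $b_3$. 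The only outstanding obligations ($v_2$ dominated, and witnesses for $w_3$ and $u_1$) can be discharged by the neighbouring gadgets through $v_1,v_2,v_3$. So $|D'\cap V(G_v)|\geq 6$ fails, your anticipated identity $\gamma_{t2}(G') = \gamma(G) + 6n$ is not established, and the extraction rule ``put $v$ in $D$ iff the gadget carries more than $K$ vertices'' no longer yields a dominating set of $G$ (one can arrange gadgets with exactly $6$ vertices, namely $\{v_1,v_2,v_3,b_1,b_2,b_3\}$, all of whose neighbours carry only $5$). Your offset heuristic --- that the witness for $d_v$ plus $d_v$ itself pays for the saving --- fails precisely because that witness can be a vertex such as $a_1$ that is already doing domination work on the $12$-cycle.

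The paper avoids this by never letting one new vertex see all three $b_i$: each $b_i$ receives its own private attachment (a new neighbour plus two copies of a small $6$-vertex block), which keeps the three sets $N[b_i]$ pairwise at distance at least $3$ so that the original separation argument survives, and each block independently forces two vertices of $D'$ into it, yielding the per-gadget count of $17$ (or $18$ when some $v_i$ is used). If you want to keep a lightweight patch, you must either restore the pairwise distance between the $N[b_i]$'s or replace the local counting claim by a genuinely global argument; as written, the proof does not go through.
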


\begin{proof} \citet{HP17} showed that \textsc{Semitotal Dominating Set} restricted to graphs with bounded degree belongs to $\mathsf{APX}$. It is therefore enough to construct an L-reduction from \textsc{Dominating Set} restricted to cubic graphs, which is known to be $\mathsf{APX}$-complete \cite{AK00}. Let $G$ be an instance of this problem with $n$ vertices. We build a graph $G'$ by replacing each vertex $v$ of $G$ with the gadget $G_{v}$ depicted in \Cref{fig:gadgetcubic} and by identifying edges incident to different gadgets and corresponding to the same $e_{i} \in E(G)$. This means that the contraction of all the edges in the gadgets in $G'$ results in the graph $G$. 

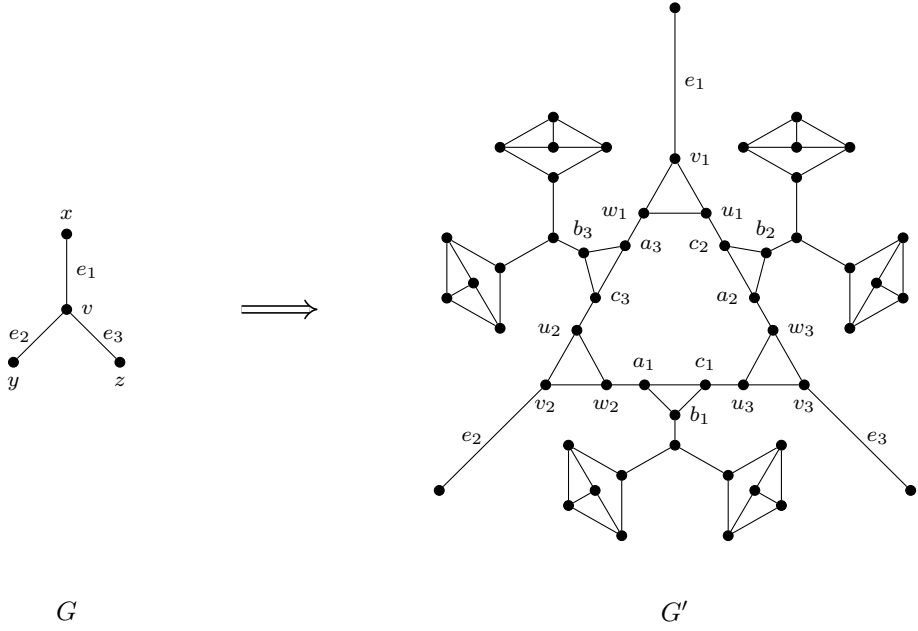
\begin{figure}[h!]
\centering
\begin{tikzpicture}
\node[circ,label=right:{\footnotesize $v$}] (v) at (-2,0) {};
\node[circ,label=below:{\footnotesize $y$}] (y) at (-2.7,-.7) {};
\node[circ,label=below:{\footnotesize $z$}] (z) at (-1.3,-.7) {};
\node[circ,label=above:{\footnotesize $x$}] (x) at (-2,1) {};
\draw (v) -- (x) node[midway,right] {\footnotesize $e_1$};
\draw (v) -- (y) node[midway,left] {\footnotesize $e_2$};
\draw (v) -- (z) node[midway,right] {\footnotesize $e_3$};

\node[draw=none] at (-2,-4) {$G$};

\draw[-Implies,line width=.6pt,double distance=2pt] (.3,0) -- (1.3,0);

\node[circ,label=right:{\footnotesize $v_1$}] (v1) at (6,2) {};
\node[circ,label=below:{\footnotesize $v_2$}] (v2) at (4.3,-1) {};
\node[circ,label=below:{\footnotesize $v_3$}] (v3) at (7.7,-1) {}; 
\node[circ,label=below:{\footnotesize $w_2$}] (w2) at (5.1,-1) {};
\node[circ,label=above:{\footnotesize $a_1$}] (a1) at (5.6,-1) {};
\node[circ,label=above:{\footnotesize $c_1$}] (c1) at (6.4,-1) {};
\node[circ,label=below:{\footnotesize $u_3$}] (u3) at (6.9,-1) {};
\node[circ,label=right:{\footnotesize $b_1$}] (b1) at (6,-1.4) {};
\draw (v2) -- (v3)
(a1) -- (b1)
(b1) -- (c1);

\draw (v2) -- (v1) node[circ,label=left:{\footnotesize $u_2$},pos=.23] (u2) {}
node[circ,label=right:{\footnotesize $c_3$},pos=.38] (c3) {}
node[circ,label=right:{\footnotesize $a_3$},pos=.62] (a3) {}
node[circ,label=left:{\footnotesize $w_1$},pos=.77] (w1) {};

\draw (v3) -- (v1) node[circ,label=right:{\footnotesize $w_3$},pos=.23] (w3) {}
node[circ,label=left:{\footnotesize $a_2$},pos=.38] (a2) {}
node[circ,label=left:{\footnotesize $c_2$},pos=.62] (c2) {}
node[circ,label=right:{\footnotesize $u_1$},pos=.77] (u1) {};

\node[circ,label=above:{\footnotesize $b_2$}] (b2) at (7.2,.75) {};
\draw (a2) -- (b2)
(b2) -- (c2);

\node[circ,label=above:{\footnotesize $b_3$}] (b3) at (4.8,.75) {};
\draw (a3) -- (b3)
(b3) -- (c3);

\draw (u1) -- (w1) 
(u2) -- (w2)
(u3) -- (w3);

\node[circ] (e1) at (6,4) {};
\draw (v1) -- (e1) node[midway,right] {\footnotesize $e_1$};
\node[circ] (e2) at (2.9,-2.4) {};
\draw (v2) -- (e2) node[midway,left] {\footnotesize $e_2$};
\node[circ] (e3) at (9.1,-2.4) {};
\draw (v3) -- (e3) node[midway,right] {\footnotesize $e_3$};

\node[circ] (x2) at (7.6,.95) {};
\draw (b2) -- (x2);

\node[circ] (x3) at (7.6,1.75) {};
\node[circ] (x5) at (6.9,2.15) {};
\node[circ] (x6) at (8.3,2.15) {};
\node[circ] (x7) at (7.6,2.55) {};
\draw (x2) -- (x3)
(x3) -- (x5)
(x3) -- (x6)
(x5) -- (x7)
(x6) -- (x7)
(x5) -- (x6) node[circ,midway] (x8) {};
\draw (x8) -- (x7);

\node[circ] (x4) at (8.3,.55) {};
\node[circ] (x9) at (8.3,-.25) {};
\node[circ] (x10) at (9,.95) {};
\node[circ] (x11) at (9,.15) {};
\draw (x2) -- (x4)
(x4) -- (x9)
(x4) -- (x10)
(x9) -- (x11)
(x10) -- (x11)
(x9) -- (x10) node[circ,midway] (x12) {};
\draw (x12) -- (x11);

\node[circ] (y2) at (6,-1.8) {};
\draw (b1) -- (y2);

\node[circ] (y3) at (6.7,-2.2) {};
\node[circ] (y5) at (6.7,-3) {};
\node[circ] (y6) at (7.4,-1.8) {};
\node[circ] (y7) at (7.4,-2.6) {};
\draw (y2) -- (y3)
(y3) -- (y5) 
(y3) -- (y6)
(y5) -- (y7)
(y6) -- (y7)
(y5) -- (y6) node[circ,midway] (y8) {};
\draw (y8) -- (y7);

\node[circ] (y4) at (5.3,-2.2) {};
\node[circ] (y9) at (5.3,-3) {};
\node[circ] (y10) at (4.6,-1.8) {};
\node[circ] (y11) at (4.6,-2.6) {};
\draw (y2) -- (y4)
(y4) -- (y9)
(y4) -- (y10)
(y9) -- (y11)
(y10) -- (y11)
(y9) -- (y10) node[circ,midway] (y12) {};
\draw (y12) -- (y11);

\node[circ] (z2) at (4.4,.95) {};
\draw (b3) -- (z2);

\node[circ] (z3) at (4.4,1.75) {};
\node[circ] (z5) at (5.1,2.15) {};
\node[circ] (z6) at (3.7,2.15) {};
\node[circ] (z7) at (4.4,2.55) {};
\draw (z2) -- (z3)
(z3) -- (z5)
(z3) -- (z6)
(z5) -- (z7)
(z6) -- (z7)
(z5) -- (z6) node[circ,midway] (z8) {};
\draw (z8) -- (z7);

\node[circ] (z4) at (3.7,.55) {};
\node[circ] (z9) at (3.7,-.25) {};
\node[circ] (z10) at (3,.95) {};
\node[circ] (z11) at (3,.15) {};
\draw (z2) -- (z4)
(z4) -- (z9)
(z4) -- (z10)
(z9) -- (z11)
(z10) -- (z11)
(z9) -- (z10) node[circ,midway] (z12) {};
\draw (z12) -- (z11);

\node[draw=none] at (6,-4) {$G'$};
\end{tikzpicture}
\caption{The gadget $G_{v}$ in $G'$ replacing the vertex $v$ of $G$.}
\label{fig:gadgetcubic}
\end{figure}

We claim that $\gamma_{t2}(G') = \gamma(G) + 17n$. Suppose first that $D$ is a minimum dominating set $D$ of $G$. We construct a semi-TD-set $D'$ of $G'$ as follows. For each vertex $v \in D$, we select the set $D_{v} \subset V(G_{v})$ consisting of the vertices depicted in light gray in \Cref{fig:cubicdom}. Observe that each vertex of $G_{v}$ is dominated by one vertex in $D_{v}$ and each vertex in $D_{v}$ has a witness in $D_{v}$. For each vertex $v \notin D$, there exists a vertex $x \in D$ adjacent to $v$, say $e_{1} = xv$. In the previous step, we have selected all the vertices $x_{i}$ of $G_{x}$ corresponding to $x$ and so, by further selecting the vertices of $G_{v}$ depicted in light gray in \Cref{fig:cubicnotdom}, we obtain a subset dominating each vertex of $G_{v}$. Moreover, each newly selected vertex is within distance two of another newly selected vertex. This implies that $D'$ is a semi-TD-set of size $\gamma(G) + 17n$ and so $\gamma_{t2}(G') \leq \gamma(G) + 17n$. \\

\begin{figure}[h!]
\centering
\hspace*{\fill}
\begin{subfigure}[b]{0.45\textwidth}
\centering
\begin{tikzpicture}
\node[circg,label=right:{\footnotesize $v_1$}] (v1) at (6,2) {};
\node[circg,label=below:{\footnotesize $v_2$}] (v2) at (4.3,-1) {};
\node[circg,label=below:{\footnotesize $v_3$}] (v3) at (7.7,-1) {}; 
\node[circ,label=below:{\footnotesize $w_2$}] (w2) at (5.1,-1) {};
\node[circ,label=above:{\footnotesize $a_1$}] (a1) at (5.6,-1) {};
\node[circg,label=above:{\footnotesize $c_1$}] (c1) at (6.4,-1) {};
\node[circ,label=below:{\footnotesize $u_3$}] (u3) at (6.9,-1) {};
\node[circ,label=right:{\footnotesize $b_1$}] (b1) at (6,-1.4) {};
\draw (v2) -- (c1)
(c1) -- (v3)
(a1) -- (b1)
(b1) -- (c1);

\draw (v2) -- (v1) node[circ,label=left:{\footnotesize $u_2$},pos=.23] (u2) {}
node[circg,label=right:{\footnotesize $c_3$},pos=.38] (c3) {}
node[circ,label=right:{\footnotesize $a_3$},pos=.62] (a3) {}
node[circ,label=left:{\footnotesize $w_1$},pos=.77] (w1) {};

\draw (v3) -- (v1) node[circ,label=right:{\footnotesize $w_3$},pos=.23] (w3) {}
node[circ,label=left:{\footnotesize $a_2$},pos=.38] (a2) {}
node[circg,label=left:{\footnotesize $c_2$},pos=.62] (c2) {}
node[circ,label=right:{\footnotesize $u_1$},pos=.77] (u1) {};

\node[circ,label=above:{\footnotesize $b_2$}] (b2) at (7.2,.75) {};
\draw (a2) -- (b2)
(b2) -- (c2);

\node[circ,label=above:{\footnotesize $b_3$}] (b3) at (4.8,.75) {};
\draw (a3) -- (b3)
(b3) -- (c3);

\draw (u1) -- (w1) 
(u2) -- (w2)
(u3) -- (w3);

\node[circ] (e1) at (6,4) {};
\draw (v1) -- (e1) node[midway,right] {\footnotesize $e_1$};
\node[circ] (e2) at (2.9,-2.4) {};
\draw (v2) -- (e2) node[midway,left] {\footnotesize $e_2$};
\node[circ] (e3) at (9.1,-2.4) {};
\draw (v3) -- (e3) node[midway,right] {\footnotesize $e_3$};

\node[circ] (x2) at (7.6,.95) {};
\draw (b2) -- (x2);

\node[circg] (x3) at (7.6,1.75) {};
\node[circ] (x5) at (6.9,2.15) {};
\node[circ] (x6) at (8.3,2.15) {};
\node[circ] (x7) at (7.6,2.55) {};
\draw (x2) -- (x3)
(x3) -- (x5)
(x3) -- (x6)
(x5) -- (x7)
(x6) -- (x7)
(x5) -- (x6) node[circg,midway] (x8) {};
\draw (x8) -- (x7);

\node[circg] (x4) at (8.3,.55) {};
\node[circ] (x9) at (8.3,-.25) {};
\node[circ] (x10) at (9,.95) {};
\node[circ] (x11) at (9,.15) {};
\draw (x2) -- (x4)
(x4) -- (x9)
(x4) -- (x10)
(x9) -- (x11)
(x10) -- (x11)
(x9) -- (x10) node[circg,midway] (x12) {};
\draw (x12) -- (x11);

\node[circ] (y2) at (6,-1.8) {};
\draw (b1) -- (y2);

\node[circg] (y3) at (6.7,-2.2) {};
\node[circ] (y5) at (6.7,-3) {};
\node[circ] (y6) at (7.4,-1.8) {};
\node[circ] (y7) at (7.4,-2.6) {};
\draw (y2) -- (y3)
(y3) -- (y5) 
(y3) -- (y6)
(y5) -- (y7)
(y6) -- (y7)
(y5) -- (y6) node[circg,midway] (y8) {};
\draw (y8) -- (y7);

\node[circg] (y4) at (5.3,-2.2) {};
\node[circ] (y9) at (5.3,-3) {};
\node[circ] (y10) at (4.6,-1.8) {};
\node[circ] (y11) at (4.6,-2.6) {};
\draw (y2) -- (y4)
(y4) -- (y9)
(y4) -- (y10)
(y9) -- (y11)
(y10) -- (y11)
(y9) -- (y10) node[circg,midway] (y12) {};
\draw (y12) -- (y11);

\node[circ] (z2) at (4.4,.95) {};
\draw (b3) -- (z2);

\node[circg] (z3) at (4.4,1.75) {};
\node[circ] (z5) at (5.1,2.15) {};
\node[circ] (z6) at (3.7,2.15) {};
\node[circ] (z7) at (4.4,2.55) {};
\draw (z2) -- (z3)
(z3) -- (z5)
(z3) -- (z6)
(z5) -- (z7)
(z6) -- (z7)
(z5) -- (z6) node[circg,midway] (z8) {};
\draw (z8) -- (z7);

\node[circg] (z4) at (3.7,.55) {};
\node[circ] (z9) at (3.7,-.25) {};
\node[circ] (z10) at (3,.95) {};
\node[circ] (z11) at (3,.15) {};
\draw (z2) -- (z4)
(z4) -- (z9)
(z4) -- (z10)
(z9) -- (z11)
(z10) -- (z11)
(z9) -- (z10) node[circg,midway] (z12) {};
\draw (z12) -- (z11);
\end{tikzpicture}
\caption{}
\label{fig:cubicdom}
\end{subfigure}
\hfill
\begin{subfigure}[b]{0.45\textwidth}
\centering
\begin{tikzpicture}
\node[circ,label=right:{\footnotesize $v_1$}] (v1) at (6,2) {};
\node[circ,label=below:{\footnotesize $v_2$}] (v2) at (4.3,-1) {};
\node[circ,label=below:{\footnotesize $v_3$}] (v3) at (7.7,-1) {}; 
\node[circ,label=below:{\footnotesize $w_2$}] (w2) at (5.1,-1) {};
\node[circ,label=above:{\footnotesize $a_1$}] (a1) at (5.6,-1) {};
\node[circg,label=above:{\footnotesize $c_1$}] (c1) at (6.4,-1) {};
\node[circ,label=below:{\footnotesize $u_3$}] (u3) at (6.9,-1) {};
\node[circ,label=right:{\footnotesize $b_1$}] (b1) at (6,-1.4) {};
\draw (v2) -- (c1)
(c1) -- (v3)
(a1) -- (b1)
(b1) -- (c1);

\draw (v2) -- (v1) node[circg,label=left:{\footnotesize $u_2$},pos=.23] (u2) {}
node[circ,label=right:{\footnotesize $c_3$},pos=.38] (c3) {}
node[circg,label=right:{\footnotesize $a_3$},pos=.62] (a3) {}
node[circ,label=left:{\footnotesize $w_1$},pos=.77] (w1) {};

\draw (v3) -- (v1) node[circg,label=right:{\footnotesize $w_3$},pos=.23] (w3) {}
node[circ,label=left:{\footnotesize $a_2$},pos=.38] (a2) {}
node[circg,label=left:{\footnotesize $c_2$},pos=.62] (c2) {}
node[circ,label=right:{\footnotesize $u_1$},pos=.77] (u1) {};

\node[circ,label=above:{\footnotesize $b_2$}] (b2) at (7.2,.75) {};
\draw (a2) -- (b2)
(b2) -- (c2);

\node[circ,label=above:{\footnotesize $b_3$}] (b3) at (4.8,.75) {};
\draw (a3) -- (b3)
(b3) -- (c3);

\draw (u1) -- (w1) 
(u2) -- (w2)
(u3) -- (w3);

\node[circg] (e1) at (6,4) {};
\draw (v1) -- (e1) node[midway,right] {\footnotesize $e_1$};
\node[circ] (e2) at (2.9,-2.4) {};
\draw (v2) -- (e2) node[midway,left] {\footnotesize $e_2$};
\node[circ] (e3) at (9.1,-2.4) {};
\draw (v3) -- (e3) node[midway,right] {\footnotesize $e_3$};

\node[circ] (x2) at (7.6,.95) {};
\draw (b2) -- (x2);

\node[circg] (x3) at (7.6,1.75) {};
\node[circ] (x5) at (6.9,2.15) {};
\node[circ] (x6) at (8.3,2.15) {};
\node[circ] (x7) at (7.6,2.55) {};
\draw (x2) -- (x3)
(x3) -- (x5)
(x3) -- (x6)
(x5) -- (x7)
(x6) -- (x7)
(x5) -- (x6) node[circg,midway] (x8) {};
\draw (x8) -- (x7);

\node[circg] (x4) at (8.3,.55) {};
\node[circ] (x9) at (8.3,-.25) {};
\node[circ] (x10) at (9,.95) {};
\node[circ] (x11) at (9,.15) {};
\draw (x2) -- (x4)
(x4) -- (x9)
(x4) -- (x10)
(x9) -- (x11)
(x10) -- (x11)
(x9) -- (x10) node[circg,midway] (x12) {};
\draw (x12) -- (x11);

\node[circ] (y2) at (6,-1.8) {};
\draw (b1) -- (y2);

\node[circg] (y3) at (6.7,-2.2) {};
\node[circ] (y5) at (6.7,-3) {};
\node[circ] (y6) at (7.4,-1.8) {};
\node[circ] (y7) at (7.4,-2.6) {};
\draw (y2) -- (y3)
(y3) -- (y5) 
(y3) -- (y6)
(y5) -- (y7)
(y6) -- (y7)
(y5) -- (y6) node[circg,midway] (y8) {};
\draw (y8) -- (y7);

\node[circg] (y4) at (5.3,-2.2) {};
\node[circ] (y9) at (5.3,-3) {};
\node[circ] (y10) at (4.6,-1.8) {};
\node[circ] (y11) at (4.6,-2.6) {};
\draw (y2) -- (y4)
(y4) -- (y9)
(y4) -- (y10)
(y9) -- (y11)
(y10) -- (y11)
(y9) -- (y10) node[circg,midway] (y12) {};
\draw (y12) -- (y11);

\node[circ] (z2) at (4.4,.95) {};
\draw (b3) -- (z2);

\node[circg] (z3) at (4.4,1.75) {};
\node[circ] (z5) at (5.1,2.15) {};
\node[circ] (z6) at (3.7,2.15) {};
\node[circ] (z7) at (4.4,2.55) {};
\draw (z2) -- (z3)
(z3) -- (z5)
(z3) -- (z6)
(z5) -- (z7)
(z6) -- (z7)
(z5) -- (z6) node[circg,midway] (z8) {};
\draw (z8) -- (z7);

\node[circg] (z4) at (3.7,.55) {};
\node[circ] (z9) at (3.7,-.25) {};
\node[circ] (z10) at (3,.95) {};
\node[circ] (z11) at (3,.15) {};
\draw (z2) -- (z4)
(z4) -- (z9)
(z4) -- (z10)
(z9) -- (z11)
(z10) -- (z11)
(z9) -- (z10) node[circg,midway] (z12) {};
\draw (z12) -- (z11);
\end{tikzpicture}
\caption{}
\label{fig:cubicnotdom}
\end{subfigure}
\hspace*{\fill}
\caption{Construction of a semi-TD-set of $G'$ from a dominating set of $G$.}
\end{figure}
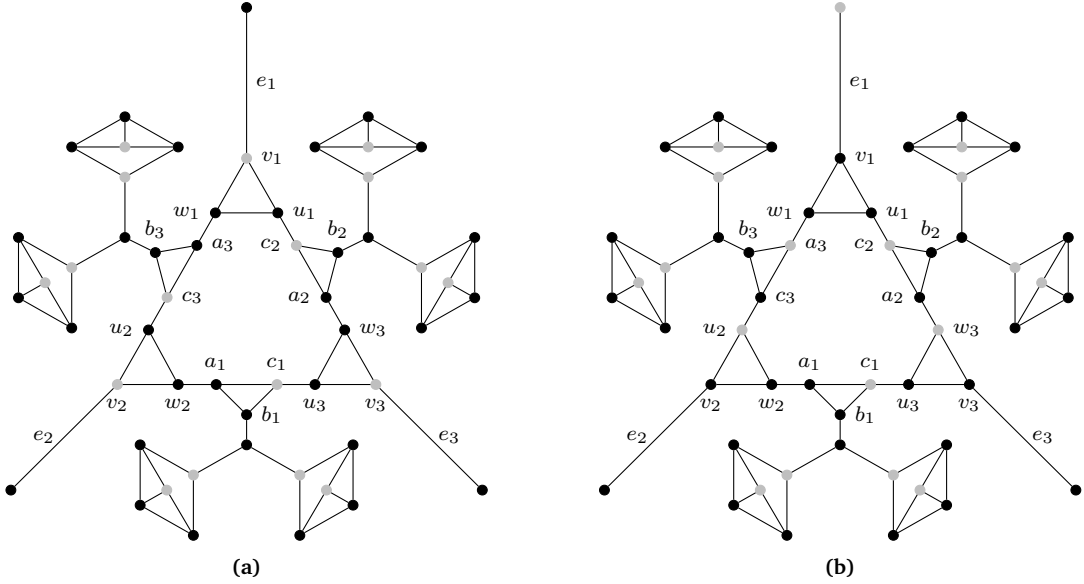 

Conversely, let $D'$ be a semi-TD-set of $G'$. We first observe the following:

\begin{observation}
\label{obs:si2}
\begin{minipage}[t]{\linegoal}
\begin{itemize}[leftmargin=*]
\item[$\bullet$] $D'$ contains at least two vertices of each subgraph of $G_v$ of the form \raisebox{-.5ex}{%
\begin{tikzpicture}[scale=0.3]
\node[cir] (1) at (0,0) {};
\node[cir] (2) at (1,0) {};
\node[cir] (3) at (1,.5) {};
\node[cir] (4) at (1,-.5) {};
\node[cir] (5) at (2,0) {};
\draw (1) -- (2)
(1) -- (3)
(1) -- (4)
(2) -- (3)
(2) -- (5)
(3) -- (5)
(4) -- (5);
\end{tikzpicture}}.
\item[$\bullet$] For each $i\in \{1,2,3\}$, at least one vertex of $S_i = N[b_i]$ belongs to $D'$.
\end{itemize}
\end{minipage}
\end{observation}

We next prove the following claim:

\begin{claim}\label{claimcountcubic}
$|D' \cap V(G_{v})| \geq 17$, for each $v \in V(G)$. Moreover, equality holds only if $D' \cap \{v_{1}, v_{2}, v_{3}\} = \varnothing$. 
\end{claim}

Let us first show that $|D' \cap V(G_{v})| \geq 17$, for each $v \in V(G)$. If $D' \cap \{u_{i}, v_{i}, w_{i}\} \neq \varnothing$, for each $i \in \{1,2,3\}$, then Observation \ref{obs:si2} implies that $|D' \cap V(G_{v})| \geq 3 + 12 + 3 = 18$. Suppose now  that there exists $i \in \{1,2,3\}$ such that $D' \cap \{u_{i}, v_{i}, w_{i}\} = \varnothing$. We then have that $\{a_{i-1},c_{i+1}\} \subseteq D'$ (indices are taken modulo $3$), as $w_i$ and $u_i$ are dominated. Moreover, the witness of $a_{i-1}$ and that of $c_{i+1}$ are distinct (as $D' \cap \{u_{i}, v_{i}, w_{i}\} = \varnothing$) and belong to neither $S_i$ nor to any subgraph of the form \raisebox{-.5ex}{%
\begin{tikzpicture}[scale=0.3]
\node[cir] (1) at (0,0) {};
\node[cir] (2) at (1,0) {};
\node[cir] (3) at (1,.5) {};
\node[cir] (4) at (1,-.5) {};
\node[cir] (5) at (2,0) {};
\draw (1) -- (2)
(1) -- (3)
(1) -- (4)
(2) -- (3)
(2) -- (5)
(3) -- (5)
(4) -- (5);
\end{tikzpicture}}.
Observation \ref{obs:si2} then implies that $|D' \cap V(G_{v})| \geq 2 + 2 + 12 + 1 = 17$. 

It remains to show that if $D'$ contains a vertex in $\{v_{1}, v_{2}, v_{3}\}$, then $|D' \cap V(G_{v})| \geq 18$. By the previous paragraph, we may assume that there exists $i \in \{1,2,3\}$ such that $D' \cap \{u_{i}, v_{i}, w_{i}\} = \varnothing$. As mentioned above, this implies that $\{a_{i-1},c_{i+1}\} \subseteq D'$ and that the witness of $a_{i-1}$ and that of $c_{i+1}$ are distinct. Moreover, these witnesses are different from $v_{i-1}$, $v_{i+1}$, any vertex of $S_i$ and any vertex of a subgraph of the form \raisebox{-.5ex}{%
\begin{tikzpicture}[scale=0.3]
\node[cir] (1) at (0,0) {};
\node[cir] (2) at (1,0) {};
\node[cir] (3) at (1,.5) {};
\node[cir] (4) at (1,-.5) {};
\node[cir] (5) at (2,0) {};
\draw (1) -- (2)
(1) -- (3)
(1) -- (4)
(2) -- (3)
(2) -- (5)
(3) -- (5)
(4) -- (5);
\end{tikzpicture}}. Therefore, since $|D' \cap \{v_{1}, v_{2}, v_{3}\}| \geq 1$, Observation \ref{obs:si2} implies that $|D' \cap V(G_v)| \geq 2 + 2 + 12 + 1 + 1 = 18$. $\lozenge$

\smallskip

Denoting by $t$ the number of gadgets $G_{v}$ of $G'$ such that $|D' \cap V(G_{v})| \geq 18$, \Cref{claimcountcubic} implies that $|D'| \geq 18t + 17(n - t) = t + 17n$. Let now $D$ be the set of vertices $v$ of $G$ corresponding to those gadgets $G_{v}$ in $G'$ such that $|D' \cap V(G_{v})| \geq 18$. We claim that $D$ is a dominating set of $G$. Indeed, if $v \notin D$, then $|D' \cap V(G_{v})| = 17$ and \Cref{claimcountcubic} implies that $D' \cap \{v_{1}, v_{2}, v_{3}\} = \varnothing$. Therefore, at least one vertex of $\{v_{1}, v_{2}, v_{3}\}$ is dominated by a vertex in a different gadget and since this vertex is of the form $x_{i}$, for some $x$ adjacent to $v$, we have that $|D' \cap V(G_{x})| \geq 18$ and so $x \in D$. But then $\gamma(G) \leq |D| = t \leq |D'| - 17n$ and letting $D'$ to be a minimum semi-TD-set of $G'$, we obtain $\gamma(G) \leq |D| \leq \gamma_{t2}(G') - 17n$, which proves our initial claim.

\smallskip

We can finally show that the construction introduced above gives rise to an L-reduction. Indeed, since $G$ is cubic, we have that $n \leq 4\gamma(G)$ and so $\gamma_{t2}(G') = \gamma(G) + 17n \leq 69\gamma(G)$. Moreover, given a semi-TD-set $D'$ of $G'$, the previous paragraphs show how to construct in polynomial time a dominating set $D$ of $G$ such that $|D| \leq |D'| - 17n$. Therefore, $|D| - \gamma(G) \leq |D'| - 17n - \gamma(G) = |D'| - \gamma_{t2}(G')$, thus concluding the proof. 
\end{proof}

In the rest of this section, we show the $\mathsf{APX}$-completeness of \textsc{Semitotal Dominating Set} when restricted to subcubic bipartite graphs. For this purpose, it is natural to consider the effect of odd subdivisions on the semitotal domination number. Clearly, replacing an edge of a graph with a path of length $4$ increases the domination number by exactly one. A slightly more complicated analysis gives the following result on the effect of $5$-subdivisions for the  semitotal domination number: 

\begin{lemma}\label{subdsemi} Let $G$ be a graph and $uv \in E(G)$. If $G'$ is the graph obtained from $G$ by replacing the edge $uv$ with a path of length $6$, then $\gamma_{t2}(G') = \gamma_{t2}(G) + 2$. 
\end{lemma}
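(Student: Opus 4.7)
The plan is to prove the two inequalities $\gamma_{t2}(G') \le \gamma_{t2}(G) + 2$ and $\gamma_{t2}(G') \ge \gamma_{t2}(G) + 2$ separately. Label the replacing path by $u y_1 y_2 y_3 y_4 y_5 v$. A recurring structural fact I will use in both directions is that for any two vertices $a,b \in V(G)$, every $G'$-path of length at most two between them stays inside $G - uv$, since any $G'$-path that enters the new path has length at least six. Consequently $d_{G'}(a,b) \le 2$ implies $d_{G-uv}(a,b) \le 2$, and hence $d_G(a,b) \le 2$. This lets witness relations and short-distance domination among original vertices transfer between $G$ and $G'$.

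For the upper bound I take a minimum semi-TD-set $D$ of $G$ and set $D' := D \cup \{y_2, y_4\}$. The added pair dominates $y_1, y_3, y_5$ and witness each other at distance two. Every vertex of $V(G) \setminus \{u,v\}$ keeps its neighbourhood in $G'$, so its domination by $D$ carries over, and the structural fact above preserves witness relations among $D$-vertices whose short witness path does not use the edge $uv$. For the few relations that do use $uv$ (necessarily involving $u$ or $v$ directly), the vertex $y_2$ (resp.\ $y_4$) provides a replacement witness at $G'$-distance exactly two for $u$ (resp.\ $v$). The one remaining concern is when $u \notin D$ has $v$ as its sole $D$-dominator (symmetrically for $v$); in that subcase I swap $\{y_2, y_4\}$ for $\{y_1, y_3\}$ (or $\{y_3, y_5\}$), which still adds only two vertices, lets $y_1$ cover $u$, and re-uses $v \in D$ to dominate $y_5$.

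For the lower bound I take a minimum semi-TD-set $D'$ of $G'$ and partition it as $D_P = D' \cap \{y_1, \ldots, y_5\}$ and $D_G = D' \cap V(G)$. I first show $|D_P| \ge 2$: since $y_3$ must be dominated, $D_P \cap \{y_2, y_3, y_4\} \ne \varnothing$, and the witness requirement of the chosen vertex (whose $G'$-distance-two neighbourhood lies entirely inside $\{y_1, \ldots, y_5\}$) forces a second element. I then aim to show that $D_G$ is a semi-TD-set of $G$, giving $|D| \le |D_G| = |D'| - |D_P| \le |D'| - 2$. Witnesses among $D_G$-vertices survive by the structural fact, and since $y_2$ and $y_4$ are within $G'$-distance two only of $u$ and $v$ among $V(G)$, no $D_G$-vertex other than possibly $u$ or $v$ ever relied on a witness from $D_P$. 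Domination of $u$ and $v$ in $G$ then follows from the restored edge $uv$ together with a short case analysis on the exact pair $D_P$, which in each case either forces the needed covering vertex to already lie in $D_G$ or exhibits $uv$ as the missing link.

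The main obstacle is the lower-bound case where $D_P$ contains both boundary vertices $y_1$ and $y_5$ serving as the only $G'$-dominators of $u$ and $v$: here $D_G$ alone dominates neither endpoint in $G$, so I need to enlarge it by one or both of them. The cost is offset by observing that the witness obligations of $y_1$ and $y_5$ force extra members into $D_P$: if $y_3 \in D_P$ the size jumps to at least three (so adding a single endpoint and using the edge $uv$ to cover the other suffices), otherwise both $y_2, y_4 \in D_P$ are forced, pushing $|D_P|$ to at least four and allowing both $u$ and $v$ to be added within the $|D'| - 2$ budget. This bookkeeping is the most intricate step of the argument.
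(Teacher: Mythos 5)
Your overall strategy coincides with the paper's (explicit two-vertex augmentations for the upper bound, a counting argument plus case analysis on $D' \cap \{y_1,\dots,y_5\}$ for the lower bound), but the upper bound as you have set it up has a genuine gap. Take $v \in D$, $u \notin D$, and suppose some $x \in N_G(u) \cap D$ has $v$ as its \emph{only} witness, via the path $xuv$. Your swap rule is not triggered ($u$ is dominated by $x$, not solely by $v$), so you use $D \cup \{y_2,y_4\}$; but $d_{G'}(x,y_2) \geq 3$ and $d_{G'}(x,y_4) \geq 3$, so $x$ is left with no witness. Your observation that every broken witness relation ``involves $u$ or $v$ directly'' is true of one endpoint of the relation, but the other endpoint (a $D$-vertex adjacent to $u$ or $v$) also loses that witness, and $y_2,y_4$ do not reach it. Concretely, for $G = P_4 = auvb$ the set $D = \{a,v\}$ is a minimum semi-TD-set, and $D \cup \{y_2,y_4\}$ is not a semi-TD-set of $G' = P_9$ because $a$ has no vertex of the set within distance two. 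The paper avoids this with a third choice of pair in exactly this sub-case: when exactly one endpoint (say $u$) lies in $D$ and the other endpoint $v$ has a $D$-neighbour, it takes $\{w_2,w_5\}$, where the subdivision vertex \emph{adjacent} to $v$ is within distance two of every neighbour of $v$ and so repairs all broken relations at once. The correct trigger is the existence of a $D$-neighbour of the endpoint outside $D$, not sole domination.

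The lower bound has a related, smaller oversight. You assert that only $u$ and $v$ among $D_G$ can have relied on a witness in $D_P$, on the grounds that $y_2$ and $y_4$ reach only $u$ and $v$; but $y_1$ (resp.\ $y_5$) is within $G'$-distance two of every neighbour of $u$ (resp.\ $v$), so a vertex $x \in N_G(u) \cap D_G$ may have had its sole witness in $D_P$. The paper's corresponding step records precisely that any such $x$ satisfies $xu \in E(G)$, $xv \in E(G)$ or $x \in \{u,v\}$, and then arranges for one of $u,v$ to lie in the final set so that the restored edge $uv$ supplies the missing witness; your sketch needs the same device outside the $\{y_1,y_5\}$ obstacle case you do treat. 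Similarly, your justification of $|D_P| \geq 2$ is off: the $G'$-distance-two neighbourhoods of $y_2$ and $y_4$ are not contained in $\{y_1,\dots,y_5\}$ (they contain $u$, resp.\ $v$), so the witness requirement forces a second path vertex only when the dominator of $y_3$ is $y_3$ itself; when it is $y_2$ or $y_4$ you must instead invoke the domination of $y_4$ (resp.\ $y_2$). These lower-bound issues are patchable and your accounting in the $\{y_1,y_5\}$ case is essentially sound, but as written the argument does not close, and the upper-bound construction is outright wrong on the $P_4$ example above.
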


\begin{proof} Let $uw_{1}w_{2}w_{3}w_{4}w_{5}v$ be the path of length $6$ in $G'$ resulting from the subdivision of $uv$. Suppose first that $D$ is a minimum semi-TD-set of $G$. We build a semi-TD-set of $G'$ as follows (see \Cref{fig:subd}). If $\{u, v\} \subseteq D$, it is easy to see that $D \cup \{w_{2}, w_{4}\}$ is a semi-TD-set of $G'$ of size $\gamma_{t2}(G) + 2$. Suppose now $D$ contains exactly one of $u$ and $v$, say without loss of generality $u \in D$. If no vertex in $N_{G}(v) \setminus \{u\}$ belongs to $D$, then $D \cup \{w_{3}, w_{5}\}$ is clearly a semi-TD-set of $G'$ of size $\gamma_{t2}(G) + 2$. On the other hand, if there exists $w \in (N_{G}(v) \setminus \{u\}) \cap D$, we claim that $D \cup \{w_{2}, w_{5}\}$ is a semi-TD-set of $G'$ of size $\gamma_{t2}(G) + 2$. Indeed, it is clearly a dominating set. Moreover, $w_{2}$ is a witness for $u$ and $w_{5}$ is a witness for every vertex adjacent to $v$. Suppose finally $D \cap \{u, v\} = \varnothing$. It is easy to see that $D \cup \{w_{2}, w_{4}\}$ is a semi-TD-set of $G'$ of size $\gamma_{t2}(G) + 2$.     

\begin{figure}[h!]
\centering
\begin{tikzpicture}[node distance=.6cm]
\node[circr,label=below:{\footnotesize $u$}] (u1) at (0,0) {};
\node[circr,label=below:{\footnotesize $v$}] (v1) at (1,0) {};
\node[circg,below left of=u1] (x1) {};
\node[circ,above left of=u1] (y1) {};
\node[circ,below right of=v1] (z1) {};
\node[circg,above right of=v1] (t1) {};
\draw (u1) -- (v1)
(u1) -- (x1)
(u1) -- (y1)
(v1) -- (z1)
(v1) -- (t1);

\draw[-Implies,line width=.6pt,double distance=2pt] (2.5,0) -- (3.5,0);

\node[circr,label=below:{\footnotesize $u$}] (u'1) at (5,0) {};
\node[circr] (w1) at (5.4,0) {};
\node[circg,label=above:{\footnotesize $w_2$}] (w2) at (5.8,0) {};
\node[circr] (w3) at (6.2,0) {};
\node[circg,label=above:{\footnotesize $w_4$}] (w4) at (6.6,0) {};
\node[circr] (w5) at (7,0) {};
\node[circr,label=below:{\footnotesize $v$}] (v'1) at (7.4,0) {};
\node[circg,below left of=u'1] (a1) {};
\node[circ,above left of=u'1] (b1) {};
\node[circ,below right of=v'1] (c1) {};
\node[circg,above right of=v'1] (d1) {};
\draw (u'1) -- (a1)
(u'1) -- (b1) 
(u'1) -- (w1)
(w1) -- (w2)
(w2) -- (w3)
(w3) -- (w4) 
(w4) -- (w5) 
(w5) -- (v'1)
(v'1) -- (c1)
(v'1) -- (d1); 

\node[circg,label=below:{\footnotesize $u$}] (u12) at (0,2) {};
\node[circr,label=below:{\footnotesize $v$}] (v12) at (1,2) {};
\node[circ,below left of=u12] (x12) {};
\node[circ,above left of=u12] (y12) {};
\node[circ,below right of=v12] (z12) {};
\node[circg,above right of=v12] (t12) {};
\draw (u12) -- (v12)
(u12) -- (x12)
(u12) -- (y12)
(v12) -- (z12)
(v12) -- (t12);

\draw[-Implies,line width=.6pt,double distance=2pt] (2.5,2) -- (3.5,2);

\node[circg,label=below:{\footnotesize $u$}] (u'12) at (5,2) {};
\node[circr] (w12) at (5.4,2) {};
\node[circg,label=above:{\footnotesize $w_2$}] (w22) at (5.8,2) {};
\node[circr] (w32) at (6.2,2) {};
\node[circr] (w42) at (6.6,2) {};
\node[circg,label=above:{\footnotesize $w_5$}] (w52) at (7,2) {};
\node[circr,label=below:{\footnotesize $v$}] (v'12) at (7.4,2) {};
\node[circ,below left of=u'12] (a12) {};
\node[circ,above left of=u'12] (b12) {};
\node[circ,below right of=v'12] (c12) {};
\node[circg,above right of=v'12] (d12) {};
\draw (u'12) -- (a12)
(u'12) -- (b12) 
(u'12) -- (w12)
(w12) -- (w22)
(w22) -- (w32)
(w32) -- (w42) 
(w42) -- (w52) 
(w52) -- (v'12)
(v'12) -- (c12)
(v'12) -- (d12); 

\node[circg,label=below:{\footnotesize $u$}] (u14) at (0,4) {};
\node[circr,label=below:{\footnotesize $v$}] (v14) at (1,4) {};
\node[circ,below left of=u14] (x14) {};
\node[circ,above left of=u14] (y14) {};
\node[circr,below right of=v14] (z14) {};
\node[circr,above right of=v14] (t14) {};
\draw (u14) -- (v14)
(u14) -- (x14)
(u14) -- (y14)
(v14) -- (z14)
(v14) -- (t14);

\draw[-Implies,line width=.6pt,double distance=2pt] (2.5,4) -- (3.5,4);

\node[circg,label=below:{\footnotesize $u$}] (u'14) at (5,4) {};
\node[circr] (w14) at (5.4,4) {};
\node[circr] (w24) at (5.8,4) {};
\node[circg,label=above:{\footnotesize $w_3$}] (w34) at (6.2,4) {};
\node[circr] (w44) at (6.6,4) {};
\node[circg,label=above:{\footnotesize $w_5$}] (w54) at (7,4) {};
\node[circr,label=below:{\footnotesize $v$}] (v'14) at (7.4,4) {};
\node[circ,below left of=u'14] (a14) {};
\node[circ,above left of=u'14] (b14) {};
\node[circr,below right of=v'14] (c14) {};
\node[circr,above right of=v'14] (d14) {};
\draw (u'14) -- (a14)
(u'14) -- (b14) 
(u'14) -- (w14)
(w14) -- (w24)
(w24) -- (w34)
(w34) -- (w44) 
(w44) -- (w54) 
(w54) -- (v'14)
(v'14) -- (c14)
(v'14) -- (d14); 

\node[circg,label=below:{\footnotesize $u$}] (u16) at (0,6) {};
\node[circg,label=below:{\footnotesize $v$}] (v16) at (1,6) {};
\node[circ,below left of=u16] (x16) {};
\node[circ,above left of=u16] (y16) {};
\node[circ,below right of=v16] (z16) {};
\node[circ,above right of=v16] (t16) {};
\draw (u16) -- (v16)
(u16) -- (x16)
(u16) -- (y16)
(v16) -- (z16)
(v16) -- (t16);

\draw[-Implies,line width=.6pt,double distance=2pt] (2.5,6) -- (3.5,6);

\node[circg,label=below:{\footnotesize $u$}] (u'16) at (5,6) {};
\node[circr] (w16) at (5.4,6) {};
\node[circg,label=above:{\footnotesize $w_2$}] (w26) at (5.8,6) {};
\node[circr] (w36) at (6.2,6) {};
\node[circg,label=above:{\footnotesize $w_4$}] (w46) at (6.6,6) {};
\node[circr] (w56) at (7,6) {};
\node[circg,label=below:{\footnotesize $v$}] (v'16) at (7.4,6) {};
\node[circ,below left of=u'16] (a16) {};
\node[circ,above left of=u'16] (b16) {};
\node[circ,below right of=v'16] (c16) {};
\node[circ,above right of=v'16] (d16) {};
\draw (u'16) -- (a16)
(u'16) -- (b16) 
(u'16) -- (w16)
(w16) -- (w26)
(w26) -- (w36)
(w36) -- (w46) 
(w46) -- (w56) 
(w56) -- (v'16)
(v'16) -- (c16)
(v'16) -- (d16);

\node[draw=none] at (.5,7) {$G$};
\node[draw=none] at (6.2,7) {$G'$}; 
\end{tikzpicture}
\caption{Construction of a semi-TD-set $D'$ of $G'$ from a semi-TD-set $D$ of $G$. The gray vertices belong to $D$, the red vertices do not belong to $D$ and the black vertices are undetermined. Similarly for $D'$. Note that we might have $N_{G}(u) \cap N_{G}(v) \neq \varnothing$.}
\label{fig:subd}
\end{figure}

Suppose now $D'$ is a minimum semi-TD-set of $G'$. We first claim that by eventually adding one of $u$ or $v$ to $D' \setminus \{w_{1}, w_{2}, w_{3}, w_{4}, w_{5}\}$ we obtain a set $D$ which is a semi-TD-set of $G$. Clearly, $D$ is a dominating set of $G$, as $u$ and $v$ are the only vertices of $G$ that can be dominated by a vertex $w_{i}$ (in particular, it must be $i \in \{1, 5\}$). Moreover, if a vertex $w \in D' \setminus \{w_{1}, w_{2}, w_{3}, w_{4}, w_{5}\}$ had a witness with respect to $D'$ in $\{w_{1}, w_{2}, w_{3}, w_{4}, w_{5}\}$, then either $wu \in E(G)$, $wv \in E(G)$, $w = u$ or $w = v$. In either case, $u$ or $v$ is a witness for $w$ with respect to $D$. Finally, it is easy to see that the eventually newly added vertex has a witness with respect to $D$. 
   
Observe now that, since $D'$ is a dominating set of $G'$, it contains at least two vertices of $\{w_{1}, w_{2}, w_{3}, w_{4}, w_{5}\}$. If it contains at least three such vertices then, by eventually adding $u$ or $v$ to $D' \setminus \{w_{1}, w_{2}, w_{3}, w_{4}, w_{5}\}$, we obtain a semi-TD-set of $G$ of size at most $\gamma_{t2}(G') - 2$. We may therefore assume that $D'$ contains exactly two vertices of $\{w_{1}, w_{2}, w_{3}, w_{4}, w_{5}\}$. 

If $w_{3} \in D'$, then either $u$ or $v$ belongs to $D'$, for otherwise $D'$ must contain at least three vertices of $D$, and it is easy to see that $D' \setminus \{w_{1}, w_{2}, w_{3}, w_{4}, w_{5}\}$ is a semi-TD-set of $G$ of size $\gamma_{t2}(G') - 2$. It remains to consider the case $w_{3} \notin D'$. Suppose first exactly one of $w_{2}$ and $w_{4}$ belongs to $D'$, say without loss of generality, $w_{2} \in D'$. Since $w_{4} \notin D'$, we have that $w_{5} \in D'$. Moreover, since $w_{2}$ has a witness with respect to $D'$, it must be that $u \in D'$. Since $w_5$ has a witness not in $\{w_{1}, w_{2}, w_{3}, w_{4}, w_{5}\}$, it is easy to see that $D' \setminus \{w_{1}, w_{2}, w_{3}, w_{4}, w_{5}\}$ is a semi-TD-set of $G$ of size $\gamma_{t2}(G') - 2$. Suppose finally both $w_{2}$ and $w_{4}$ belong to $D'$. By the reasonings above, we may assume that $D'$ contains at most one of $u$ and $v$. If $D'$ contains exactly one of $\{u, v\}$, say without loss of generality $u \in D'$, then a vertex not in $\{w_{1}, w_{2}, w_{3}, w_{4}, w_{5}\}$ dominates $v$ and it is easy to see that $D' \setminus \{w_{1}, w_{2}, w_{3}, w_{4}, w_{5}\}$ is a semi-TD-set of $G$ of size $\gamma_{t2}(G') - 2$. On the other hand, if $\{u, v\} \cap D' = \varnothing$, then $u$ is dominated by a vertex in $D' \setminus \{w_{1}, w_{2}, w_{3}, w_{4}, w_{5}\}$ and $v$ is dominated by a vertex in $D' \setminus \{w_{1}, w_{2}, w_{3}, w_{4}, w_{5}\}$. Therefore, $D' \setminus \{w_{1}, w_{2}, w_{3}, w_{4}, w_{5}\}$ is a dominating set of $G$ of size $\gamma_{t2}(G') - 2$. Moreover, no vertex in $\{w_{1}, w_{2}, w_{3}, w_{4}, w_{5}\}$ is a witness for a vertex in $D' \setminus \{w_{1}, w_{2}, w_{3}, w_{4}, w_{5}\}$  with respect to $D'$ and so $D' \setminus \{w_{1}, w_{2}, w_{3}, w_{4}, w_{5}\}$ is indeed a semi-TD-set of $G$. 
\end{proof}

\begin{corollary}\label{bipsub} \textsc{Semitotal Dominating Set} is $\mathsf{APX}$-complete when restricted to subcubic bipartite graphs.
\end{corollary}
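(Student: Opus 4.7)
The plan is to establish an L-reduction from \textsc{Semitotal Dominating Set} restricted to cubic graphs, which is $\mathsf{APX}$-complete by \Cref{semicubic}. Membership of the subcubic bipartite version in $\mathsf{APX}$ is immediate from the $2+3\ln(\Delta+1)$-approximation of \citet{HP17} specialised to $\Delta=3$, so only $\mathsf{APX}$-hardness needs work.

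Given a cubic graph $G$ on $n$ vertices and $m=3n/2$ edges, I would build $G'$ by simultaneously replacing every edge of $G$ with a path of length $6$ (i.e.\ performing a $5$-subdivision of each edge). Two structural observations would then be recorded: $G'$ is subcubic, since the original vertices retain their degree $3$ and every subdivision vertex has degree exactly $2$; and $G'$ is bipartite, since each cycle of $G$ of length $\ell$ becomes a cycle of length $6\ell$ in $G'$, which is even. Iterating \Cref{subdsemi} one subdivided edge at a time would yield the identity $\gamma_{t2}(G')=\gamma_{t2}(G)+2m=\gamma_{t2}(G)+3n$.

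To verify the two L-reduction conditions, I would first invoke the elementary bound $\gamma_{t2}(G)\geq\gamma(G)\geq n/(\Delta+1)=n/4$, valid because $G$ is cubic, to conclude that $\gamma_{t2}(G')\leq\gamma_{t2}(G)+12\gamma_{t2}(G)=13\gamma_{t2}(G)$, which gives the first L-reduction condition with $\alpha=13$. For the second condition, given a semi-TD-set $D'$ of $G'$, I would iteratively apply the backward construction from the proof of \Cref{subdsemi} to each of the $m$ subdivided edges in turn: each step returns a semi-TD-set of the partially un-subdivided graph of size smaller by exactly $2$, and after $m$ iterations one obtains in polynomial time a semi-TD-set $D$ of $G$ with $|D|\leq |D'|-2m$, so that $|D|-\gamma_{t2}(G)\leq |D'|-\gamma_{t2}(G')$, i.e.\ $\beta=1$.

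The main obstacle is simply checking that \Cref{subdsemi} can be iterated cleanly: at every intermediate stage the current graph is still obtained from $G$ by subdividing a subset of its edges, and the lemma's backward construction acts only on the five internal vertices of a single subdivision path, whose modification affects neither the dominating nor the witness status of vertices sitting in other (still subdivided or already un-subdivided) edges. Hence the argument localises edge by edge, no global complication arises, and combined with \Cref{semicubic} this delivers $\mathsf{APX}$-completeness on subcubic bipartite graphs.
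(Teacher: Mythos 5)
Your proposal is correct and follows essentially the same route as the paper: an L-reduction from \textsc{Semitotal Dominating Set} on cubic graphs (\Cref{semicubic}) via a $5$-subdivision, using \Cref{subdsemi} to get $\gamma_{t2}(G')=\gamma_{t2}(G)+2m$, the bound $n\leq 4\gamma(G)\leq 4\gamma_{t2}(G)$ for the first L-reduction condition, and the backward construction of \Cref{subdsemi} for the second. Your extra remarks on bipartiteness and on why the lemma iterates cleanly are sound and only make the argument slightly more explicit than the paper's.
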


\begin{proof} We construct an L-reduction from \textsc{Semitotal Dominating Set} restricted to cubic graphs, which is $\mathsf{APX}$-complete by \Cref{semicubic}. Given an instance $G$ of this problem with $n$ vertices and $m$ edges, we build a subcubic bipartite graph $G'$ by taking a $5$-subdivision of $G$. By \Cref{subdsemi}, $\gamma_{t2}(G') = \gamma_{t2}(G) + 2m = \gamma_{t2}(G) + 3n$. Since $n \leq 4\gamma(G) \leq 4\gamma_{t2}(G)$, we have that $\gamma_{t2}(G') \leq 13\gamma_{t2}(G)$. Moreover, by the proof of \Cref{subdsemi}, given a semi-TD-set $D'$ of $G'$ we can find in polynomial time a semi-TD-set $D$ of $G$ such that $|D| \leq |D'| - 2m$. Therefore, $|D| - \gamma_{t2}(G) \leq |D'| - 2m - \gamma_{t2}(G) = |D'| - \gamma_{t2}(G')$, thus concluding the proof. 
\end{proof}

\section{The reduction from \textsc{Semitotal Dominating Set} to \textsc{Total Dominating Set}}\label{reduc}

In this short section, we introduce a graph transformation providing a polynomial-time reduction from \textsc{Semitotal Dominating Set} to \textsc{Total Dominating Set}. In \Cref{mim,dually} we will then observe that the class of graphs with bounded mim-width and the class of dually chordal graphs are both closed under this transformation. We remark that a similar strategy has been adopted in \citep{KS97} in order to solve \textsc{Total Dominating Set} for graph classes closed under addition of false twins by reducing it to \textsc{Dominating Set}. 

Given a graph $G = (V, E)$, we construct the transformed graph $G' = (V', E')$ as follows. The vertex set $V'$ of $G'$ consists of two disjoint copies $V_1$ and $V_2$ of $V$. For a vertex $v \in V$ and $i \in \{1, 2\}$, we denote by $v_i$ the copy of $v$ in $V_i$. A vertex $v_1 \in V_1$ is adjacent in $G'$ to every vertex $u_2 \in V_2$ such that $u \in N_G[v]$. In addition, a vertex $v_2 \in V_2$ is adjacent to every vertex $u_2 \in V_2$ such that $u \in N_G^2(v)$. In other words, $G'$ is obtained from $G$ by first adding a true twin for each vertex of $G$ and then adding edges between vertices of $G$ at distance $2$. In particular, $G'[V_2]$ is isomorphic to $G^2$. Clearly, $G'$ can be constructed in $O(\vert V \vert \cdot \vert E \vert)$ time.

We now show that we can efficiently obtain a minimum semitotal dominating set of $G$ from a minimum total dominating set of $G'$:

\begin{lemma}\label{hat}
For any graph $G$, we have $\gamma_{t2}(G) = \gamma_t(G')$. Moreover, a minimum semitotal dominating set of $G$ can be obtained from a minimum total dominating set of $G'$ in linear time.
\end{lemma}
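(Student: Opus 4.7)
The plan is to establish the equality $\gamma_{t2}(G) = \gamma_t(G')$ as two inequalities, from which the linear-time extraction can be read off the constructions used in the proof.

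For the easy direction $\gamma_t(G') \le \gamma_{t2}(G)$, given a semi-TD-set $D$ of $G$, I would set $D' = \{v_2 : v \in D\} \subseteq V_2$ and verify that $D'$ is a TD-set of $G'$ of size $|D|$: every $u_1 \in V_1$ is dominated via the $V_1$-$V_2$ edge $u_1 v_2$, where $v \in D \cap N_G[u]$; every $u_2 \in V_2 \setminus D'$ is dominated by $v_2 \in D'$, where $v \in D$ is any neighbour of $u$ in $G$ (so $d_G(u,v)=1 \le 2$ forces $u_2 v_2 \in E'$); and each $v_2 \in D'$ has $w_2 \in D'$ as a neighbour, where $w \in D$ is a semi-TD-witness of $v$ in $G$.

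For the converse inequality $\gamma_{t2}(G) \le \gamma_t(G')$, the key structural fact is that $N_{G'}[v_1] \subseteq N_{G'}[v_2]$ for every $v \in V$, which is immediate from the construction (the distance-$\le 1$ neighbourhood of $v$ in $G$ sits inside its distance-$\le 2$ neighbourhood). Using this, I would iteratively transform an arbitrary TD-set $D'$ of $G'$ into a TD-set of no greater size contained in $V_2$, by repeatedly picking a vertex $v_1 \in D' \cap V_1$ and applying one of the following rewrites: (i) if $v_2 \notin D'$, replace $v_1$ by $v_2$; (ii) if $v_2 \in D'$ has another neighbour in $D' \setminus \{v_1\}$, simply delete $v_1$; (iii) otherwise, the only $D'$-neighbour of $v_2$ is $v_1$ and, since $G$ has no isolated vertex, any $w \in N_G(v)$ satisfies $w_2 \notin D'$, so I would replace $v_1$ by $w_2$. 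Each rewrite preserves the TD property (vertices losing $v_1$ as their $D'$-neighbour pick up either $v_2$ or $w_2$, using $N_{G'}[v_1] \subseteq N_{G'}[v_2]$ and the adjacency of $w_2$ with $v_2$ and with every $u_2$ for $u \in N_G(v)$), does not increase $|D'|$, and strictly decreases $|D' \cap V_1|$. Once the process terminates with $D' \subseteq V_2$, the set $D = \{v : v_2 \in D'\}$ is a semi-TD-set of $G$ of the same size, with domination read off the $V_1$-$V_2$ edges and witnesses read off the distance-$2$ edges inside $V_2$.

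The main obstacle is case (iii) of the lift. The naive projection $D = \{v \in V : v_1 \in D' \text{ or } v_2 \in D'\}$ is not a semi-TD-set in general: for $G = P_3$ with central vertex $b$, the set $\{b_1, b_2\}$ is a minimum TD-set of $G'$ but projects to the singleton $\{b\}$, which has no witness in $P_3$. Substituting $w_2$ in case (iii) is exactly what is needed to inject a witness for $v$ in $V_2$ while keeping $v_2$ totally dominated in $G'$. Since each rewrite is local and can be carried out in amortised constant time after setting up the incidence structure of $G'$, the whole extraction runs in linear time, as claimed.
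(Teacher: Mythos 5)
Your proof is correct and follows essentially the same route as the paper's: the first direction (embedding a semi-TD-set of $G$ into $V_2$) is identical, and the converse rests on the same key observation $N_{G'}[v_1]\subseteq N_{G'}[v_2]$, exploited via a size-non-increasing local exchange followed by a projection. The only cosmetic difference is that you normalise the total dominating set all the way into $V_2$ before projecting, whereas the paper merely ensures that no vertex has both of its copies selected and then projects from both $V_1$ and $V_2$.
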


\begin{proof}
Consider a minimum semitotal dominating set $S$ of $G$. We claim that $S' = \{v_2 \in V_2: v \in S\}$ is a total dominating set of $G'$. Since $G'[V_2]$ is isomorphic to $G^2$, $S'$ is dominating in $G'[V_2]$ and every vertex in $S'$ has a neighbour in $S'$, as $S$ is a semitotal dominating set. Consider now $v_1 \in V_1$. Either $v \in S$ or there exists $u \in S$ dominating $v$ in $G$. In the former case, $v_1$ is dominated by $v_2$ in $G'$, while in the latter $v_1u_2 \in E'$ and $u_2 \in S'$ by construction. Therefore, $S'$ is a total dominating set of $G'$ and $\gamma_t(G') \leq \gamma_{t2}(G)$.

Conversely, let $S'$ be a minimum total dominating set of $G'$. Observe first that we can assume, without loss of generality, that $v_1$ and $v_2$ are not both in $S'$. Indeed, since $N_{G'}[v_1] \subseteq N_{G'}[v_2]$, if both $v_1$ and $v_2$ are in $S'$, then $(S' \setminus \{v_1\}) \cup \{u_2\}$ with $u_2 \in N_{G'}(v_1) \setminus \{v_2\}$ is a minimum total dominating set of $G'$.  

We now claim that $S = \{v \in V: \{v_1,v_2\} \cap S' \neq \varnothing\}$ is a dominating set of $G$. Indeed, consider $v \in V$. If $\{v_1,v_2\} \cap S' \neq \varnothing$, then $v \in S$. Otherwise, there exists $u_2\in S'$ distinct from $v_{2}$ and dominating $v_1$ in $G'$. But then $u \in S$ and $u$ dominates $v$ in $G$. We finally claim that each vertex in $S$ is within distance two of another. Indeed, consider $v \in S$. If $v_1 \in S'$, there exists $u_2 \in S'$, with $u_2 \neq v_2$ by assumption, such that $u_2v_1 \in E'$. This implies that $u \in S$ and $uv \in E$. On the other hand, if $v_2 \in S'$, there exists $w' \in S'$ distinct from $v_1$ and such that $w'v_{2} \in E'$. But $w'$ is the copy of a vertex $w \in V$ which by construction belongs to $S$ and is at distance at most $2$ from $v$. Therefore, $S$ is a semitotal dominating set of $G$ and $\gamma_{t2}(G) \leq \gamma_t(G')$, thus concluding the proof of the first assertion. 

Consider now the second assertion and let $S'$ be a minimum total dominating set of $G'$. As long as there exist $v_1$ and $v_2$ both in $S'$, we update $S' = (S' \setminus \{v_1\}) \cup \{u_2\}$, for some $u_2 \in N_{G'}(v_1) \setminus \{v_2\}$, and return the set $S = \{v \in V: \{v_1,v_2\} \cap S' \neq \varnothing\}$. By the previous paragraphs, $S$ is a minimum semitotal dominating set of $G$.    
\end{proof}

\section{Graphs of bounded mim-width}\label{mim}

The maximum induced matching width (mim-width for short) is a graph parameter introduced by \citet{Vat12}
measuring how easy it is to decompose a graph along vertex cuts inducing a bipartite graph with small maximum induced matching size. The modelling power of mim-width is stronger than that of tree-width and clique-width, in the sense that graphs with bounded clique-width have bounded mim-width but there exist graph classes (interval graphs and permutation graphs) with mim-width $1$ \citep{Vat12} and unbounded clique-width \citep{GR00}. 

Mim-width has important consequences for the so-called $(\sigma, \rho)$-domination problems, a subclass of graph problems expressible in $\mbox{MSO}_{1}$ introduced by \citet{TP97} as follows. Given two finite or co-finite subsets $\sigma$ and $\rho$ of $\mathbb{N}$ and a graph $G$, a vertex set $S \subseteq V(G)$ is a \textit{$(\sigma, \rho)$-dominator} if the following two conditions hold:
\begin{itemize}
\item $|N(v) \cap S| \in \sigma$, for each $v \in S$;
\item $|N(v) \cap S| \in \rho$, for each $v \in V(G)\setminus S$.
\end{itemize}

For example, a $(\{0\}, \mathbb{N})$-dominator is an independent set, a $(\mathbb{N}, \mathbb{N}^{+})$-dominator is a
dominating set and a $(\mathbb{N}^{+}, \mathbb{N}^{+})$-dominator is a total dominating set. An algorithmic problem is a \textit{$(\sigma, \rho)$-domination problem} if the property in question can be described by a $(\sigma, \rho)$-dominator. 

Combining results in \citep{BV13,BTV13}, it is known that the three versions of a $(\sigma, \rho)$-domination problem (minimisation, maximisation, existence) can be solved in $O(n^{w})$ time, assuming a decomposition tree with mim-width $w$ is provided as part of the input. It should be remarked that deciding the mim-width of a graph is $\mathsf{NP}$-hard in general and not in $\mathsf{APX}$ unless $\mathsf{NP} = \mathsf{ZPP}$ \citep{SV16}. However, \citet{BV13} showed that it is possible to find decomposition trees of constant mim-width in polynomial time for the following classes of graphs: permutation graphs, convex graphs and their complements, interval graphs and their complements, (circular $k$-) trapezoid graphs, circular permutation graphs, Dilworth-$k$ graphs, $k$-polygon graphs, circular arc graphs and complements of $d$-degenerate graphs.

\citet{JKST18} showed that the distance-$r$ version of a $(\sigma, \rho)$-domination problem (i.e. the version obtained by replacing $N(v)$ with $N^{r}(v)$) can be polynomially reduced to the $(\sigma, \rho)$-domination problem. The key ingredient for their reduction is the fact that, for any positive integer $r$, the mim-width of the graph power $G^{r}$ is at most twice the mim-width of $G$. Note that \textsc{Semitotal Dominating Set} is not a distance-$r$ $(\sigma, \rho)$-domination problem but it is in some sense a combination of a distance-$1$ $(\mathbb{N}, \mathbb{N}^{+})$-domination problem with a distance-$2$ $(\mathbb{N}^{+}, \mathbb{N}^{+})$-domination problem.  

\citet{HP17} asked for the complexity of \textsc{Semitotal Dominating Set} for the following two subclasses of chordal bipartite graphs: bipartite permutation graphs and convex bipartite graphs. In this section, we answer their question and in fact prove a more general result: \textsc{Semitotal Dominating Set} can be solved in $O(n^{w})$ time, assuming a decomposition tree with mim-width $w$ is provided as part of the input. We use the transformation introduced in \Cref{reduc} and show that, for any graph $G$ not isomorphic to $tK_{1}$, the mim-width of $G'$ is at most twice that of $G$. Combined with the results mentioned above, this immediately implies the claimed polynomial-time algorithm.  

In order to prove our results, we first properly define the notion of mim-width. A \textit{decomposition tree} for a graph $G$ is a pair $(T, \delta)$, where $T$ is a subcubic tree and $\delta$ is a bijection between the vertices of $G$ and the leaves of $T$. Each edge $e \in E(T)$ naturally splits the leaves of the tree in two groups depending on their component when $e$ is removed. In this way, each edge $e \in E(T)$ also represents a partition of $V(G)$ into two partition classes $A_{e}$ and $\overline{A_{e}}$, denoted by $(A_{e}, \overline{A_{e}})$. Denoting by $G[X, Y]$ the bipartite subgraph of $G$ induced by the edges with one endpoint in $X$ and the other in $Y$, we define the mim-width of $G$ as follows:

\begin{definition} Let $G$ be a graph and $(T, \delta)$ a decomposition tree for $G$. For each edge $e \in E(T)$ and the corresponding partition of the vertices $(A_{e}, \overline{A_{e}})$, we denote by $\textit{cutmim}_{G}(A_{e}, \overline{A_{e}})$ the size of a maximum induced matching in $G[A_{e}, \overline{A_{e}}]$. The mim-width of the
decomposition tree $(T, \delta)$ is the quantity $\textit{mimw}_{G}(T, \delta) = \max_{e \in E(T)}\textit{cutmim}_{G}(A_{e}, \overline{A_{e}})$. The mim-width $\textit{mimw}(G)$ of the graph $G$ is the minimum value of $\textit{mimw}_{G}(T, \delta)$ over all possible decompositions trees $(T, \delta)$ for $G$. 
\end{definition}

As mentioned above, the mim-width of $G^{2}$ is at most twice the mim-width of $G$. We now show that the same bound holds for the transformed graph $G'$ introduced in \Cref{dually}:   

\begin{theorem}\label{mimG'} For any graph $G$ not isomorphic to $tK_{1}$, $\textit{mimw}(G') \leq 2 \cdot \textit{mimw}(G)$.  
\end{theorem}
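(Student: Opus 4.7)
The plan is to identify $G'$ with the square of a carefully chosen auxiliary graph whose mim-width is no larger than that of $G$, and then to invoke the graph-power bound of \citet{JKST18} recalled earlier in this section. Let $\hat G$ be the graph on vertex set $V_1 \cup V_2$ with edge set $\{v_1 v_2 : v \in V\} \cup \{u_2 v_2 : uv \in E(G)\}$; in other words, $\hat G$ is a copy of $G$ placed on $V_2$ together with a pendant $v_1$ attached to each $v_2$.

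First I would verify that $\hat G^2 = G'$. Since each $v_1$ is a leaf of $\hat G$, distances in $\hat G$ between two vertices of $V_2$ coincide with their $G$-distances, so $u_2 v_2 \in E(\hat G^2)$ iff $0 < d_G(u, v) \leq 2$. A shortest path from $v_1$ to $u_2$ has length $1$ if $u = v$ and length $2$ if $uv \in E(G)$, so $v_1 u_2 \in E(\hat G^2)$ iff $u \in N_G[v]$. Finally, for distinct $u_1, v_1 \in V_1$ every walk between them must traverse $v_2$ and $u_2$ and therefore has length at least $3$, so $V_1$ induces no edges in $\hat G^2$. Comparing with the definition of $G'$ gives the identity.

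Next I would show $\textit{mimw}(\hat G) \leq \textit{mimw}(G)$. Given a decomposition tree $(T, \delta)$ for $G$ with $\textit{mimw}_G(T, \delta) = w$, construct $(T', \delta')$ for $\hat G$ by replacing every leaf $\ell_v$ of $T$ with an internal node carrying two new leaves mapped to $v_1$ and $v_2$. Each newly introduced edge isolates a single vertex of $V(\hat G)$ on one side of the corresponding cut, so its cut mim is at most $1$. For an original edge $e \in E(T)$ with partition $(A, B)$, the corresponding edge of $T'$ realises the partition $(A_1 \cup A_2, B_1 \cup B_2)$ of $V(\hat G)$; the only crossing edges of $\hat G$ lie in $\hat G[A_2, B_2]$, which is isomorphic to $G[A, B]$, so the cut mim equals $\textit{cutmim}_G(A, B) \leq w$. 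Since $G \neq tK_{1}$ has at least one edge, $w \geq 1$, and hence $\textit{mimw}(\hat G) \leq \max(w, 1) = w$.

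Combining the two steps with the bound $\textit{mimw}(H^2) \leq 2\,\textit{mimw}(H)$ of \citet{JKST18}, I conclude $\textit{mimw}(G') = \textit{mimw}(\hat G^2) \leq 2\,\textit{mimw}(\hat G) \leq 2\,\textit{mimw}(G)$. The only point requiring real care is the identity $\hat G^2 = G'$, namely that the pendants produce exactly the prescribed $V_1$-$V_2$ adjacencies of $G'$ without creating spurious $V_1$-$V_1$ edges; once this is in place, the decomposition-tree construction for $\hat G$ together with the cited squaring bound finishes the argument, and the hypothesis $G \neq tK_{1}$ serves precisely to force $\textit{mimw}(G) \geq 1$ so that Step~2 goes through.
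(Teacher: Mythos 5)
Your proof is correct, and it takes a genuinely different route from the paper's. Both arguments build the same decomposition tree for $G'$ (split each leaf $x_v$ of $T$ into two new leaves for $v_1$ and $v_2$), but from there the paper works directly on $G'$: it takes a maximum induced matching $M'$ in a cut $G'[A,\overline{A}]$, first normalises it so that no matching edge is of the form $u_2v_2$ with $d_G(u,v)=2$, and then extracts, via a case analysis on the types of remaining matching edges, an induced matching of $G[B,\overline{B}]$ of size at least $|M'|/2$. You instead factor the transformation as $G' = \hat{G}^2$, where $\hat{G}$ is $G$ with a pendant vertex attached to each vertex; your verification of this identity is sound (leaves cannot lie on shortest paths between $V_2$-vertices, $d_{\hat{G}}(v_1,u_2)=1+d_G(v,u)$, and $d_{\hat{G}}(u_1,v_1)\geq 3$), your observation that attaching pendants does not increase the mim-width of an optimal decomposition tree is correct (the new leaf-edges have cut-mim $1 \leq w$ because $G$ has an edge, and the old cuts see exactly $G[A,B]$ plus isolated $V_1$-vertices), and the factor $2$ is then delegated entirely to the power bound of \citet{JKST18}, which the paper itself recalls in the same section. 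What each approach buys: the paper's argument is self-contained and does not lean on the power lemma; yours is shorter and conceptually cleaner, and since the power bound holds for the very same decomposition tree, your proof is just as constructive as the paper's and therefore serves equally well as input to \Cref{boundmimsemi}.
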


\begin{proof} We build a decomposition tree $(T', \delta')$ for $G'$ by growing a decomposition tree $(T, \delta)$ for $G$ as follows. Recall first that $V(G') = V_{1} \cup V_{2}$, where each $v_{i} \in V_{i}$ corresponds to $v \in V(G)$. Denoting by $x_{v}$ the image of $v \in V(G)$ under $\delta$, we add the vertices $x_{v_{1}}$ and $x_{v_{2}}$ to $V(T)$ and the edges $x_{v}x_{v_{1}}$ and $x_{v}x_{v_{2}}$ to $E(T)$. Repeating this procedure for each leaf $x_{v}$ of $T$, we clearly obtain a decomposition tree for $G'$. We now show that the resulting $(T', \delta')$ has mim-width at most $2 \cdot \textit{mimw}_{G}(T, \delta)$. This would conclude the proof.  

Consider an edge $e'$ of $T'$ and the corresponding partition $(A_{e'}, \overline{A_{e'}})$ such that $\textit{cutmim}_{G'}(A_{e'}, \overline{A_{e'}})$ attains the maximum over all edges of $T'$, i.e. $\textit{mimw}_{G'}(T', \delta') = \textit{cutmim}_{G'}(A_{e'}, \overline{A_{e'}})$, and let $M'$ be a maximum induced matching in $G'[A_{e'}, \overline{A_{e'}}]$. If $e'$ is incident to a leaf of $T'$, then $|M'| = 1$, and since there exists $v \in V(G)$ such that $d_{G}(v) \geq 1$, we have $\textit{mimw}_{G}(T, \delta) \geq 1$. Therefore, by construction, we may assume that $e' \in E(T)$. Consider now the cut $(B_{e'}, \overline{B_{e'}})$ of $G$ obtained by removing $e'$ from $T$. It is easy to see that it can be obtained from $(A_{e'}, \overline{A_{e'}})$ by replacing each pair of vertices $v_{1}, v_{2}$ with $v$. In the following, we simply denote $(A_{e'}, \overline{A_{e'}})$ and $(B_{e'}, \overline{B_{e'}})$ by $(A, \overline{A})$ and $(B, \overline{B})$, respectively, and we show how to construct an induced matching $M$ in $G[B, \overline{B}]$ such that $|M| \geq |M'|/2$. 

We first claim that there exists a maximum induced matching in $G'[A, \overline{A}]$ containing no edge $u_{2}v_{2}$ such that $d_{G}(u, v) = 2$. Indeed, let $M'$ be a maximum induced matching in $G'[A, \overline{A}]$ containing the minimum number $m'$ of such edges. Suppose to the contrary that $m' > 0$. Let $u_{2}v_{2}$ be an edge as above and consider a vertex $w \in V(G)$ such that $d_{G}(u, w) = d_{G}(w, v) = 1$. Without loss of generality, $\{w_{1}, w_{2}\}$ belongs to the same partition class as $\{v_{1}, v_{2}\}$. We show that $M'' = (M' \setminus \{u_{2}v_{2}\}) \cup \{u_{2}w_{1}\}$ is an induced matching in $G'[A, \overline{A}]$, thus obtaining a contradiction. Observe first that $w_{1}$ is not matched in $M'$. Indeed, every neighbour of $w_{1}$ in $G'$ is also a neighbour of $v_{2}$. Suppose now, to the contrary, that $M''$ is not an induced matching in $G'[A, \overline{A}]$. Since $M'$ is an induced matching, $w_{1}$ is adjacent to some vertex $y_{2} \neq u_{2}$ (in the partition class containing $u_{2}$) which is matched in $M'$. But then $y_{2}v_{2} \in E(G')$, contradicting the fact that $M'$ is induced. 

Therefore, there exists a maximum induced matching $M'$ in $G'[A, \overline{A}]$ containing no edge $u_{2}v_{2}$ such that $d_{G}(u, v) = 2$. We begin the construction of $M$ by adding all edges $uv \in E(G)$ such that $u_{2}v_{2} \in M'$. Note that, since $M'$ is induced, $u_{1}$ and $v_{1}$ are not matched in $M'$. Let now $N' \subseteq M'$ be the subset containing edges of the form $u_{1}v_{2}$, with $u_{1} \in V_{1}$ and $v_{2} \in V_{2}$. Note that, for each $u_{1}v_{2} \in N'$, $d_{G}(u, v) = 1$. Moreover, since $M'$ is induced, $u_{2}$ and $v_{1}$ are not matched in $M'$. Without loss of generality, we may assume that at least half of the edges in $N'$ are such that their endpoints in $V_{2}$ belong to $A$. We then add to $M$ the set $\{uv : u_{1}v_{2} \in N' \ \mbox{and} \ v_{2} \in A\}$. 

Clearly, $|M| \geq |M'|/2$ and so it remains to show that $M'$ is an induced matching in $G[B, \overline{B}]$. Therefore, consider $\{uv, st\} \subseteq M$. If $u_{2}v_{2} \in M'$ and $s_{2}t_{2} \in M'$ then, without loss of generality, $v_{2}$ and $t_{2}$ belong to the same partition class. Since $M'$ is induced, $u_{2}t_{2} \notin E(G')$ and $s_{2}v_{2} \notin E(G')$ and so $d_{G}(u, t) > 2$ and $d_{G}(s, v) > 2$. On the other hand, if $u_{1}v_{2} \in M'$ and $s_{2}t_{2} \in M'$ then, without loss of generality, $v_{2}$ and $t_{2}$ belong to the same partition class. This implies that $u_{1}t_{2} \notin E(G')$ and $s_{2}v_{2} \notin E(G')$ and so $d_{G}(u, t) > 1$ and $d_{G}(s, v) > 2$. Finally, if $u_{1}v_{2} \in M'$ and $s_{1}t_{2} \in M'$ then, by construction, $t_{2}$ and $v_{2}$ belong to the same partition class. Therefore, $u_{1}t_{2} \notin E(G')$ and $s_{1}v_{2} \notin E(G')$ and so $d_{G}(u, t) > 1$ and $d_{G}(s, v) > 1$. This concludes the proof. 
\end{proof}

The following result was obtained by rephrasing earlier results in \citep{BV13,BTV13}:

\begin{theorem}[\citet{JKST18}]\label{total} There is an algorithm that, given a graph $G$ and a decomposition tree $(T, \delta)$ of $G$ with $w = \textit{mimw}_{G}(T, \delta)$, solves \textsc{Total Dominating Set} in $O(n^{4 + 3w})$ time.
\end{theorem}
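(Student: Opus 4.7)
The plan is to invoke the general dynamic-programming framework for $(\sigma,\rho)$-domination problems over mim-width decompositions, specialised to Total Dominating Set. As noted just before the statement, Total Dominating Set corresponds to the $(\sigma,\rho)$-domination problem with $\sigma = \rho = \mathbb{N}^{+}$. The only information one needs from the neighbourhood of a vertex with respect to a set $S$ is whether it is empty or non-empty, so the ``distance parameter'' is $d=1$.

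First I would formalise, for each edge $e \in E(T)$ and the corresponding cut $(A_e, \overline{A_e})$, the $1$-neighbour equivalence $\equiv_{A_e}$: two sets $S, S' \subseteq A_e$ are equivalent iff for every $v \in \overline{A_e}$ we have $N(v) \cap S = \varnothing \Longleftrightarrow N(v) \cap S' = \varnothing$. The key combinatorial ingredient, established in \citep{BV13,BTV13}, is that the number of $\equiv_{A_e}$-classes is at most $n^w$ whenever $\textit{cutmim}_G(A_e,\overline{A_e}) \le w$, and that a system of representatives together with a data structure answering ``to which class does $S \cup \{v\}$ belong?'' can be computed in polynomial time. This bound is what makes mim-width the right parameter here.

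Next I would design the bottom-up dynamic program on $(T,\delta)$. For each node $t$ with associated cut $(A_t, \overline{A_t})$, maintain a table $\text{tab}_t$ indexed by pairs $(R, R')$, where $R$ ranges over representatives of $\equiv_{A_t}$-classes on $A_t$ and $R'$ over representatives on $\overline{A_t}$. The entry $\text{tab}_t[R,R']$ stores the minimum size of a set $S \subseteq A_t$ such that $S \equiv_{A_t} R$ and such that every vertex $v \in A_t$ either already has a neighbour in $S$ (witnessing the total-domination requirement internally) or will receive one from $\overline{A_t}$ compatibly with the class $R'$, and dually every $v \in \overline{A_t} \setminus S$ is either dominated by $R$ or must be dominated from inside $\overline{A_t}$. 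Leaves are initialised in constant time. At an internal node $t$ with children $t_1, t_2$, one combines $\text{tab}_{t_1}$ and $\text{tab}_{t_2}$ by iterating over all compatible triples of representatives (one per cut among the three cuts incident to $t$); compatibility is determined locally by the three equivalences. This gives at most $n^{w} \cdot n^{w} \cdot n^{w} = n^{3w}$ triples per node, each processed in $O(n)$ time for evaluating membership and updating the minimum, and $O(n)$ nodes in total, yielding the claimed $O(n^{4+3w})$ bound. Reading off the optimum from the root's table gives $\gamma_t(G)$.

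The main obstacle, and the real content of the cited framework, is twofold: bounding by $n^w$ the number of $1$-neighbour equivalence classes on a cut of mim-width $w$, and showing that the join step above can be carried out while respecting these equivalences, i.e.\ that compatibility between the three classes incident to a node depends only on their representatives. Both are handled by the machinery of \citep{BV13,BTV13}, so once these black boxes are in place, the algorithm for Total Dominating Set follows by the routine specialisation $\sigma = \rho = \mathbb{N}^{+}$, $d = 1$.
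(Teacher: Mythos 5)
The paper offers no proof of this theorem at all: it is imported verbatim as a citation of \citet{JKST18}, who in turn rephrase the $(\sigma,\rho)$-domination framework of \citep{BV13,BTV13}. Your sketch is a faithful outline of exactly that machinery, specialised to $\sigma=\rho=\mathbb{N}^{+}$ and $d=1$ (with the $n^{w}$ bound on $1$-neighbour equivalence classes and the join over triples of representatives as the correctly identified black boxes), so it matches the paper's approach.
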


Combining \Cref{mimG'} and the algorithm above, we can immediately obtain a polynomial-time algorithm for \textsc{Semitotal Dominating Set}: 

\begin{theorem}\label{boundmimsemi} There is an algorithm that, given a graph $G$ and a decomposition tree $(T, \delta)$ of $G$ with $w = \textit{mimw}_{G}(T, \delta)$, solves \textsc{Semitotal Dominating Set} in $O(n^{4 + 6w})$ time.
\end{theorem}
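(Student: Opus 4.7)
The plan is to assemble this theorem directly from the three tools developed earlier: the reduction of \Cref{hat}, the mim-width bound of \Cref{mimG'}, and the $(\sigma,\rho)$-domination algorithm of \Cref{total}. Given $G$ (which must have no isolated vertex for semitotal domination to be defined, so in particular $G \neq tK_1$) and its decomposition tree $(T,\delta)$ of mim-width $w$, I would proceed in four steps. First, construct the transformed graph $G'$ from $G$ in $O(|V(G)|\cdot|E(G)|)$ time as described in \Cref{reduc}. Second, build the decomposition tree $(T',\delta')$ for $G'$ by the procedure in the proof of \Cref{mimG'}: for each leaf $x_v$ of $T$, attach two new children $x_{v_1}$ and $x_{v_2}$ corresponding to the two copies $v_1, v_2$ of $v$ in $V(G')$. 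This is linear in $n$, and by \Cref{mimG'} the resulting tree satisfies $\textit{mimw}_{G'}(T',\delta') \leq 2w$.

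Third, invoke the algorithm of \Cref{total} on $(G', (T',\delta'))$ to compute a minimum total dominating set $S'$ of $G'$. Since $|V(G')| = 2n$ and the mim-width of the supplied tree is at most $2w$, this step runs in time $O\bigl((2n)^{4+3\cdot 2w}\bigr) = O(n^{4+6w})$, which dominates the cost of the previous steps. Fourth, apply the linear-time post-processing described in the second half of \Cref{hat} to convert $S'$ into a minimum semitotal dominating set $S$ of $G$; by the first half of \Cref{hat} we indeed have $|S| = \gamma_{t2}(G) = \gamma_t(G') = |S'|$.

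Since every ingredient is already in place, there is no real obstacle here; the only thing worth checking is that the decomposition tree for $G'$ can be produced explicitly (so that \Cref{total} is applicable) rather than merely shown to exist, but this is immediate from the constructive nature of the proof of \Cref{mimG'}. The overall running time is therefore $O(n^{4+6w})$, as claimed.
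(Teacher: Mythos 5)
Your proposal is correct and follows exactly the same route as the paper: construct $G'$ and the decomposition tree $(T',\delta')$ as in \Cref{mimG'}, run the algorithm of \Cref{total} on $G'$ (whose supplied tree has mim-width at most $2w$), and recover a minimum semitotal dominating set via \Cref{hat}. The running-time accounting also matches the paper's, so there is nothing to add.
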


\begin{proof} Let $G$ be the input graph and $(T, \delta)$ the given decomposition tree with $w = \textit{mimw}_{G}(T, \delta)$. Clearly, we may assume that $G$ is not isomorphic to $tK_{1}$. The algorithm proceeds as follows. We first compute the graph $G'$ and the decomposition tree $(T', \delta')$ of $G'$ as in \Cref{mimG'}. We then run the algorithm given by \Cref{total} to solve \textsc{Total Dominating Set} in $G'$ with decomposition tree $(T', \delta')$ and apply \Cref{hat} in order to find a minimum semitotal dominating set of $G$ from a minimum total dominating set of $G'$. 

Correctness is guaranteed by \Cref{hat} and \Cref{mimG'}. As for the running time, computing $G'$ and $(T', \delta')$ from $G$ and $(T, \delta)$ can be done in $O(n^{3})$ time. Moreover, by the proof of \Cref{mimG'}, $\textit{mimw}_{G'}(T', \delta') \leq 2w$ and so, by \Cref{total}, the second step can be executed in $O(n^{4 + 6w})$ time.
\end{proof}

\section{Dually chordal graphs}\label{dually}

In this section we show that the class of dually chordal graphs is closed under the transformation introduced in \Cref{reduc}. In view of \Cref{hat}, in order to provide a polynomial-time algorithm for \textsc{Semitotal Dominating Set} restricted to dually chordal graphs, it would then be enough to apply a polynomial-time algorithm for \textsc{Total Dominating Set}  restricted to that class. \citet{KS97} claimed such an algorithm by a reduction to \textsc{Dominating Set}. Their reasoning is based on the fact that if $G$ is dually chordal, then the lexicographic product $G[2K_{1}]$ is dually chordal as well. However, $P_{3}$ is dually chordal but $P_{3}[2K_{1}]$ is not and so it remains an open problem to determine whether \textsc{Total Dominating Set} and \textsc{Semitotal Dominating Set} are polynomial-time solvable for dually chordal graphs. Note that dually chordal graphs do not have bounded mim-width. In fact, \citet{Men17} showed that the classes of strongly chordal split graphs, co-comparability graphs and circle graphs do not have bounded mim-width: there exist infinite subfamilies of such graph classes having mim-width bounded from below linearly in the number of vertices.  

Before embarking on the task above, we recall some definitions. 

\begin{definition}[\citet{BDCV98}]\label{disk}
A graph $G$ is dually chordal if there exists a spanning tree $T$ of $G$ such that any disk of $G$ induces a subtree in $T$. 
\end{definition}

Observe that every graph with a dominating vertex is dually chordal and so the class of dually chordal graphs is not hereditary and it is not a subclass of perfect graphs. Moreover, the previous observation readily implies that many $\mathsf{NP}$-hard graph problems remain $\mathsf{NP}$-hard for dually chordal graphs: for example \textsc{Vertex Cover}, \textsc{Colouring}, $k$-\textsc{Colouring} for each $k \geq 4$ (however, $3$-\textsc{Colouring} is in $\mathsf{P}$ \citep{Lei17}), \textsc{Clique} and \textsc{Clique Cover}. On the other hand, some variants of \textsc{Dominating Set} become polynomial-time solvable when restricted to this class \citep{BCD98,BFLM15}.

Dually chordal graphs have been introduced as a generalisation of strongly chordal graphs. In order to define this graph class, we need the following notion. For $n \geq 3$, the \textit{$n$-sun} $S_{n}$ is the graph on $2n$ vertices whose vertex set can be partitioned into an independent set $W = \{w_{1}, \dots, w_{n}\}$ and a clique $U = \{u_{1}, \dots, u_{n}\}$ and such that, for each $i$ and $j$, $w_{i}$ is adjacent to $u_{j}$ if and only if $i = j$ or $i \equiv j + 1 \pmod n$. 

\begin{definition}[\citet{Far83}] A graph is strongly chordal if it is chordal and sun-free, i.e. $S_{n}$-free for each $n \geq 3$. 
\end{definition}

It turns out that strongly chordal graphs are exactly the hereditary dually chordal graphs:

\begin{theorem}[\citet{BDCV98}]
A graph $G$ is strongly chordal if and only if every induced subgraph of $G$ is dually chordal.
\end{theorem}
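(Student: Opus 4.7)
The forward direction is almost immediate from two classical ingredients. First, strongly chordal graphs form a hereditary class, since both chordality and sun-freeness (i.e.\ $\{C_k : k \geq 4\}$-freeness and $\{S_n : n \geq 3\}$-freeness) are closed under vertex deletion. Second, every strongly chordal graph is dually chordal: one can take a strong elimination ordering and show that the associated compatible tree witnesses \Cref{disk}. Combining the two, every induced subgraph of a strongly chordal $G$ is itself strongly chordal and therefore dually chordal.

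For the backward direction I would argue by contrapositive: assume $G$ is not strongly chordal and exhibit an induced subgraph of $G$ that is not dually chordal. By the definition of strongly chordal, $G$ contains as an induced subgraph either a hole $C_n$ with $n \geq 4$ or an $n$-sun $S_n$ with $n \geq 3$, so it suffices to show that neither $C_n$ ($n \geq 4$) nor $S_n$ ($n \geq 3$) is dually chordal. For $C_n$ this is easy: any spanning tree is obtained by deleting a single edge, hence is a Hamiltonian path $v_0 v_1 \cdots v_{n-1}$, and then the disk $N[v_0] = \{v_{n-1}, v_0, v_1\}$ fails to induce a subtree because the intermediate vertices needed to connect $v_1$ to $v_{n-1}$ lie outside it.

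The heart of the argument is $S_n$. Write $U = \{u_1, \dots, u_n\}$ for the central clique and $W = \{w_1, \dots, w_n\}$ for the independent set, with $w_i$ adjacent exactly to $u_i$ and $u_{i-1}$ (indices mod $n$). Fix any spanning tree $T$ of $S_n$. For each $i$, the disk $N[w_i] = \{w_i, u_i, u_{i-1}\}$ must induce a subtree of $T$, and since $w_i$ has only the neighbours $u_i$ and $u_{i-1}$ in $S_n$, this forces either the edge $u_i u_{i-1}$ to lie in $T$ or both edges $w_i u_i$ and $w_i u_{i-1}$ to lie in $T$. If the first alternative held for all $i$ then $T$ would contain the cycle $u_1 u_2 \cdots u_n u_1$, impossible. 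Hence there is some $i$ with $w_i u_i, w_i u_{i-1} \in E(T)$; consider then the disk $N[u_j] = U \cup \{w_j, w_{j+1}\}$ for an index $j$ chosen away from $i$, and verify that the unique $T$-path joining $u_i$ and $u_{i-1}$ inside this disk must pass through $w_i \notin N[u_j]$, contradicting the subtree condition.

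\textbf{Main obstacle.} The routine part is $C_n$; the genuinely delicate step is the case analysis for $S_n$, where one must simultaneously control several disks and exploit the interplay between the clique $U$ and the independent set $W$. A clean way I would pursue is to track, for each $w_i$, whether $u_i u_{i-1} \in E(T)$, and to show that in either configuration some $N[u_j]$ is forced to be disconnected in $T$ by a counting argument on tree-edges inside $U$. Both obstructions are classical; I would closely follow the reasoning in \citep{BDCV98}.
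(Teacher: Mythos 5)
The paper does not prove this statement at all---it is imported verbatim from \citep{BDCV98}---so there is no in-paper argument to compare against, and I can only assess your proposal on its own terms. It is essentially sound, and the backward direction, which you rightly identify as the heart, is in fact already complete as written: for $C_n$ ($n \geq 4$) any spanning tree is a Hamiltonian path and the closed neighbourhood of an endpoint of the deleted edge induces a disconnected subgraph of $T$; for $S_n$, your dichotomy (either $u_{i-1}u_i \in E(T)$, or both $w_iu_{i-1}$ and $w_iu_i$ lie in $E(T)$) is correct, acyclicity of $T$ forces the second alternative for some $i$, and then the \emph{unique} $T$-path between $u_{i-1}$ and $u_i$ is $u_{i-1}w_iu_i$, so any disk $N[u_j]$ with $j \notin \{i-1, i\}$ (such a $j$ exists since $n \geq 3$) contains $u_{i-1}$ and $u_i$ but not $w_i$ and therefore cannot induce a subtree. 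The counting argument you sketch as a fallback in your final paragraph is unnecessary; the only precision worth adding is that $j$ must avoid both $i$ and $i-1$, not just $i$. The one genuine omission is on the other side: you dispose of the forward direction by invoking ``every strongly chordal graph is dually chordal,'' which is the substantive half of the theorem. If the statement is to be proved rather than cited, you still owe the construction of a compatible tree from a strong elimination ordering (the standard route being: strong elimination ordering $\Rightarrow$ maximum neighbourhood ordering $\Rightarrow$ spanning tree obtained by joining each vertex to a maximum neighbour among the later vertices), and that is where the actual work of \citep{BDCV98} lies. The hereditariness of strong chordality is, as you say, immediate from the forbidden-induced-subgraph definition of \citet{Far83}.
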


\citet{HP17} showed that \textsc{Semitotal Dominating Set} can be solved in $O(n^{2})$ time for interval graphs and asked for the complexity of the problem restricted to strongly chordal graphs. We observe that forbidding only a finite number of induced suns does not make the problem easy:

\begin{theorem}\label{split} For any $k \geq 3$, \textsc{Semitotal Dominating Set} is $\mathsf{NP}$-complete for $(S_{3}, \dots, S_{k})$-free split graphs.  
\end{theorem}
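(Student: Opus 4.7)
We reduce from \textsc{Semitotal Dominating Set} restricted to split graphs, shown NP-complete by Henning and Pandey~\citep{HP17}. Membership in NP is immediate. The reduction rests on the following structural equivalence: a split graph $G$ with split partition $(K, I)$ contains $S_n$ as an induced subgraph (for $n \geq 3$) if and only if its bipartite incidence graph $G[K, I]$ contains $C_{2n}$ as an induced subgraph. Indeed, in any induced $S_n$ with $n \geq 3$ the clique side $U$ must lie entirely in $K$ and the independent side $W$ entirely in $I$ (a short case analysis rules out any other assignment, since otherwise some $K$-vertex of the partition would have to be adjacent to a non-neighbouring sun vertex via the clique edges), so the sun's bipartite edges form an induced $2n$-cycle in $G[K, I]$; the converse is immediate, as restoring the missing clique edges between the $K$-side vertices of any induced $C_{2n}$ in $G[K, I]$ produces an induced $S_n$. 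Thus $G$ is $(S_3, \ldots, S_k)$-free if and only if $G[K, I]$ has bipartite girth strictly greater than $2k$.

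Given a split graph instance $G = (K \cup I, E)$, the plan is to construct $G'$ by subdividing every edge $uw$ of the bipartite incidence $G[K, I]$ (with $u \in K$, $w \in I$) by a bipartite path $u = z_0, z_1, \ldots, z_{2\ell+1} = w$ of odd length $2\ell + 1$, for a fixed $\ell := k$. The odd-indexed internal $z_j$'s are added to $I'$ and the even-indexed internal $z_j$'s are added to $K'$ (and hence join the clique). This preserves the split structure; moreover, since every subdivision vertex has degree exactly $2$ in the new bipartite incidence, no chord is introduced and each induced cycle of $G[K, I]$ blows up to an induced cycle of $G'[K', I']$ whose length is multiplied by $2\ell + 1$. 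For $\ell = k$ the bipartite girth of $G'[K', I']$ exceeds $2k$, so by the structural equivalence $G'$ is $(S_3, \ldots, S_k)$-free.

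The technical heart is to establish $\gamma_{t2}(G') = \gamma_{t2}(G) + c_\ell \cdot |E(G[K, I])|$ for an explicit constant $c_\ell = \lceil \ell / 2 \rceil$, together with a polynomial-time procedure recovering a minimum semi-TD set of $G$ from one of $G'$. The argument is local along each subdivided path, in the spirit of \Cref{subdsemi}: every new internal $I'$-vertex has degree $2$ in $G'$, with both neighbours lying on the same subdivided path, so dominating it is strictly a path-local problem and forces exactly $\lceil \ell / 2 \rceil$ fresh vertices per subdivided edge. Witnesses for any $K'$-vertex in the semi-TD set are trivially supplied by the expanded clique $K'$, since any two clique vertices lie at distance $1$.

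The main obstacle is that subdivision inflates distances in $G$: two vertices at distance $2$ in $G$ may lie at distance up to $2(2\ell + 1)$ apart in $G'$, so the witness structure of an original minimum semi-TD set does not automatically lift. We overcome this by augmenting a minimum semi-TD set $D$ of $G$ with a canonical choice of subdivision $K'$-vertices—one near each vertex of $D \cap I$—that simultaneously dominates the new internal $I'$-vertices and restores witnesses at distance at most $2$. Conversely, any minimum semi-TD set of $G'$ is canonicalised by local exchanges (mirroring the strategy used in the proof of \Cref{subdsemi}) so that its projection onto $V(G)$ yields a semi-TD set of $G$ of the appropriate size. Combining both directions gives the desired equation, proving NP-completeness of \textsc{Semitotal Dominating Set} on $(S_3, \ldots, S_k)$-free split graphs.
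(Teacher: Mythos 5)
Your structural observation --- that a split graph with partition $(K,I)$ contains an induced $S_n$ ($n\geq 3$) exactly when the bipartite graph $G[K,I]$ contains an induced $C_{2n}$ --- is correct, but the reduction you build on it does not work, and the gap is not merely technical. To keep $G'$ a split graph you are forced to put the even-indexed subdivision vertices into the clique $K'$, and this defeats the construction in two ways. First, distances do not inflate at all: a connected split graph has diameter at most $3$, every non-isolated vertex of $G'$ is within distance $2$ of every clique vertex, and any two chosen clique vertices witness each other, so the witness machinery you plan to transport from \Cref{subdsemi} is vacuous here (the ``main obstacle'' you identify does not exist, which is a sign the structure has not been pinned down). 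Second, and decisively, subdividing the edge $uw$ deletes the adjacency between $u\in K$ and $w\in I$; afterwards $w$ is adjacent only to the private path vertices $z_{2\ell}$ of its incident paths, so $G'$ retains no information about which clique vertices of $G$ dominated which independent vertices. The claimed identity $\gamma_{t2}(G')=\gamma_{t2}(G)+\lceil \ell/2\rceil\cdot|E(G[K,I])|$ is false: already for $\ell=k=3$, the set consisting of $z_{2}$ and $z_{6}$ from every subdivided path is a semitotal dominating set of $G'$ (it lies inside the clique, so all its vertices have witnesses at distance $1$ and dominate all of $K'$, while $z_2$ dominates $z_1,z_3$ and $z_6$ dominates $z_5$ and $w$), giving $\gamma_{t2}(G')\leq 2\,|E(G[K,I])|$, strictly below your formula for every valid instance. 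In general $\gamma_{t2}(G')$ reduces to a local covering problem along the paths and cannot encode $\gamma_{t2}(G)$.

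The paper's proof takes a different and much shorter route: it reduces from \textsc{Vertex Cover} on $(C_3,\dots,C_k)$-free graphs without dominating vertices, forming the split graph $Q(G)$ whose clique is $V(G)$ and whose independent set is $E(G)$, with each edge-vertex joined to its two endpoints. An induced $S_i$ in $Q(G)$ must have its clique side in $V(G)$ and its independent side in $E(G)$, hence yields a cycle of length $i\leq k$ in $G$, so $Q(G)$ is $(S_3,\dots,S_k)$-free; the value link is then immediate from $\gamma_{t2}(Q(G))=\gamma(Q(G))=\beta(G)$, using \Cref{folk,splitred}. The lesson, if you want to repair your approach, is that the girth restriction has to be imposed on the source instance \emph{before} the split graph is built, rather than grafted onto the split graph afterwards by subdivision.
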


For the proof of \Cref{split}, we use the following construction introduced in \citep{Ale03}. Given a graph $G$, let $Q(G)$ be the graph with vertex set $V(G) \cup E(G)$ and edge set $$\{v_{i}v_{j} : v_{i}, v_{j} \in V(G)\} \cup \{ve : v \in V(G), \ e \in E(G), \ v \in e\}.$$ Clearly, $Q(G)$ is a connected split graph and each vertex in the independent set $E(G)$ has degree $2$. Moreover, the following holds (see, e.g., \citep{MP15}):

\begin{lemma}\label{splitred} For every graph $G$, we have $\gamma(Q(G)) = \beta(G)$.
\end{lemma}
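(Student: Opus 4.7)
The plan is to prove both inequalities $\gamma(Q(G)) \leq \beta(G)$ and $\gamma(Q(G)) \geq \beta(G)$ by direct transformations between a minimum vertex cover of $G$ and a minimum dominating set of $Q(G)$, exploiting the split structure of $Q(G)$. Throughout, we may assume $G$ has at least one edge (otherwise the statement is trivial or handled separately), so that the clique $V(G)$ and the independent set $E(G)$ partition $V(Q(G))$ into a split cover with $|V(G)| \geq 2$.

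For the upper bound $\gamma(Q(G)) \leq \beta(G)$, I would take a minimum vertex cover $C \subseteq V(G)$ of $G$ and argue that $C$ is already a dominating set of $Q(G)$. Every vertex of $V(G) \setminus C$ is dominated because $V(G)$ is a clique in $Q(G)$ and $C$ is nonempty, so any such vertex has a neighbour in $C$. Every vertex $e \in E(G)$ corresponds to an edge $uv$ of $G$; since $C$ is a vertex cover, $\{u,v\} \cap C \neq \varnothing$, and by the definition of $Q(G)$ that common vertex is adjacent to $e$. Hence $C$ dominates $Q(G)$ and $\gamma(Q(G)) \leq |C| = \beta(G)$.

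For the lower bound $\gamma(Q(G)) \geq \beta(G)$, I would start from a minimum dominating set $D$ of $Q(G)$ and construct a vertex cover $C$ of $G$ with $|C| \leq |D|$ by replacing each edge-vertex selection with an endpoint. More precisely, set $C = (D \cap V(G)) \cup \{\varphi(e) : e \in D \cap E(G)\}$, where $\varphi(e)$ is an arbitrarily chosen endpoint of $e$ in $G$. Then $|C| \leq |D|$. To verify that $C$ covers every edge $e = uv \in E(G)$, note that $e$, viewed as a vertex of $Q(G)$, must be dominated by $D$, and its only neighbours in $Q(G)$ are $u$ and $v$. Either $e \in D$, in which case $\varphi(e) \in \{u,v\} \cap C$; or one of $u,v$ lies in $D \cap V(G) \subseteq C$. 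In both subcases $e$ is covered, so $C$ is a vertex cover of $G$ and $\beta(G) \leq |C| \leq |D| = \gamma(Q(G))$.

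Combining the two inequalities yields $\gamma(Q(G)) = \beta(G)$. I do not anticipate any genuine obstacle: both directions are short and structural, the only thing to watch is the trivial degenerate case in which $G$ has no edges (where one either excludes it or observes that the convention $\beta(G) = 0$ makes the statement vacuous for the intended use), and the innocuous fact that the construction of $C$ in the second direction makes no assumption on whether $D$ contains vertices from $V(G)$, from $E(G)$, or from both.
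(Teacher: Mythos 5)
Your proof is correct. Note that the paper does not actually prove this lemma: it cites \citet{MP15} for the connected case and merely remarks that the connectivity hypothesis can be dropped, so there is no in-paper argument to compare against. Your two-inequality argument is the standard one and both directions check out: a vertex cover $C$ dominates $Q(G)$ because $V(G)$ is a clique (so $C$, being nonempty, dominates $V(G)\setminus C$) and each edge-vertex is adjacent to a cover vertex; conversely, projecting a dominating set $D$ onto $V(G)$ by replacing each selected edge-vertex with one of its endpoints yields a vertex cover of no larger size, since the only neighbours of an edge-vertex in $Q(G)$ are its two endpoints. You are also right to flag the degenerate case: if $G$ has at least one vertex but no edges, then $Q(G)$ is a nonempty complete graph with $\gamma(Q(G)) = 1$ while $\beta(G) = 0$, so the lemma as literally stated (``for every graph $G$'') fails there; the paper glosses over this, but it is irrelevant to the use in \Cref{split}, where the instances of \textsc{Vertex Cover} have edges. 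Your handling of this point is the only place where the wording could be tightened (the statement is false, not vacuous, for edgeless nonempty $G$), but the mathematical content of your proof is sound.
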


Note that \Cref{splitred} is stated in \citep{MP15} with the additional assumption of $G$ being connected, but it is easy to see it holds for any graph $G$.

\begin{lemma}[Folklore]\label{folk} For every connected split graph $G$ with no dominating vertex, we have $\gamma(G) = \gamma_{t2}(G) = \gamma_{t}(G)$.
\end{lemma}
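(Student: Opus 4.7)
The chain $\gamma(G)\le\gamma_{t2}(G)\le\gamma_t(G)$ always holds when $G$ has no isolated vertex, so the plan is to establish the reverse inequality $\gamma_t(G)\le\gamma(G)$ by exhibiting, from any minimum dominating set, a total dominating set of the same cardinality. Throughout, let $(K,I)$ be a partition of $V(G)$ into a clique $K$ and an independent set $I$ (which exists since $G$ is a split graph).

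The first step is a standard shifting argument: starting from a minimum dominating set $D$ of $G$, I will show that we may assume $D\subseteq K$. Indeed, if $v\in D\cap I$, then $v$ has at least one neighbour $u\in K$ (every neighbour of $v$ lies in $K$ because $I$ is independent, and $v$ has a neighbour because $G$ is connected with $|V(G)|\ge 2$). Replacing $v$ by $u$ yields a set $D'=(D\setminus\{v\})\cup\{u\}$ of size at most $|D|$; it is still dominating because $u$ dominates the whole clique $K$, which in particular contains all former neighbours of $v$, and of course $u$ dominates $v$ itself. Iterating this substitution produces a minimum dominating set contained in $K$.

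Next, I observe that $|D|\ge 2$: otherwise $D=\{v\}$ with $v\in K$ would dominate every vertex of $G$, making $v$ a dominating vertex, contrary to the hypothesis. Since $D\subseteq K$ and $|D|\ge 2$, the set $D$ induces a clique on at least two vertices in $G$, so every vertex of $D$ has a neighbour inside $D$. Hence $D$ is in fact a total dominating set of $G$, giving $\gamma_t(G)\le|D|=\gamma(G)$, which combined with the trivial chain yields $\gamma(G)=\gamma_{t2}(G)=\gamma_t(G)$.

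There is no genuine obstacle here; the only points that require care are verifying that the shifting step $v\mapsto u$ does not destroy domination (which is immediate because a vertex of $I\cap D$ only dominates itself and neighbours in $K$, all of which are also dominated by any $u\in K$) and checking that both the connectedness of $G$ and the absence of a dominating vertex are genuinely used, the former to guarantee the existence of the neighbour $u$ and the latter to rule out $|D|=1$.
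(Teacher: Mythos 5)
Your proof is correct and follows essentially the same route as the paper's: shift a minimum dominating set into the clique side of the split partition, note that the absence of a dominating vertex forces it to have at least two vertices, and conclude it is already a total dominating set, giving $\gamma_t(G) \leq \gamma(G)$. The only cosmetic difference is that the paper first observes that a minimum dominating set cannot contain both a vertex of $I$ and one of its neighbours before performing the swap, whereas you verify directly that the swap preserves domination; both are fine.
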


\begin{proof} Clearly, it is enough to show that $\gamma_{t}(G) \leq \gamma(G)$. Let $V(G) = C \cup I$ be a partition of the vertices of $G$ into a clique $C$ and an independent set $I$. Note that there exists a minimum dominating set of $G$ contained in $C$. Indeed, if a minimum dominating set $D$ of $G$ contains $u \in I$, then no neighbour of $u$ belongs to $D$ and so, denoting by $v$ one such vertex, we have that $(D \setminus \{u\}) \cup \{v\}$ is a minimum dominating set containing less vertices of $I$. Since $G$ has no dominating vertex, every dominating set contained in $C$ is a total dominating set and so $\gamma_{t}(G) \leq \gamma(G)$. 
\end{proof}

We can finally prove \Cref{split}: 

\begin{proof}[Proof of \Cref{split}] We reduce from \textsc{Vertex Cover} restricted to $(C_{3}, \dots, C_{k})$-free graphs, which is known to be $\mathsf{NP}$-complete (see, e.g., \citep{Ale03}). Note that the problem remains $\mathsf{NP}$-hard even if the instances do not contain dominating vertices, as can be easily seen by taking a $2$-subdivision. 

Given now an instance $G$ of this problem, we construct the split graph $Q(G)$ introduced above. Suppose that $Q(G)$ contains an induced $S_{i}$, for some $i \leq k$, and let $V(S_{i}) = W \cup U$ be a partition as in the definition of $i$-sun. Since the vertices of $Q(G)$ can be partitioned into a clique $V(G)$ and an independent set $E(G)$, we have that $U \subseteq V(G)$ and $W \subseteq E(G)$. Therefore, the vertices corresponding to $U$ form a cycle in $G$ (not necessarily induced) of length $i$, a contradiction. 

Since $G$ has no dominating vertex, $Q(G)$ has no dominating vertex either. Moreover, since $Q(G)$ is connected, \Cref{folk,splitred} imply $\gamma_{t2}(Q(G)) = \gamma(Q(G)) = \beta(G)$ and the conclusion follows.
\end{proof}

It is easy to see that the proof of \Cref{split} can be adapted to show the following:

\begin{theorem}\label{splittot} For any $k \geq 3$, \textsc{Dominating Set} and \textsc{Total Dominating Set} are $\mathsf{NP}$-complete for $(S_{3}, \dots, S_{k})$-free split graphs.  
\end{theorem}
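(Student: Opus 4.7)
The plan is to reuse verbatim the reduction from the proof of \Cref{split}, observing that Lemma \ref{folk} already yields the required equality $\gamma(Q(G)) = \gamma_t(Q(G))$. In more detail, I would start from an instance $G$ of \textsc{Vertex Cover} restricted to $(C_3, \dots, C_k)$-free graphs (which is $\mathsf{NP}$-complete, and remains so even when restricted to graphs without dominating vertex by applying a $2$-subdivision as in the proof of \Cref{split}). I then form the split graph $Q(G)$ defined in \Cref{splitred} and claim that $\gamma(Q(G)) = \gamma_t(Q(G)) = \beta(G)$.

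The three ingredients are essentially already assembled. First, $Q(G)$ is $(S_3, \dots, S_k)$-free by exactly the argument given in the proof of \Cref{split}: the clique/independent-set partition of $Q(G)$ forces any induced $i$-sun to have its clique side in $V(G)$ and its independent side in $E(G)$, which then yields a cycle of length $i$ in $G$, contradicting $(C_3, \dots, C_k)$-freeness. Second, since $G$ has no dominating vertex, neither does $Q(G)$, and $Q(G)$ is connected by construction. Third, \Cref{folk} applied to $Q(G)$ gives $\gamma(Q(G)) = \gamma_t(Q(G))$, and \Cref{splitred} gives $\gamma(Q(G)) = \beta(G)$.

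Combining these, the reduction $G \mapsto Q(G)$ shows that computing either $\gamma$ or $\gamma_t$ on $(S_3, \dots, S_k)$-free split graphs yields $\beta(G)$ and is therefore $\mathsf{NP}$-hard. Membership in $\mathsf{NP}$ of the decision versions is immediate. There is essentially no obstacle here: the only thing to verify is that the argument of \Cref{split} did not secretly use the semitotal flavour of $\gamma_{t2}$, and indeed it did not, since \Cref{folk} is stated simultaneously for $\gamma$, $\gamma_{t2}$ and $\gamma_t$. Thus the same two lines of reasoning that handle \textsc{Semitotal Dominating Set} handle both \textsc{Dominating Set} and \textsc{Total Dominating Set}.
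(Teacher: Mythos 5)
Your proposal is correct and is exactly the adaptation the paper intends: the paper proves Theorem~\ref{splittot} only by remarking that the proof of Theorem~\ref{split} can be adapted, and the adaptation is precisely what you spell out, namely that Lemma~\ref{folk} already gives $\gamma(Q(G)) = \gamma_{t2}(Q(G)) = \gamma_{t}(Q(G)) = \beta(G)$ for the connected split graph $Q(G)$ with no dominating vertex, so the same reduction from \textsc{Vertex Cover} on $(C_{3},\dots,C_{k})$-free graphs settles all three problems at once.
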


We now come back to dually chordal graphs and finally show that this class is closed under the transformation introduced in \Cref{reduc}. In order to do so, we first observe that to check the spanning tree property in \Cref{disk}, it is enough to consider the closed neighbourhoods. For ease of exposition, we use the following terminology: a \textit{compatible tree} for a graph $G$ is a spanning tree $T$ of $G$ such that $N_{G}[z]$ induces a subtree in $T$, for any $z \in V(G)$.

\begin{lemma}\label{compatible} A graph $G$ is dually chordal if and only if there exists a compatible tree for $G$.
\end{lemma}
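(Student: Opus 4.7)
The plan is to reduce the "all disks" condition in the definition of dually chordal graph to the "closed neighborhoods" condition by induction on the radius. The forward direction is immediate: if $T$ witnesses the disk subtree property, then in particular each $N_G[z] = N_G^{1}[z]$ induces a subtree in $T$, so $T$ is a compatible tree. Hence the real work is in the converse: given a compatible tree $T$ for $G$, show that $N_G^r[z]$ induces a subtree in $T$ for every $z \in V(G)$ and every $r \geq 0$.

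I would set up the induction on $r$ with base case $r = 1$ handed to us by the compatibility hypothesis (and $r = 0$ trivial, since $N_G^0[z] = \{z\}$). The key elementary identity driving the inductive step is
\[
N_G^{r+1}[z] \;=\; \bigcup_{v \in N_G[z]} N_G^{r}[v],
\]
which I would justify in one line by taking, for any $u$ with $d_G(z,u) = r+1$, the neighbor of $z$ along a shortest $z,u$-path. With this identity in hand, the inductive step reduces to a statement about subtrees of the tree $T$: each set $N_G^{r}[v]$, for $v \in N_G[z]$, induces a subtree of $T$ by the inductive hypothesis, and since $d_G(z,v) \leq 1 \leq r$, every one of these subtrees contains $z$. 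A finite family of subtrees of a tree that share a common vertex has a connected union, so $N_G^{r+1}[z]$ induces a subtree of $T$.

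I do not anticipate any serious obstacle: the argument is essentially two observations (the iterated-neighborhood identity and the common-vertex union principle for subtrees of a tree). The only subtlety worth flagging is that the induction has to be seeded at $r = 1$ rather than at $r = 0$, because two distinct singletons do not share a vertex and so the union-of-subtrees argument would fail at that step; fortunately the $r = 1$ case is exactly the compatibility hypothesis, so this causes no trouble.
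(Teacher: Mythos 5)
Your proof is correct and follows essentially the same route as the paper's: the forward direction is dismissed as immediate, and the converse is an induction on the radius $r$ seeded at $r=1$ by the compatibility hypothesis. The only difference is in the packaging of the inductive step --- the paper connects a vertex $v \in N_G^{r+1}[z]$ to $z$ in $T$ through the predecessor $v'$ of $v$ on a shortest $z,v$-path, concatenating a path inside $N_G[v']$ with one inside $N_G^{r}[z]$, whereas you encode the same idea via the identity $N_G^{r+1}[z]=\bigcup_{v\in N_G[z]}N_G^{r}[v]$ together with the observation that subtrees of a tree sharing a common vertex have connected union.
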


\begin{proof} We only show that if there exists a compatible tree for $G$ then $G$ is dually chordal, as the other implication is trivial. Therefore, suppose there exists a spanning tree $T$ of $G$ such that $N_{G}[z]$ induces a subtree in $T$, for any $z \in V(G)$. Let $z$ be a fixed arbitrary vertex of $G$. We claim that, for any $r \geq 1$, $N_{G}^r[z]$ induces a subtree in $T$ and proceed by induction on $r$, the case $r = 1$ being true by assumption. 

Suppose now that $N_{G}^{r}[z]$ induces a subtree in $T$ and consider $v \in N_{G}^{r+1}[z]$. It is enough to show that $z$ and $v$ are connected in $T$ by a path consisting of vertices of $N_{G}^{r+1}[z]$. Let $v'$ be the neighbour of $v$ in a shortest $z, v$-path in $G$. We have that $v' \in N_{G}^{r}[z]$ and, by the induction hypothesis, $v'$ is connected in $T$ to $z$ by a path consisting of vertices of $N_{G}^{r}[z]$. Moreover, $v$ is connected in $T$ to $v'$ by a path consisting of vertices of $N_{G}[v']$ and since these vertices belong to $N_{G}^{r+1}[z]$, the conclusion immediately follows.     
\end{proof}

\begin{theorem}\label{duallysemi} If $G$ is a dually chordal graph, then the transformed graph $G'$ is.  
\end{theorem}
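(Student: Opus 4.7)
The plan is to use \Cref{compatible} and exhibit a compatible tree $T'$ for $G'$ built from a compatible tree $T$ of $G$. Since $G$ is dually chordal, such a $T$ exists; moreover (as shown inside the proof of \Cref{compatible}) every disk $N_G^r[z]$ induces a subtree in $T$, a fact I will use for $r = 1, 2$.

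The construction of $T'$ will be the natural one: take the copy of $T$ on $V_2$ (its edges are edges of $G$, which are edges of $G'[V_2] \cong G^2$, so this gives a spanning tree of $G'[V_2]$), and for each $v \in V(G)$ attach the leaf $v_1$ to $v_2$ via the edge $v_1 v_2$ (which lies in $G'$ since $v \in N_G[v]$). Clearly $T'$ is a spanning tree of $G'$. It remains to verify that $N_{G'}[z]$ induces a subtree of $T'$ for every $z \in V(G')$.

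I would split into two cases according to whether $z \in V_1$ or $z \in V_2$. For $z = v_1$, one has $N_{G'}[v_1] = \{v_1\} \cup \{u_2 : u \in N_G[v]\}$; the vertices $\{u_2 : u \in N_G[v]\}$ form a subtree in $T'$ because $N_G[v]$ is a subtree in $T$, and $v_2$ belongs to this subtree, so attaching $v_1$ through the edge $v_1 v_2$ keeps the set a subtree. For $z = v_2$, one has
\[
N_{G'}[v_2] = \{u_2 : u \in N_G^2[v]\} \cup \{u_1 : u \in N_G[v]\}.
\]
The first part forms a subtree of the $V_2$-copy of $T$ since $N_G^2[v]$ induces a subtree in $T$, and for every $u \in N_G[v] \subseteq N_G^2[v]$ the pendant $u_1$ is attached in $T'$ to a vertex $u_2$ already in the subtree. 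So again we obtain a subtree.

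The only step that needs a little care is making sure the extended disk $N_G^2[v]$ is itself a subtree of $T$, but this is exactly what the inductive argument inside \Cref{compatible} gives, so no new work is required. By \Cref{compatible}, the existence of this compatible tree $T'$ shows that $G'$ is dually chordal.
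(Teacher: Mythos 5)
Your proof is correct and follows essentially the same route as the paper: the same spanning tree $T'$ (the copy of a compatible tree $T$ on $V_2$ with each $v_1$ attached as a pendant to $v_2$), the same two-case verification of the closed neighbourhoods via \Cref{compatible}, and the same use of the fact that $N_G^2[v]$ induces a subtree of $T$ (which, as you note, also follows directly from \Cref{disk} since $N_G^2[v]$ is a disk). No gaps.
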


\begin{proof}
Let $G = (V, E)$ be a dually chordal graph and $G' = (V', E')$ the transformed graph. Since $G$ is dually chordal, there exists a spanning tree $T$ of $G$ such that any disk of $G$ induces a subtree in $T$. Moreover, for any $z \in V$, $N_{G^2}[z] = N_G^{2}[z]$ and so $T$ is a compatible tree for $G^2$. We now construct from $T$ a spanning tree $T'$ of $G'$ as follows. For any $v \in V$, $v_1v_2 \in E(T')$ and for any $u$ and $v$ in $V$, $u_2v_2 \in E(T')$ if and only if $uv \in E(T)$. 

We claim that $T'$ is a compatible tree for $G'$, i.e. $N_{G'}[z_{i}]$ induces a subtree in $T'$, for each $z_{i} \in V'$. By \Cref{compatible}, this would conclude the proof. Suppose first that $i = 1$. Since $N_{G'}(z_{1}) = \{v_{2} \in G' : v \in N_{G}[z]\}$ and $T$ is a compatible tree for $G$, we have that $N_{G'}(z_{1})$ induces a subtree in $T'$. Moreover, $z_{2} \in N_{G'}(z_{1})$ and $z_1z_2 \in E(T')$ and so $N_{G'}[z_{1}]$ induces a subtree in $T'$.  

Suppose finally that $i = 2$. Since $T$ is a compatible tree for $G^2$, every vertex in $N_{G'}[z_{2}]\setminus V_{1}$ is connected in $T$ (and so in $T'$) to $z_{2}$ by a path consisting of vertices of $N_{G'}[z_{2}] \setminus V_{1}$. Moreover, if $v_{1} \in N_{G'}[z_{2}] \cap V_{1}$, then $v_{2} \in N_{G'}[z_{2}]\setminus V_{1}$ and $v_{1}v_{2} \in E(T')$. Therefore, $N_{G'}[z_{2}]$ induces a subtree in $T'$. 
\end{proof}

\section{Complexity dichotomies in monogenic classes}\label{sec:dicho}

Recall that a class of graphs $\mathcal{G}$ is monogenic if it is defined by a single forbidden induced subgraph, i.e. $\mathcal{G} = \mbox{\textit{Free}}(H)$, for some graph $H$. We say that a (decision) graph problem admits a dichotomy in monogenic classes if, for each monogenic class, the problem is either $\mathsf{NP}$-complete or decidable in polynomial time. \citet{Kor92} showed that \textsc{Dominating Set} is decidable in polynomial time if $H$ is an induced subgraph of $P_{4} + tK_{1}$, for $t \geq 0$, and $\mathsf{NP}$-complete otherwise. Dichotomies for other problems have been provided in~\citep{Abou18,GPS14,Kam12,KKTW01}. 

In this section, combining results of \Cref{secapp} with existing ones, we obtain complexity dichotomies for \textsc{Semitotal Dominating Set} and the closely related \textsc{Total Dominating Set}. Observe first that the proof of \Cref{inapprox} and the fact that \textsc{Dominating Set} is $\mathsf{NP}$-complete for subcubic graphs \citep{GJ79} immediately imply the following:

\begin{corollary}\label{semidec} \textsc{Semitotal Dominating Set} is $\mathsf{NP}$-complete for subcubic line graphs of bipartite graphs.
\end{corollary}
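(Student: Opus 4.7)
The plan is to invoke the construction from the proof of \Cref{inapprox} essentially unchanged, observing that it is already a polynomial-time many-one reduction with the right target class. First I would note that \textsc{Semitotal Dominating Set} is in $\mathsf{NP}$: given a set $D$, we can check in polynomial time whether $D$ is a dominating set and whether, for every $v \in D$, there exists $w \in D \setminus \{v\}$ with $d_G(v,w) \leq 2$ (compute the disks $N^2[v]$ for each $v \in D$ by BFS).

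For $\mathsf{NP}$-hardness, I would reduce from \textsc{Dominating Set} restricted to subcubic graphs, which was shown $\mathsf{NP}$-complete by Garey and Johnson. Given an instance $(G,k)$ with $G$ subcubic on $n$ vertices and without isolated vertices, I output the instance $(G', k+5n)$, where $G'$ is the graph constructed in the proof of \Cref{inapprox}. That proof established in polynomial time that $G'$ is subcubic, that $G'$ is the line graph of the graph $G''$ (the 1-subdivision of $G$ followed by the replacement of each original vertex by the gadget of \Cref{fig:line}), and most importantly that
\[
\gamma_{t2}(G') \;=\; \gamma(G) + 5n.
\]
This equality immediately gives the equivalence $\gamma(G) \leq k \iff \gamma_{t2}(G') \leq k + 5n$, which is the standard many-one reduction for the decision version.

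The only item not stated explicitly in \Cref{inapprox} but needed here is that $G''$ is bipartite. This is essentially automatic: the 1-subdivision of any graph is bipartite, and the vertex-replacement gadget of \Cref{fig:line} is itself bipartite (inspection shows that its only cycles are the outer cycle of length $12$ obtained by subdividing each of the three edges of the triangle $v_1v_2v_3$ three times, plus the cycles created by the three pendant paths, all of even length), with the gadget attaching to the rest of $G''$ only through the three external subdivision vertices of the 1-subdivision; a proper 2-colouring of the gadget and of the 1-subdivision of $G$ can thus be glued consistently. Hence $G''$ is bipartite and $G'$ is a subcubic line graph of a bipartite graph, which is the desired graph class.

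There is no real obstacle: the heavy lifting, namely the vertex-by-vertex accounting that proves $\gamma_{t2}(G') = \gamma(G) + 5n$ via \Cref{claimcount}, was already carried out in \Cref{inapprox}. The present statement is a qualitative decision-version corollary of that quantitative gap-preserving reduction, so the only thing to write down explicitly is the bipartiteness of $G''$ and the polynomial bound on the size of $G'$ (linear in $n$, since each gadget is of constant size), both of which are immediate.
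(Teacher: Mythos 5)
Your proposal is correct and takes essentially the same route as the paper, which obtains \Cref{semidec} directly from the construction in the proof of \Cref{inapprox} (in particular the identity $\gamma_{t2}(G') = \gamma(G) + 5n$ and the observation that $G'$ is a subcubic line graph of the bipartite graph $G''$) combined with the $\mathsf{NP}$-completeness of \textsc{Dominating Set} for subcubic graphs \citep{GJ79}. One tiny imprecision: the three pendant vertices $b_i$ in the gadget of \Cref{fig:line} create no cycles at all, so the only cycle internal to a gadget is the $12$-cycle; this only simplifies your bipartiteness check and does not affect the argument.
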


Moreover, by applying a $5$-subdivision sufficiently many times, \Cref{semidec} implies the following:   

\begin{corollary}\label{largegirth} For any $k \geq 2$, \textsc{Semitotal Dominating Set} is $\mathsf{NP}$-complete for subcubic bipartite $(C_{4}, \dots, C_{2k})$-free graphs.  
\end{corollary}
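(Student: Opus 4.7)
The plan is to reduce from \textsc{Semitotal Dominating Set} on subcubic line graphs of bipartite graphs, which is NP-hard by \Cref{semidec}, by iterating the 5-subdivision operation of \Cref{subdsemi}. Given such an input $G$ with $m$ edges and girth $g \geq 3$, set $G_{0} = G$ and let $G_{r+1}$ be obtained from $G_{r}$ by simultaneously replacing every edge with a path of length $6$ (a global 5-subdivision). The output of the reduction is $G_{t}$, where $t$ is the smallest integer with $6^{t} g > 2k$; crucially, $t$ depends only on $k$, not on $G$.

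Three things need to be verified. First, $G_{t}$ lies in the target class: 5-subdivision preserves the maximum degree since new internal vertices have degree $2$, so $G_{t}$ is subcubic. Moreover, each global round creates no new cycles and multiplies the length of every existing cycle by $6$, so the girth of $G_{r}$ is exactly $6^{r} g$; in particular $G_{t}$ has girth exceeding $2k$, hence is $(C_{4},\dots,C_{2k})$-free, and for $r \geq 1$ every cycle of $G_{r}$ has length divisible by $6$, so $G_{r}$ is bipartite. Second, one recovers $\gamma_{t2}(G)$ from $\gamma_{t2}(G_{t})$: applying \Cref{subdsemi} edge by edge within each round gives $\gamma_{t2}(G_{r+1}) = \gamma_{t2}(G_{r}) + 2|E(G_{r})|$, and together with $|E(G_{r})| = 6^{r} m$ this yields the closed form $\gamma_{t2}(G_{t}) = \gamma_{t2}(G) + 2m\sum_{r=0}^{t-1} 6^{r}$. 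Since the proof of \Cref{subdsemi} is constructive, a minimum semi-TD-set of $G$ can be reconstructed from one of $G_{t}$ in polynomial time. Third, $|V(G_{t})| = O(6^{t}|V(G)|)$, which is polynomial in $|V(G)|$ for fixed $k$ since $t = O(\log k)$.

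There is essentially no obstacle; the reduction writes itself once \Cref{subdsemi} is in place. The only step deserving a moment's thought is the claim that a global 5-subdivision introduces no new cycles, which is immediate from the observation that every cycle in $G_{r+1}$ either traverses each replaced edge-path entirely or avoids it entirely, giving a natural bijection with cycles of $G_{r}$ under which lengths are multiplied by $6$.
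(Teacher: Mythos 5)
Your proposal is correct and is exactly the argument the paper intends: the paper's entire ``proof'' is the one-line remark that \Cref{semidec} together with repeated $5$-subdivisions (whose effect on $\gamma_{t2}$ is controlled by \Cref{subdsemi}) yields the result, and you have simply filled in the bookkeeping on girth, bipartiteness, the additive constant $2m\sum_{r=0}^{t-1}6^{r}$, and the polynomial size bound. The only nit is that your bipartiteness argument covers only $r \geq 1$ while your choice of $t$ could in principle be $0$ (the source instances are line graphs, hence not necessarily bipartite); this is repaired by always taking $t \geq 1$, e.g.\ the smallest positive integer with $3\cdot 6^{t} > 2k$, which also makes $t$ genuinely depend on $k$ alone as you claim.
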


We have seen in \Cref{prel} that if a graph property is expressible in $\mbox{MSO}_{1}$, then it is decidable in polynomial time for graphs of bounded clique-width. We now show that being a dominating set, a total dominating set or a semitotal dominating set are examples of such a property:

\begin{lemma}[Folklore]\label{mso1} Being a dominating set, a total dominating set or a semitotal dominating set are all expressible in $\textsc{MSO}_{1}$.
\end{lemma}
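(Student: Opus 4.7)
The plan is to exhibit, for each of the three properties, an explicit $\mbox{MSO}_{1}$ formula $\varphi(D)$ with one free set variable $D$ that holds in a graph $G$ precisely when $D \subseteq V(G)$ is a set of the required type. Since the three properties are given by simple first-order conditions on neighbourhoods and on distance-at-most-$2$ pairs, and since both of these are directly definable using the predicate $\mbox{\textit{adj}}$, no set quantification is actually required beyond the single free variable $D$.

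First I would treat \textsc{Dominating Set}. A set $D \subseteq V(G)$ is a dominating set if and only if every vertex either belongs to $D$ or has a neighbour in $D$; this is captured by
\[
\varphi_{\mathrm{dom}}(D) \;\equiv\; \forall v \Bigl( v \in D \,\vee\, \exists u\,(u \in D \,\wedge\, \mbox{\textit{adj}}(u,v)) \Bigr).
\]
Next, for \textsc{Total Dominating Set}, the only change is that vertices of $D$ also need to be dominated by $D$, yielding
\[
\varphi_{\mathrm{tot}}(D) \;\equiv\; \forall v\, \exists u\, \bigl( u \in D \,\wedge\, \mbox{\textit{adj}}(u,v) \bigr).
\]

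For \textsc{Semitotal Dominating Set}, the key observation is that the relation ``$d_{G}(x,y) \leq 2$ and $x \neq y$'' is itself $\mbox{MSO}_{1}$-definable via
\[
\psi_{\leq 2}(x,y) \;\equiv\; x \neq y \,\wedge\, \bigl(\mbox{\textit{adj}}(x,y) \,\vee\, \exists w\,(\mbox{\textit{adj}}(x,w) \wedge \mbox{\textit{adj}}(w,y))\bigr).
\]
Using this, I would write
\[
\varphi_{\mathrm{semi}}(D) \;\equiv\; \varphi_{\mathrm{dom}}(D) \,\wedge\, \forall v\,\Bigl( v \in D \,\rightarrow\, \exists u\,(u \in D \,\wedge\, \psi_{\leq 2}(u,v)) \Bigr),
\]
which by the very definition of semi-TD-set expresses exactly the required property.

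Each of the three formulas uses only vertex and set variables, the atoms $\in$ and $\mbox{\textit{adj}}$, equality, Boolean connectives, and first-order quantifiers, so each lies in $\mbox{MSO}_{1}$; this completes the proof. The only non-routine step is recognising that the distance-$2$ requirement in the definition of semi-TD-sets does not require edge quantification (which would push us into $\mbox{MSO}_{2}$) because the path of length at most two can be described by a single existentially quantified intermediate vertex $w$; I expect no genuine obstacle beyond making this observation explicit.
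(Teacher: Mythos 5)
Your proposal is correct and follows essentially the same approach as the paper: write explicit $\mbox{MSO}_{1}$ formulas using only vertex quantification and the $\textit{adj}$ predicate, with the distance-at-most-$2$ condition for semitotal domination expressed by an existentially quantified intermediate vertex. Your formulas are, if anything, slightly cleaner than the paper's (which contains a copy-paste slip labelling the total domination formula as \textbf{semi-TD-set}).
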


\begin{proof} Let $G = (V, E)$ be a graph. The following $\mbox{MSO}_{1}$ sentence says that $D \subseteq V$ is a dominating set:
\begin{equation*} \textbf{dom}(D) = \forall_{v \in V \setminus D}\exists_{u \in D} \ \textit{adj}(u, v).
\end{equation*}

The following $\mbox{MSO}_{1}$ sentence says that $D \subseteq V$ is a total dominating set:
\begin{equation*} \textbf{semi-TD-set}(D) = \textbf{dom}(D) \wedge \forall_{u \in D}\exists_{v \in D\setminus\{u\}} \ \textit{adj}(u, v).
\end{equation*}

Finally, the following $\mbox{MSO}_{1}$ sentence says that $D \subseteq V$ is a semitotal dominating set:
\begin{equation*} \textbf{semi-TD-set}(D) = \textbf{dom}(D) \wedge (\forall_{u \in D}\exists_{v \in D\setminus\{u\}} \ \textit{adj}(u, v) \vee (\exists_{x \in V} \ \textit{adj}(u, x) \wedge \textit{adj}(x, v))).
\end{equation*}
\end{proof}

We can finally prove the complexity dichotomy for \textsc{Semitotal Dominating Set}: 

\begin{theorem}\label{dich} \textsc{Semitotal Dominating Set} restricted to $H$-free graphs is decidable in polynomial time if $H$ is an induced subgraph of $P_{4} + tK_{1}$ and $\mathsf{NP}$-complete otherwise.
\end{theorem}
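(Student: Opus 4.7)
The plan is to handle the two directions of the dichotomy separately. For the polynomial direction, suppose $H \leq_i P_{4} + tK_{1}$ for some $t \geq 0$, so every $H$-free graph is $(P_{4} + tK_{1})$-free. I would invoke the known structural fact that for each fixed $t$, the class of $(P_{4} + tK_{1})$-free graphs has clique-width bounded by a constant depending only on $t$. Combined with the $\mbox{MSO}_{1}$-expressibility of being a semitotal dominating set from \Cref{mso1}, the meta-theorem of \citet{CMR00}, and the polynomial-time clique-width approximation of \citet{OS06}, this immediately yields a polynomial-time algorithm.

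For the $\mathsf{NP}$-completeness direction, I would start from a clean combinatorial characterisation: the induced subgraphs of $P_{4} + tK_{1}$ (as $t$ ranges over $\mathbb{N}$) are precisely the graphs with at most one non-trivial connected component, that component (if present) being an induced subpath of $P_{4}$. Hence $H \not\leq_i P_{4} + tK_{1}$ for every $t$ is equivalent to $H$ containing at least one of a cycle $C_{\ell}$, the claw $K_{1,3}$, or $2K_{2}$ as induced subgraph. Indeed, if $H$ is a linear forest avoiding the claw and not embedding into any $P_{4} + tK_{1}$, then either $H$ has two non-trivial components (giving $2K_{2}$ directly) or a component $P_{k}$ with $k \geq 5$, in which case $2K_{2} \leq_i P_{5} \leq_i H$.

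In each of the three cases I would then exhibit a hard hereditary subclass of $H$-free graphs. If $H$ contains an induced $C_{\ell}$, I would choose $k$ with $2k + 2 > \ell$ and apply \Cref{largegirth} on subcubic bipartite $(C_{4}, \ldots, C_{2k})$-free graphs, which have girth greater than $\ell$ and are therefore $H$-free. If $H$ contains the claw, subcubic line graphs of bipartite graphs are $K_{1,3}$-free and hence $H$-free, so \Cref{semidec} yields $\mathsf{NP}$-completeness. If $H$ contains $2K_{2}$, then split graphs, being $\{2K_{2}, C_{4}, C_{5}\}$-free, are $H$-free, and the result of \citet{HP17} on split graphs transfers.

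The main obstacle is the polynomial direction: everything there rests on the clique-width bound for $(P_{4} + tK_{1})$-free graphs, a non-trivial structural result that has to be imported from the literature. The hardness direction, by contrast, reduces once the structure of induced subgraphs of $P_{4} + tK_{1}$ is in hand to a short case analysis invoking hardness results already proven earlier in the paper or in the literature.
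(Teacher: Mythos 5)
Your hardness direction is sound and coincides with the paper's: the characterisation of the induced subgraphs of $P_{4} + tK_{1}$ as the graphs $P_{j} + sK_{1}$ with $j \leq 4$, the resulting trichotomy (cycle, claw, or $2K_{2}$), and the three hardness sources (\Cref{largegirth}, \Cref{semidec}, split graphs via \citet{HP17}) are exactly what the paper uses.

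The polynomial direction, however, rests on a false structural claim. For $t \geq 1$ the class of $(P_{4} + tK_{1})$-free graphs does \emph{not} have bounded clique-width: since $3K_{1}$ is an induced subgraph of $P_{4} + K_{1}$, every $3K_{1}$-free graph is $(P_{4} + tK_{1})$-free, and the $3K_{1}$-free graphs are precisely the complements of triangle-free graphs, a class of unbounded clique-width (complementation changes clique-width by at most a factor of $2$, and triangle-free graphs, e.g.\ grids, have unbounded clique-width). Indeed, the known classification states that $\mbox{\textit{Free}}(H)$ has bounded clique-width if and only if $H$ is an induced subgraph of $P_{4}$ itself, with no isolated vertices added. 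So the appeal to \citet{CMR00} and \citet{OS06} collapses for every $t \geq 1$. The paper circumvents this with a different, elementary argument: given a $(P_{k} + tK_{1})$-free graph $G$ with $k \leq 4$, either $G$ is additionally $P_{k}$-free (hence a cograph or close to one, of clique-width at most a constant, and then the $\mbox{MSO}_{1}$ route of \Cref{mso1} does apply), or $G$ contains an induced $P_{k}$, in which case at most $t-1$ pairwise non-adjacent vertices avoid its neighbourhood, giving $\gamma(G) \leq t + 3$ and hence $\gamma_{t2}(G) \leq 2t + 6$; one then simply enumerates all vertex subsets of size at most $2t + 6$ in polynomial time. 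You would need to replace your bounded-clique-width claim with an argument of this kind to close the gap.
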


\begin{proof} If $H$ contains an induced cycle $C_{k}$, then the problem is $\mathsf{NP}$-complete by \Cref{largegirth}. Moreover, if $H$ is a forest with a vertex of degree at least $3$, then $H$ contains an induced claw and the problem is $\mathsf{NP}$-complete by \Cref{inapprox}.

Finally, suppose $H$ is the disjoint union of paths. If $H$ contains at least two paths on at least $2$ vertices, then $H$ contains $2K_{2}$ and the problem is $\mathsf{NP}$-complete since it is $\mathsf{NP}$-complete when restricted to split graphs \citep{HP17} (see also \Cref{split}). The same conclusion holds if $H$ contains a path on at least $5$ vertices. It remains to consider the case of $H$ being of the form $P_{k} + tK_{1}$, for some $k \leq 4$ and $t \geq 0$. Therefore, let $G$ be such a $P_{k} + tK_{1}$-free graph. If $G$ is in addition $P_{k}$-free, then it has bounded clique-width and the problem is decidable in polynomial time (\Cref{mso1}). On the other hand, if $G$ contains an induced copy $G'$ of $P_{k}$, then there are at most $t - 1$ pairwise non-adjacent vertices of $G$ none of which is adjacent to a vertex of $G'$. Denoting this set by $S$, we have that $V(G') \cup S$ is a dominating set of size at most $t + 3$. Moreover, denoting by $\gamma(G)$ the domination number of $G$, it is easy to see that $\gamma_{t2}(G) \leq 2\gamma(G)$ and so $\gamma_{t2}(G) \leq 2t + 6$. Therefore, it suffices to check all the subsets of $V(G)$ of size at most $2t + 6$, and this can be clearly done in polynomial time.     
\end{proof}

We now consider \textsc{Total Dominating Set} and show that it remains $\mathsf{NP}$-hard for graphs with arbitrarily large girth:  

\begin{lemma}\label{girthtot} Let $G$ be a graph and $uv \in E(G)$. If $G'$ is the graph obtained from $G$ by replacing the edge $uv$ with a path of length $5$, then $\gamma_{t}(G') = \gamma_{t}(G) + 2$.

In particular, for any $k \geq 3$, \textsc{Total Dominating Set} is $\mathsf{NP}$-complete for $(C_{3}, \dots, C_{k})$-free graphs.   
\end{lemma}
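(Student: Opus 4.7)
Let $u w_1 w_2 w_3 w_4 v$ denote the path of length $5$ replacing the edge $uv$ in $G'$; the plan is to mirror the structure of the proof of \Cref{subdsemi}, proving the equality $\gamma_t(G') = \gamma_t(G) + 2$ via two matching inequalities, and then iterate the lemma to obtain the NP-completeness claim. For the upper bound, I would start from a minimum TD-set $D$ of $G$ and split according to $D \cap \{u,v\}$. If $\{u,v\} \subseteq D$, then $D \cup \{w_1, w_4\}$ is a TD-set of $G'$: the pairs $u, w_1$ and $v, w_4$ are mutual witnesses, and all other new vertices are dominated. If $\{u,v\} \cap D = \{u\}$, then $D \cup \{w_3, w_4\}$ works, noting that the witness of $u$ in $G$ is necessarily distinct from $v$ (since $v \notin D$) and hence survives in $G'$, while $w_4$ dominates $v$ and $w_3, w_4$ witness each other. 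If $D \cap \{u,v\} = \varnothing$, then $u$ and $v$ are dominated in $G$ by neighbours distinct from $v$ and $u$ respectively (which remain their neighbours in $G'$), so $D \cup \{w_2, w_3\}$ is a TD-set of $G'$.

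For the lower bound, let $D'$ be a minimum TD-set of $G'$; I would first establish that $|D' \cap \{w_1, w_2, w_3, w_4\}| \geq 2$. Indeed, the total-domination constraint at $w_2$, whose only neighbours in $G'$ are $w_1$ and $w_3$, forces $D' \cap \{w_1, w_3\} \neq \varnothing$, and the analogous constraint at $w_3$ forces $D' \cap \{w_2, w_4\} \neq \varnothing$, giving at least two elements in total. A case analysis on $D' \cap \{w_1, w_2, w_3, w_4\}$ then extracts from $D'$ a TD-set of $G$ of size $|D'| - 2$: if this intersection has size at least $3$, discard the subdivision vertices and adjoin $u$ or $v$ if needed to handle the vertices previously dominated via $w_1$ or $w_4$. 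The delicate subcases arise when $|D' \cap \{w_1, w_2, w_3, w_4\}| = 2$: the witness constraints force this intersection to be either $\{w_1, w_4\}$, in which case $\{u, v\} \subseteq D'$ is forced since $w_1$ and $w_4$ admit no other possible witnesses and $D' \setminus \{w_1, w_4\}$ is a TD-set of $G$ with $uv \in E(G)$ providing mutual witnesses for $u$ and $v$; or $\{w_2, w_3\}$, in which case both $u$ and $v$ are dominated in $G'$ by $G$-neighbours belonging to $D'$ (distinct from each other since $uv \notin E(G')$), and $D' \setminus \{w_2, w_3\}$ is then a TD-set of $G$.

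With the formula established, the NP-completeness for $(C_3, \dots, C_k)$-free graphs follows by iteration. Starting from an instance $G$ of \textsc{Total Dominating Set}, which is $\mathsf{NP}$-complete by~\citep{PLH83}, I would set $G^{(0)} = G$ and obtain $G^{(t)}$ from $G^{(t-1)}$ by simultaneously replacing every edge with a path of length $5$. Since every internal vertex of a subdivision has degree $2$, the cycles of $G^{(t)}$ are in bijection with those of $G^{(t-1)}$ with length multiplied by $5$; after $t = \lceil \log_5 k \rceil$ iterations the girth of $G^{(t)}$ exceeds $k$, so $G^{(t)}$ is $(C_3, \dots, C_k)$-free. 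The number of edges multiplies by $5$ per round, hence remains polynomial in $|G|$ for fixed $k$, and by repeated application of the formula one obtains $\gamma_t(G^{(t)}) = \gamma_t(G) + 2N$, where $N$ is the (polynomially computable) total number of edge-replacements performed, yielding the reduction. The main obstacle I anticipate is the case analysis in the lower bound when $|D' \cap \{w_1, w_2, w_3, w_4\}| = 2$, where careful bookkeeping is required to verify that no vertex in $D' \setminus \{w_1, w_2, w_3, w_4\}$ relied exclusively on a subdivision vertex as its witness in $G'$.
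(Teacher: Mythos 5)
Your overall plan — the same upper-bound constructions, a lower-bound case analysis on the subdivision vertices, and iterated $4$-subdivisions to boost the girth — matches the paper's, and the upper-bound half and the girth argument are correct. However, the lower bound has a genuine gap. You assert that when $|D' \cap \{w_1,w_2,w_3,w_4\}| = 2$ the witness constraints force this intersection to be $\{w_1,w_4\}$ or $\{w_2,w_3\}$. This is false: the intersection $\{w_1,w_2\}$ satisfies every constraint internal to the path ($w_1$ and $w_2$ witness each other, and $w_3$ is dominated by $w_2$), and it extends to a valid total dominating set provided $v \in D'$ dominates $w_4$; symmetrically, $\{w_3,w_4\}$ is feasible with $u \in D'$. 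These configurations genuinely occur in \emph{minimum} TD-sets: for $G = K_3$ on $\{u,v,a\}$ the graph $G'$ is a $C_7$, and $D' = \{v,a,w_1,w_2\}$ is a minimum TD-set with $D' \cap \{w_1,\dots,w_4\} = \{w_1,w_2\}$. The missing cases are easy to repair (here $w_4 \notin D'$ forces $v$ to keep a witness in $N_G(v)\setminus\{u\}$, and $v$ dominates $u$ in $G$, so $D' \setminus \{w_1,w_2\}$ is already a TD-set of $G$ of size $|D'|-2$), but as written your case analysis is incomplete. The paper avoids the issue by casing first on $|D' \cap \{u,v\}|$ rather than on $D' \cap \{w_1,\dots,w_4\}$: for instance, $u \in D'$ and $v \notin D'$ forces $w_3 \in D'$ and then $|D' \cap \{w_2,w_3,w_4\}| \geq 2$, which automatically covers the asymmetric configurations you missed.

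Two smaller points. In your $\{w_2,w_3\}$ subcase, the parenthetical claim that the dominators of $u$ and $v$ are ``distinct from each other since $uv \notin E(G')$'' is neither true in general (they may coincide if $u$ and $v$ share a neighbour in $G$) nor needed. And in the case $|D' \cap \{w_1,\dots,w_4\}| \geq 3$, ``adjoin $u$ or $v$ if needed'' glosses over the fact that when all four $w_i$ lie in $D'$ you may need to adjoin \emph{both} $u$ and $v$ (the arithmetic still saves two), whereas when exactly three lie in $D'$ and $\{u,v\} \cap D' = \varnothing$ you must check that only one endpoint can lose its dominator — which does follow, since the domination constraints at $w_1$ and $w_4$ rule out the intersections $\{w_1,w_2,w_4\}$ and $\{w_1,w_3,w_4\}$ in that situation — so that a single added vertex suffices there.
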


\begin{proof} Let $uw_{1}w_{2}w_{3}w_{4}v$ be the path of length $5$ in $G'$ resulting from the subdivision of $uv$. Suppose first that $D$ is a minimum TD-set of $G$. We build a TD-set of $G'$ as follows. If $\{u, v\} \subseteq D$, it is easy to see that $D \cup \{w_{1}, w_{4}\}$ is a TD-set of $G'$ of size $\gamma_{t}(G) + 2$. If exactly one of $u$ and $v$ belongs to $D$, say without loss of generality $u \in D$, then $D \cup \{w_{3}, w_{4}\}$ is a TD-set of $G'$ of size $\gamma_{t}(G) + 2$. Finally, if $\{u,v\} \cap D = \varnothing$, it is easy to see that $D \cup \{w_2,w_3\}$ is a TD-set of $G'$ of size $\gamma_{t}(G) + 2$.     

Suppose now that $D'$ is a minimum TD-set of $G'$. Clearly, $D'$ contains at least two vertices of $\{w_{1}, w_{2}, w_{3}, w_{4}\}$. If $\{u, v\} \subseteq D'$, we have that $D' \setminus \{w_{1}, w_{2}, w_{3}, w_{4}\}$ is a TD-set of $G$ of size at most $\gamma_{t}(G') - 2$. 

Suppose now that exactly one of $u$ and $v$ belongs to $D'$, say without loss of generality $u \in D'$. This implies that $w_3$ belongs to $D'$, as $w_4$ has a neighbour in $D'$ (independently of whether it belongs to $D'$ or not). Since $w_3$ has a neighbour in $D'$, we have $|D' \cap \{w_{2}, w_{3}, w_{4}\}| \geq 2$. Therefore, if $w_{1} \in D'$, then $(D' \cup \{v\}) \setminus \{w_{1}, w_{2}, w_{3}, w_{4}\}$ is a TD-set of $G$ of size at most $\gamma_{t}(G') - 2$. On the other hand, if $w_1 \notin D'$, then $u$ has a neighbour in $D'$ different from $w_1$, and so $D' \setminus \{w_{1}, w_{2}, w_{3}, w_{4}\}$ is a TD-set of $G$ of size at most $\gamma_{t}(G') - 2$. 

Finally, suppose that $\{u,v\} \cap D' = \varnothing$. If $\{w_1,w_4\} \subseteq D'$, then $\{w_1,w_2,w_3,w_4\} \subseteq D'$ and so $D' \setminus \{w_1,w_2,w_3,w_4\} \cup \{u, v\}$ is a TD-set of $G$ of size $\gamma_{t}(G') - 2$. If exactly one of $w_1$ and $w_4$ belongs to $D'$, say without loss of generality $w_1 \in D'$, then $\{w_2, w_3\} \subseteq D'$. Therefore, $D' \setminus \{w_1,w_2,w_3,w_4\} \cup \{v\}$ is a TD-set of $G$ of size $\gamma_t(G') - 2$. Finally, if $\{w_1,w_4\} \cap D' = \varnothing$, then $\{w_2,w_3\} \subseteq D'$ and $D' \setminus \{w_1,w_2,w_3,w_4\}$ is a TD-set of $G$ of size $\gamma_t(G') - 2$.

The second statement immediately follows by applying a $4$-subdivision to an instance of \textsc{Total Dominating Set} sufficiently many times.
\end{proof}

Using the fact that \textsc{Total Dominating Set} is $\mathsf{NP}$-complete for graphs with arbitrarily large girth (\Cref{girthtot}), for split graphs \citep{LP83} (see also \Cref{splittot}) and for claw-free graphs \citep{Mc94}, the reasonings in the proof of \Cref{dich} immediately imply the following dichotomy:

\begin{theorem}\label{dichtot} \textsc{Total Dominating Set} restricted to $H$-free graphs is decidable in polynomial time if $H$ is an induced subgraph of $P_{4} + tK_{1}$ and $\mathsf{NP}$-complete otherwise.
\end{theorem}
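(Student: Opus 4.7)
My plan is to run the same four-case analysis as in the proof of Theorem \ref{dich}, simply swapping in the hardness ingredients for \textsc{Total Dominating Set} that are pointed out just before the statement: Lemma \ref{girthtot} (large girth), \citep{Mc94} (claw-free), and \citep{LP83} / Theorem \ref{splittot} (split). On the hardness side I would first observe that if $H$ contains an induced cycle $C_k$ with $k\geq 3$, then $(C_3,\dots,C_k)$-free graphs form a subclass of $H$-free graphs, so Lemma \ref{girthtot} supplies NP-completeness. If $H$ is a forest with a vertex of degree at least $3$, then $H$ contains an induced claw, so every claw-free graph is $H$-free and NP-completeness follows from \citep{Mc94}. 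In the remaining case $H$ is a disjoint union of paths; whenever $H$ has either two components on at least two vertices or a single component on at least five vertices, $H$ contains an induced $2K_2$, so every split graph is $H$-free and the NP-completeness of \textsc{Total Dominating Set} on split graphs (\citep{LP83} or Theorem \ref{splittot}) finishes the hardness side.

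What remains are the forbidden subgraphs $H$ that are induced subgraphs of $P_4 + tK_1$, where I need a polynomial-time algorithm; write $H = P_k + sK_1$ with $k \leq 4$ and $s \leq t$. Let $G$ be an $H$-free input graph. I would split on whether $G$ contains an induced $P_k$. If not, then $G$ is $P_k$-free and in particular $P_4$-free, hence a cograph of clique-width at most $2$; since the property of being a total dominating set is expressible in $\mathrm{MSO}_1$ by Lemma \ref{mso1}, Courcelle's theorem solves the problem in polynomial time. If $G$ contains an induced $P_k$ on vertex set $V'$, I choose a maximal set $S$ of pairwise non-adjacent vertices that are also non-adjacent to every vertex of $V'$; $H$-freeness forces $|S| \leq t-1$, and maximality makes $V' \cup S$ a dominating set of size at most $t+3$. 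Combined with the classical inequality $\gamma_t(G) \leq 2\gamma(G)$ (add one neighbour in $G$ for every vertex of a minimum dominating set that is isolated in the induced subgraph), this yields $\gamma_t(G) \leq 2t+6$, so a minimum total dominating set can be found by brute-forcing all vertex subsets of size at most $2t+6$ in time $n^{O(t)}$, polynomial for fixed $t$.

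I do not expect a genuine obstacle: the argument is essentially a verbatim transport of the proof of Theorem \ref{dich}, with the claw-case reference updated from Theorem \ref{inapprox} to \citep{Mc94} and the bound $\gamma_{t2}\leq 2\gamma$ replaced by the analogous (and equally classical) bound $\gamma_t\leq 2\gamma$. The only point that warrants a sentence of justification is precisely this last inequality, since Theorem \ref{dichtot} needs it applied to every $H$-free input graph in the polynomial case.
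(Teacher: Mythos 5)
Your proposal is correct and follows essentially the same route as the paper, which itself proves \Cref{dichtot} by invoking \Cref{girthtot}, the split-graph hardness of \citep{LP83}/\Cref{splittot} and the claw-free hardness of \citep{Mc94} and then rerunning the case analysis of \Cref{dich}; your write-up merely makes explicit the steps the paper leaves implicit (including the standard bound $\gamma_t(G)\leq 2\gamma(G)$).
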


Comparing \Cref{dich,dichtot} and results in \citep{Kor92,Mun3}, it can be observed that the complexities of \textsc{Dominating Set}, \textsc{Semitotal Dominating Set}, \textsc{Total Dominating Set} and \textsc{Connected Dominating Set} all agree when restricted to monogenic classes. Therefore, differences in the complexities of these problems restricted to hereditary classes might arise only when forbidding more than one induced subgraph. For example, \textsc{Total Dominating Set} is in $\mathsf{P}$ for chordal bipartite graphs \citep{DMK90}, whereas \textsc{Semitotal Dominating Set} is $\mathsf{NP}$-hard for this class \citep{HP17}. \citet{HP17} provided an example of a graph class for which \textsc{Semitotal Dominating Set} is in $\mathsf{P}$ but \textsc{Total Dominating Set} is $\mathsf{NP}$-hard. However, their example is not hereditary and so it is natural to ask: 

\begin{question} Does there exist a hereditary class for which \textsc{Semitotal Dominating Set} is in $\mathsf{P}$ but \textsc{Total Dominating Set} is $\mathsf{NP}$-hard?
\end{question}

Moreover, looking at \Cref{table}, it is tempting to ask:

\begin{question} If \textsc{Dominating Set} is in $\mathsf{P}$ for a hereditary class $\mathcal{C}$, is it true that \textsc{Semitotal Dominating Set} is in $\mathsf{P}$ for $\mathcal{C}$ as well? 
\end{question}

\section{Perfectness}\label{sec:perfect}

Since $\gamma(G) \leq \gamma_{t2}(G) \leq \gamma_{t}(G)$, for any graph $G$ with no isolated vertex, it is natural to ask for which graphs equalities hold. In this section, we show that the graphs attaining either of the two equalities are unlikely to have a polynomial characterisation. Indeed, for a given graph $G$, we show that it is $\mathsf{NP}$-hard to decide whether $\gamma_{t2}(G) = \gamma_{t}(G)$, even if $G$ is planar and with maximum degree at most $4$. Our reduction is based on the following problem, shown to be $\mathsf{NP}$-complete by \citet{DJPS94}.

\begin{center}
\fbox{%
\begin{minipage}{5.5in}
\textsc{Planar Exactly $3$-Bounded $3$-Sat}
\begin{description}[\compact\breaklabel\setleftmargin{60pt}]
\item[Instance:] A formula $\Phi$ with variable set $X$ and clause set $C$ such that each variable has exactly three literals with one of them occuring in two clauses and the other in one, and each clause is the disjunction of either two or three literals. Moreover, the bipartite graph $G_{X, C}$ with vertex set $X \cup C$ and an edge between $x \in X$ and $c \in C$ if $c$ contains one of the literals $x$ and $\overline{x}$ is planar.
\item[Question:] Is $\Phi$ satisfiable?   
\end{description}
\end{minipage}}
\end{center}

\begin{theorem}\label{t2t} For a given planar graph $G$ with maximum degree at most $4$, it is $\mathsf{NP}$-hard to decide whether $\gamma_{t2}(G) = \gamma_{t}(G)$.
\end{theorem}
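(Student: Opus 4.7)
The plan is to construct a polynomial-time reduction from \textsc{Planar Exactly $3$-Bounded $3$-Sat}. Given an instance $\Phi$ with variable set $X$ and clause set $C$, I would build a planar graph $G_\Phi$ of maximum degree at most $4$ such that $\gamma_{t2}(G_\Phi) = \gamma_t(G_\Phi)$ if and only if $\Phi$ is satisfiable. Because $\gamma_{t2} \leq \gamma_t$ always holds, the task is to design gadgets in which every \emph{unsatisfied} clause under the truth assignment induced by a minimum solution forces $\gamma_t$ to pay one extra vertex that a semi-TD-set can avoid by using a distance-two witness (the prototype trade-off being $\gamma_{t2}(P_5)=2<3=\gamma_t(P_5)$), while every \emph{satisfied} clause contributes equally to both parameters.

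For each variable $x \in X$, I would use a variable gadget $V_x$ with three attachment vertices, one for each literal occurrence of $x$ (two positive and one negative, or the reverse, since $\Phi$ is exactly $3$-bounded). The gadget should satisfy $\gamma_t(V_x) = \gamma_{t2}(V_x) = \alpha$ and admit exactly two canonical minimum (semi-)TD-sets, labelled \emph{true} and \emph{false}, determined by which attachment vertices are activated; crucially, any minimum (semi-)TD-set of $G_\Phi$ restricted to $V_x$ must fall into one of these two states, so it unambiguously induces a truth assignment on $X$. A small building block based on $P_5$ or a short cycle with pendants, where the two minimum dominating configurations correspond to opposite sides of the block, should serve this purpose.

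For each clause $c \in C$, I would design a clause gadget $C_c$ connected to the relevant variable gadgets along a planar embedding of the bipartite incidence graph $G_{X,C}$, using $5$-subdivisions as wires where needed; by \Cref{subdsemi} and \Cref{girthtot}, each such subdivision adds exactly the same constant to both $\gamma_{t2}$ and $\gamma_t$ and so does not affect the gap. The gadget should be rigid in the following sense: if at least one incident literal is in its satisfying state, then that literal provides the adjacency needed to totally dominate the gadget at the baseline cost $\beta$, giving $\gamma_t(C_c)=\gamma_{t2}(C_c)=\beta$ locally; if no incident literal is satisfying, then any TD-set of $C_c$ requires an extra internal vertex to supply a missing neighbourhood, while a semi-TD-set can still meet $\beta$ by exploiting a distance-two witness. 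Summing, $\gamma_t(G_\Phi)-\gamma_{t2}(G_\Phi)$ equals the minimum number of unsatisfied clauses over all truth assignments, which is $0$ precisely when $\Phi$ is satisfiable.

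The main obstacle is the explicit design and verification of the variable and clause gadgets. One must show that the variable gadget truly has only the two intended minimum (semi-)TD-states (no spurious third option collapsing the truth encoding), that the clause gadget achieves the claimed one-vertex penalty for TD-sets against semi-TD-sets precisely in the unsatisfied case, and that no unintended cross-gadget shortcut — for instance via two attachment vertices of different gadgets lying at distance at most two — reduces the semi-TD cost below what the clause-by-clause accounting predicts. Keeping all vertices at degree at most $4$ and preserving planarity is then a routine consequence of the bounded occurrence of each variable ($3$) and the planarity of $G_{X,C}$, but requires a careful local layout of the gadget near each attachment vertex and, if necessary, the insertion of further length-$5$ subdivisions to separate gadgets without disturbing the gap.
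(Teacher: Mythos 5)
Your high-level strategy is the same as the paper's: reduce from \textsc{Planar Exactly $3$-Bounded $3$-Sat}, encode truth values in variable gadgets, and exploit the fact that a clause vertex only needs a distance-two witness for semitotal domination but a distance-one neighbour for total domination, so that an unsatisfied clause forces $\gamma_t$ to pay extra while $\gamma_{t2}$ does not. However, as written the proposal has a genuine gap: the entire substance of the proof --- the explicit gadgets and the verification that minimum solutions behave as intended --- is deferred, and you say so yourself (``the main obstacle is the explicit design and verification of the variable and clause gadgets''). Without those gadgets there is no proof. For comparison, the paper's construction is far simpler than what you sketch: each variable $x$ becomes a single triangle with two literal vertices $x$ and $\overline{x}$ (no third attachment vertex is needed, since a literal occurring twice can simply be made adjacent to two clause vertices), each clause becomes a $K_2$ whose pendant forces the clause vertex into every TD-set, and no wires or subdivisions are used at all. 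One then checks directly that $\gamma_{t2}(G)=|X|+|C|$ unconditionally, and that a TD-set of that size exists if and only if each clause vertex can find a true literal as a neighbour. Planarity and maximum degree $4$ follow immediately from the planarity of $G_{X,C}$ and the bounded occurrence of each variable.

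A secondary technical problem: you justify the ``wires'' by citing \Cref{subdsemi} and \Cref{girthtot} and claim each subdivision ``adds exactly the same constant to both $\gamma_{t2}$ and $\gamma_t$,'' but those two lemmas concern \emph{different} operations --- \Cref{subdsemi} replaces an edge by a path of length $6$ and controls $\gamma_{t2}$, while \Cref{girthtot} replaces an edge by a path of length $5$ and controls $\gamma_t$. Neither lemma tells you how the \emph{other} parameter changes under its subdivision, so the claim that the gap $\gamma_t-\gamma_{t2}$ is preserved along a wire is unsupported and would require a separate argument. In the paper this issue never arises because no subdivisions are needed.
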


\begin{proof} As mentioned above, we reduce from \textsc{Planar Exactly $3$-Bounded $3$-Sat}. Given a formula $\Phi$ with variable set $X$ and clause set $C$, we build a graph $G$ as follows. For any variable $x \in X$, we introduce a triangle $G_{x}$ with two distinguished \textit{literal vertices} $x$ and $\overline{x}$. For any clause $c \in C$, we introduce a copy of $K_{2}$, denoted by $G_{c}$, with a distinguished \textit{clause vertex} $c$. Finally, for each clause $c \in C$, we add an edge between the clause vertex $c$ and the literal vertices whose corresponding literals belong to $c$. Clearly, $G$ is planar and has maximum degree at most $4$. We claim that $\Phi$ is satisfiable if and only if $\gamma_{t2}(G) = \gamma_{t}(G)$.

Observe first that $\gamma_{t2}(G) = |X| + |C|$. Indeed, let $D$ be a dominating set of $G$. For each $x \in X$, we have that $|D \cap V(G_{x})| \geq 1$ and, for each $c \in C$, $|D \cap V(G_{c})| \geq 1$. Therefore, $\gamma_{t2}(G) \geq |X| + |C|$. Moreover, the set consisting of all clause vertices together with an arbitrary vertex for each variable gadget is a semitotal dominating set of $G$ of size $|X| + |C|$.

Suppose now that $\Phi$ is satisfiable. Given a satisfying truth assignment of $\Phi$, we build a total dominating set $T$ of $G$ of size $|X| + |C|$ as follows. For any variable $x \in X$, we add to $T$ either the literal vertex $x$, if $x$ evaluates to true, or the literal vertex $\overline{x}$, otherwise. Moreover, for any clause $c \in C$, we add the clause vertex $c$ to $T$. Clearly, $|T| = |X| + |C|$ and $T$ is a dominating set of $G$. Since each clause vertex belongs to $T$, we have that every literal vertex in $T$ has a neighbour in $T$. Moreover, since each clause vertex is adjacent to a literal vertex whose corresponding literal evaluates to true, we have that $T$ is indeed a total dominating set of $G$.

Conversely, suppose that $\gamma_{t2}(G) = \gamma_{t}(G)$ and let $T$ be a total dominating set of $G$ such that $|T| = \gamma_{t}(G)$. By assumption, $|T| = |X| + |C|$. This implies that, for each $c \in C$, $T \cap V(G_{c}) = \{c\}$ and, for each $x \in X$, either $T \cap V(G_{x}) = \{x\}$ or $T \cap V(G_{x}) = \{\overline{x}\}$. Therefore, we define a truth assignment of $\Phi$ as follows: we set $x$ to true if $T \cap V(G_{x}) = \{x\}$, and to false if $T \cap V(G_{x}) = \{\overline{x}\}$. Since each clause vertex belongs to $T$ and is adjacent to a literal vertex which belongs to $T$ as well, we have that this truth assignment indeed satisfies $\Phi$.
\end{proof}

\citet{ADR15} showed that, for a given graph $G$, it is $\mathsf{NP}$-hard to decide whether $\gamma(G) = \gamma_{t}(G)$. As implicitly observed in the proof of \Cref{t2t}, the graph $G$ therein satisfies $\gamma(G) = |X| + |C|$ and so we immediately obtain the following strengthening of their result: 

\begin{corollary}\label{domtotdom} For a given planar graph $G$ with maximum degree at most $4$, it is $\mathsf{NP}$-hard to decide whether $\gamma(G) = \gamma_{t}(G)$.
\end{corollary}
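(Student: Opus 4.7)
The plan is to reuse verbatim the construction from the proof of \Cref{t2t} and simply observe that the two required equalities $\gamma(G)=|X|+|C|$ and $\gamma_t(G)=|X|+|C|$ are governed by essentially the same argument. Starting from an instance $\Phi$ of \textsc{Planar Exactly $3$-Bounded $3$-Sat} with variable set $X$ and clause set $C$, I would construct the same planar, maximum-degree-$4$ graph $G$: a triangle $G_x$ with two literal vertices $x,\overline{x}$ for each variable, a $K_2$ $G_c$ with a distinguished clause vertex $c$ for each clause, and edges joining each clause vertex to the literal vertices of its literals.

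The first step is to pin down $\gamma(G)$. The lower bound $\gamma(G)\geq |X|+|C|$ follows from the counting argument already given inside the proof of \Cref{t2t}: the gadgets $G_x$ and $G_c$ are vertex-disjoint and every dominating set $D$ must contain at least one vertex from each of them, because each $V(G_x)$ and each $V(G_c)$ has a vertex (the literal vertices and the non-clause vertex of $G_c$ respectively) whose closed neighbourhood lies inside the gadget. The matching upper bound $\gamma(G)\leq |X|+|C|$ is witnessed by the set consisting of every clause vertex together with one arbitrarily chosen vertex from each variable triangle, which is indeed dominating. Hence $\gamma(G)=|X|+|C|$ regardless of the satisfiability status of $\Phi$.

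The second step is to translate satisfiability into the equality $\gamma(G)=\gamma_t(G)$. Since $\gamma(G)=|X|+|C|$, this reduces to showing that $\Phi$ is satisfiable if and only if $\gamma_t(G)\leq |X|+|C|$, and this is precisely what is proved in \Cref{t2t}: a satisfying assignment yields a TD-set of size $|X|+|C|$ by picking the true literal vertex in each variable gadget and every clause vertex, while conversely a TD-set $T$ with $|T|=|X|+|C|$ must hit each $G_x$ in exactly one literal vertex and each $G_c$ in exactly the clause vertex $c$, and the totality condition forces each clause vertex to be adjacent to a selected literal vertex, thereby producing a satisfying truth assignment. Combining this with step one, $\Phi$ is satisfiable iff $\gamma(G)=\gamma_t(G)$, and the reduction is polynomial.

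There is essentially no obstacle beyond bookkeeping: the nontrivial planarity, degree bound and gadget analysis are all borrowed from \Cref{t2t}. The only thing that really has to be checked separately, and which is the single new ingredient, is the absolute value $\gamma(G)=|X|+|C|$; this is an elementary counting argument and completes the proof.
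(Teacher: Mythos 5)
Your proposal is correct and takes essentially the same route as the paper: the paper likewise just observes that the lower-bound counting in the proof of \Cref{t2t} already gives $\gamma(G)=|X|+|C|$, so that satisfiability of $\Phi$ becomes equivalent to $\gamma(G)=\gamma_{t}(G)$. One minor slip worth fixing: in $G_{x}$ it is the third (non-literal) vertex of the triangle, not the literal vertices, whose closed neighbourhood lies inside the gadget, but this does not affect the counting argument.
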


We now show that, similarly to \Cref{t2t} and \Cref{domtotdom}, recognising the graphs $G$ such that $\gamma(G) = \gamma_{t2}(G)$ is $\mathsf{NP}$-hard.

\begin{theorem}\label{domsemidom} For a given planar graph $G$ with maximum degree at most $4$, it is $\mathsf{NP}$-hard to decide whether $\gamma(G) = \gamma_{t2}(G)$.
\end{theorem}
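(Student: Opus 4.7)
The plan is to reduce from \textsc{Planar Exactly $3$-Bounded $3$-Sat}, adapting the construction of \Cref{t2t}. From an instance $\Phi$ with variable set $X$ and clause set $C$, first build $G^{0}$ exactly as in \Cref{t2t} (triangle $G_{x} = \{x, \overline{x}, z_{x}\}$ per variable, $K_{2}$ $G_{c} = \{c, c'\}$ per clause, with each $c$ adjacent to the literal vertices corresponding to its literals), and then form $G$ by subdividing once each edge $c\ell$ between a clause vertex and a literal vertex, inserting a new vertex $m_{c, \ell}$. Subdivision preserves planarity and does not increase degrees, so $G$ is planar with maximum degree at most $4$. The same argument as in \Cref{t2t} yields $\gamma(G) = |X| + |C|$: every dominating set must hit each $G_{x}$ (to dominate $z_{x}$) and each $G_{c}$ (to dominate $c'$), and $\{z_{x} : x \in X\} \cup \{c : c \in C\}$ realises this bound since each subdivision vertex $m_{c, \ell}$ is dominated by the clause vertex $c$. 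The claim is that $\gamma(G) = \gamma_{t2}(G)$ if and only if $\Phi$ is satisfiable.

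For the forward direction, from a satisfying assignment $a$ set $D = \{a(x) : x \in X\} \cup \{c : c \in C\}$, viewing $a(x) \in \{x, \overline{x}\}$ as the true literal. Then $|D| = \gamma(G)$, $D$ is dominating, and $D$ is semi-total: each literal $a(x)$ has the witness $c \in D$ at distance two via $a(x) - m_{c, a(x)} - c$ for any clause $c$ containing $a(x)$, and each $c \in D$ has the witness $a(y) = \ell \in c$ at distance two via $c - m_{c, \ell} - a(y)$ for some true literal $\ell \in c$ (which exists since $c$ is satisfied by $a$). Hence $\gamma_{t2}(G) \leq |X| + |C| = \gamma(G)$ and equality holds.

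For the backward direction, suppose $D$ is a semi-TD-set of size $|X| + |C|$. The cardinality bound forces $|D \cap V(G_{x})| = |D \cap V(G_{c})| = 1$ for every gadget and precludes subdivision vertices from $D$. If $D \cap V(G_{c}) = \{c'\}$ for some $c$, the only neighbour of $c'$, namely $c$, is not in $D$ and the distance-two set of $c'$ is $\{m_{c, \ell} : \ell \in c\}$, disjoint from $D$: contradiction. Hence $D \cap V(G_{c}) = \{c\}$ for every $c$. If $D \cap V(G_{x}) = \{z_{x}\}$ for some $x$, the distance-two set of $z_{x}$ in $G$ is $\{x, \overline{x}\} \cup \{m_{c, \ell} : \ell \in \{x, \overline{x}\}, \ c \ni \ell\}$, again disjoint from $D$ under the above constraints: contradiction. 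Hence $D \cap V(G_{x}) = \{a(x)\}$ with $a(x) \in \{x, \overline{x}\}$, which I interpret as a truth assignment. The distance-two set of $c$ in $G$ is now exactly $\{c', m_{c, \ell}, \ell : \ell \in c\}$, and only literals can lie in $D$; the semi-TD property for $c$ therefore forces some $\ell \in c$ to coincide with $a(y)$, that is, $a$ satisfies every clause, so $\Phi$ is satisfiable.

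The pivotal point of the reduction is the subdivision: in the unmodified construction, the twin $\overline{\ell}$ of each literal in $c$ sits at distance two from $c$ in the triangle and always acts as a witness, so the semi-TD condition holds for every minimum dominating set and the reduction collapses; this is the main obstacle the subdivision is designed to overcome. After subdivision, the distance-two neighbourhood of $c$ contains only the literal vertices $\ell \in c$, so the witness requirement for $c$ becomes equivalent to the existence of a true literal in $c$, and analogously for $z_x$. Since \textsc{Planar Exactly $3$-Bounded $3$-Sat} is $\mathsf{NP}$-complete and the reduction outputs a planar graph of maximum degree at most $4$, $\mathsf{NP}$-hardness of deciding $\gamma(G) = \gamma_{t2}(G)$ on this class follows.
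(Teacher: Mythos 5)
Your proof is correct, and it reaches the result by a genuinely different construction from the paper's. Both arguments reduce from \textsc{Planar Exactly $3$-Bounded $3$-Sat} and both hinge on forcing every minimum dominating set to contain exactly one prescribed vertex per clause gadget and one literal vertex per variable gadget, so that the witness condition for the clause vertices encodes satisfiability. The paper, however, keeps the clause--literal edges intact and instead replaces the triangle of \Cref{t2t} by a larger variable gadget (\Cref{fig:gad}) with literal transmitters and a central vertex $t_x$; this pushes the domination number up to $2|X|+|C|$, and the forcing argument must rule out transmitters and the auxiliary gadget vertices case by case. You keep the \Cref{t2t} gadgets verbatim and instead subdivide once each clause--literal edge, which keeps $\gamma(G)=|X|+|C|$ and makes the distance-two neighbourhood of a clause vertex $c$ consist precisely of $c'$, the subdivision vertices $m_{c,\ell}$ and the literal vertices $\ell \in c$. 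Your diagnosis of why the unmodified \Cref{t2t} graph fails for this equality (the opposite literal $\overline{\ell}$ always sits at distance two from $c$ and acts as a free witness) is exactly the obstruction the paper's transmitters are also designed to remove, and your fix is the more economical of the two: the gadgets are smaller, the backward-direction case analysis is shorter, and planarity and the degree bound are immediate since subdivision preserves both. The two facts your argument silently relies on --- that both literals of every variable occur in some clause, so each chosen literal vertex has a clause vertex as witness in the forward direction, and that nothing outside $\{c'\}\cup\{m_{c,\ell}\}\cup\{\ell : \ell \in c\}$ lies within distance two of $c$ --- are indeed guaranteed by the problem definition and by the construction, so no gap remains.
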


\begin{proof} We reduce again from \textsc{Planar Exactly $3$-Bounded $3$-Sat}. Given a formula $\Phi$ with variable set $X$ and clause set $C$, we build a graph $G$ as follows. For any variable $x \in X$, we introduce the gadget $G_{x}$ depicted in \Cref{fig:gad}. It contains two distinguished \textit{literal vertices} $x$ and $\overline{x}$. Moreover, if $x \in X$ has one negated and two unnegated occurences in $\Phi$, then the literal vertex $\overline{x} \in V(G_{x})$ is adjacent to the \textit{literal transmitter} $\overline{x}_{1}$ of $\overline{x}$, whereas the literal vertex $x \in V(G_{x})$ is adjacent to the two \textit{literal transmitters} $x_{1}$ and $x_{2}$ of $x$. We proceed, mutatis mutandis, if $x \in X$ has one unnegated and two negated occurences in $\Phi$. Then, for any clause $c \in C$, we introduce a copy of $K_{2}$, denoted by $G_{c}$, with a distinguished \textit{clause vertex} $c$. Finally, for each clause $c \in C$, we add an edge between the clause vertex $c$ and the literal transmitters whose corresponding literals belong to $c$. Clearly, $G$ is planar and has maximum degree at most $4$. We claim that $\Phi$ is satisfiable if and only if $\gamma(G) = \gamma_{t2}(G)$.

\begin{figure}[h!]
\centering
\begin{tikzpicture}
\node[circ,label=left:{\footnotesize $x_1$}] (x1) at (0,0) {};
\node[circ,label=below:{\footnotesize $x$}] (x) at (1,0) {};
\node[circ,label=below:{\footnotesize $\overline{x}$}] (xb) at (3,0) {};
\node[circ,label=right:{\footnotesize $\overline{x}_1$}] (x1b) at (4,0) {};
\node[circ] (a) at (1,1) {};
\node[circ,label=below:{\footnotesize $t_x$}] (tx) at (2,1) {};
\node[circ] (b) at (3,1) {};
\node[circ] (c) at (2,2) {};
\node[circ,below left of=x,label=below:{\footnotesize $x_2$}] (x2) {}; 
\draw (x1) -- (x) 
(x2) -- (x)
(x) -- (xb)
(x) -- (a) 
(a) -- (tx)
(tx) -- (c)
(tx) -- (b)
(b) -- (xb)
(xb) -- (x1b);

\node[circ,label=left:{\footnotesize $\overline{x}_1$}] (x1bis) at (6,0) {};
\node[circ,label=below:{\footnotesize $\overline{x}$}] (xbis) at (7,0) {};
\node[circ,label=below:{\footnotesize $x$}] (xbb) at (9,0) {};
\node[circ,label=right:{\footnotesize $x_1$}] (x1bb) at (10,0) {};
\node[circ] (ab) at (7,1) {};
\node[circ,label=below:{\footnotesize $t_x$}] (txb) at (8,1) {};
\node[circ] (bb) at (9,1) {};
\node[circ] (cb) at (8,2) {};
\node[circ,below left of=xbis,label=below:{\footnotesize $\overline{x}_2$}] (x2b) {}; 
\draw (x1bis) -- (xbis) 
(x2b) -- (xbis)
(xbis) -- (xbb)
(xbis) -- (ab) 
(ab) -- (txb)
(txb) -- (cb)
(txb) -- (bb)
(bb) -- (xbb)
(xbb) -- (x1bb);
\end{tikzpicture}
\caption{The variable gadget $G_{x}$. If $x \in X$ has two unnegated occurences, then $G_{x}$ is depicted on the left. If $x \in X$ has two negated occurences, then $G_{x}$ is depicted on the right.}
\label{fig:gad}
\end{figure}
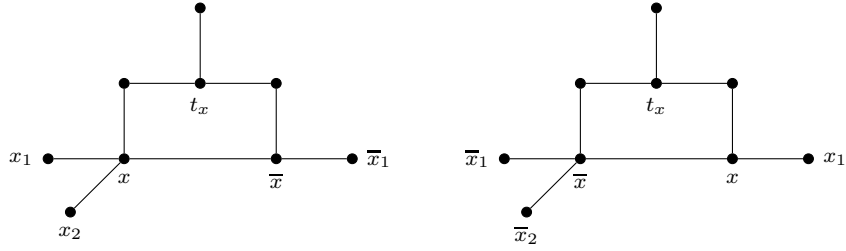

We first show that $\gamma(G) = 2|X| + |C|$. Indeed, let $D$ be a dominating set of $G$. For each $x \in X$, we have that $|D \cap V(G_{x})| \geq 2$ and, for each $c \in C$, $|D \cap V(G_{x})| \geq 1$. Therefore, $\gamma(G) \geq 2|X| + |C|$. Moreover, it is easy to see that $\{c : c \in C\} \cup \{t_{x}: x \in X\} \cup \{x: x \in X\}$ is a dominating set of $G$ of size $2|X| + |C|$.

Suppose now that $\Phi$ is satisfiable. Given a satisfying truth assignment of $\Phi$, we build a semitotal dominating set $S$ of $G$ of size $2|X| + |C|$ as follows. For any variable $x \in X$, we add to $S$ the vertex $t_{x}$ and either the literal vertex $x$, if $x$ evaluates to true, or the literal vertex $\overline{x}$, otherwise. Moreover, for any clause $c \in C$, we add the clause vertex $c$ to $S$. Clearly, $|S| = 2|X| + |C|$ and $S$ is a dominating set of $G$. Note that each vertex in $S \cap V(G_{x})$ is within distance $2$ of another vertex in $S \cap V(G_{x})$ and each clause vertex is within distance $2$ of a literal vertex in $S$ whose corresponding literal evaluates to true. Therefore, $S$ is indeed a semitotal dominating set of $G$.

Conversely, suppose that $\gamma(G) = \gamma_{t2}(G)$ and let $S$ be a semitotal dominating set of $G$ such that $|S| = \gamma_{t2}(G)$. By assumption, $|S| = 2|X| + |C|$. Since $S$ is a dominating set of $G$ and $|S| = 2|X| + |C|$, we have that $|S \cap V(G_{x})| = 2$, for each $x \in X$, and $|D \cap V(G_{x})| = 1$, for each $c \in C$. We now claim that $S \cap V(G_{c}) = \{c\}$, for each $c \in C$. Indeed, if this is not the case, $S$ contains the neighbour of $c$ in $G_{c}$ and a literal transmitter in some variable gadget $G_{x}$. But then it is easy to see that $|S \cap V(G_{x})| \geq 3$, a contradiction. Moreover, $t_{x} \in S$, for each $x \in X$, for otherwise its $1$-neighbour belongs to $S$ as well as one of its $2$-neighbours, thus implying that $|S \cap V(G_{x})| \geq 3$, a contradiction. Finally, since for each $x \in X$, $t_{x} \in S$ and $|S \cap V(G_{x})| = 2$, we have that either $S \cap V(G_{x}) = \{t_{x}, x\}$ or $S \cap V(G_{x}) = \{t_{x}, \overline{x}\}$. Therefore, we define a truth assignment of $\Phi$ as follows: we set $x$ to true if $x \in S \cap V(G_{x})$, and to false if $\overline{x} \in S \cap V(G_{x})$. Since each clause vertex belongs to $S$ and is within distance $2$ of a literal vertex which belongs to $S$, we have that the truth assignment defined above indeed satisfies $\Phi$.
\end{proof}

In view of \Cref{t2t,domsemidom}, it makes sense to study the class of graphs obtained by further requiring equality for every induced subgraph. The most prominent example in this respect is given by the class of perfect graphs\footnote{Recall that a graph $G$ is perfect if $\chi(H) = \omega(H)$, for every induced subgraph $H$ of $G$.}: \citet{CRST06} showed that a graph is perfect if and only if it does not contain any odd hole nor odd antihole. Graphs for which the equality between certain domination parameters holds for every induced subgraph received considerable attention. For example, \citet{Zve03} showed that a graph has equal domination number and connected domination number, for each of its connected induced subgraphs, if and only if it is $(P_{5}, C_{5})$-free. \citet{CP17} provided a forbidden induced subgraph characterisation for the graphs having equal domination number and independent domination number for every induced subgraph. Characterisations involving other domination parameters have been obtained in~\citep{ADR15,HH18,Sch12}.

In the rest of this section, we provide forbidden induced subgraph characterisations for the graphs heriditarily satisfying $\gamma_{t2} = \gamma_{t}$ and $\gamma = \gamma_{t2}$. These characterisations have been independently obtained by \citet{HH18}. However, our proofs have the nice feature of being considerably shorter.

\begin{theorem} For every graph $G$, the following are equivalent:
\begin{itemize}
\item For any induced subgraph $H$ of $G$ with no isolated vertex, $\gamma_{t2}(H) = \gamma_{t}(H)$. 
\item $G$ is $(C_5,P_5)$-free.
\end{itemize}
\end{theorem}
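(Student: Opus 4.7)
If $G$ contains an induced copy $H$ of $C_5$ or $P_5$, then $H$ has no isolated vertex and a direct calculation gives $\gamma_{t2}(H)=2<3=\gamma_t(H)$: in $P_5=v_1v_2v_3v_4v_5$ the set $\{v_2,v_4\}$ is a semi-TD-set and a TD-set of size $2$ would need two adjacent vertices dominating the whole path, which is impossible; in $C_5$ any two vertices at distance $2$ form a semi-TD-set while no TD-set of size $2$ exists. In either case the hereditary equality fails at $H$, so $G$ must be $(C_5,P_5)$-free.

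\textbf{Sufficiency: setup.} Since $\mbox{\textit{Free}}(C_5,P_5)$ is hereditary, it is enough to show that $\gamma_{t2}(G)=\gamma_t(G)$ for every $(C_5,P_5)$-free graph $G$ with no isolated vertex; only the inequality $\gamma_t(G)\leq\gamma_{t2}(G)$ needs a proof. Choose a minimum semi-TD-set $D$ of $G$ that minimises the number of \emph{lonely} vertices of $D$, i.e.\ vertices with no neighbour in $D$. If $D$ has no lonely vertex it is a TD-set and we are done, so suppose some $v\in D$ is lonely; fix a witness $w\in D$ of $v$ and a common neighbour $u\notin D$ of $v$ and $w$. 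Consider the swap $D'=(D\setminus\{v\})\cup\{u\}$, which has the same size as $D$. If $D'$ is a semi-TD-set then $u$ is adjacent to $w\in D'$ (hence not lonely), and the lonely count of $D'$ is strictly smaller than that of $D$, contradicting the choice of $D$. Hence $D'$ is not a semi-TD-set.

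\textbf{Sufficiency: case analysis.} A failure of $D'$ is of one of the following two types: (A) there is a private neighbour $y\in N(v)\setminus D$ of $v$, i.e.\ $N(y)\cap D=\{v\}$, satisfying $y\not\sim u$ (so $y$ becomes undominated); or (B) some $z\in D$ had $v$ as its unique witness in $D$ and now satisfies $d_G(z,u)>2$. In Case (B), $z$ and $v$ share a common neighbour $u''\notin D$, and one checks that $u''\not\sim u$ and $u''\not\sim w$ (otherwise $d_G(z,u)\leq 2$ or $w$ is a second witness for $z$); together with the forced non-edges $zv,zu,zw$, the vertices $z,u'',v,u,w$ induce a $P_5$, contradiction. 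In Case (A), the vertices $y,v,u,w$ induce a $P_4$, which I would extend through a witness $w^\star\in D$ of $w$. If $w\sim w^\star$ and $w^\star\not\sim u$, the sequence $y,v,u,w,w^\star$ induces a $P_5$; if $w\sim w^\star$ and $w^\star\sim u$, I would extract via the minimality of $D$ a private neighbour $z$ of $w^\star$ and exploit the parity of the pair of binary choices ``$z\sim u$?'' and ``$z\sim y$?'' to produce either the induced $P_5$ $y,v,u,w^\star,z$ (when $z\not\sim u,\,y\not\sim z$) or the induced $C_5$ on $y,v,u,w^\star,z$ (when $z\not\sim u,\,y\sim z$); if $w$ is itself lonely, the symmetric argument using a private neighbour of $w$ (possibly combined with the pair-of-lonelies configuration where $v$ and $w$ witness each other through $u$) yields the same obstruction.

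\textbf{Main obstacle.} The only sub-case of (A) that is not resolved directly by a five-vertex configuration is the one in which $w^\star\sim u$ and every private neighbour of $w^\star$ is itself adjacent to $u$: the natural $5$-element vertex set then carries the extra chord $uz$ and does not on its own induce $P_5$ or $C_5$. The way out is to iterate the swap, replacing $w^\star$ by $u$ in $D$ and repeating the trichotomy: this either reduces the lonely count directly or triggers a Case (B) failure for some $z'$ whose unique witness was $w^\star$, which then produces the desired induced $P_5$ exactly as above. Choosing the initial triple $(v,w,u)$ (and the auxiliary $w^\star$) so as to minimise the secondary quantity $|N(u)\cap D|$ is what guarantees that the iteration terminates, closing every sub-case and delivering the contradiction to $(C_5,P_5)$-freeness.
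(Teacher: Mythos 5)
Your necessity computation, your choice of the extremal set $D$ (a minimum semi-TD-set with fewest lonely vertices), and your Case (B) analysis are all correct, and the overall strategy (a local exchange whose failure is converted into an induced $P_5$ or $C_5$) is the same as the paper's. The proof is nevertheless incomplete, because Case (A) is not closed and the two devices you invoke to close it do not work as stated. First, the existence of a private neighbour $z$ of $w^\star$ does not follow from the minimality of $D$: for a minimum semi-TD-set, deleting a vertex $x$ may fail only because $x$ is the unique witness of some other vertex of $D$, so $x$ need not have any private neighbour. Second, even granting such a $z$, the sub-case where $w^\star\sim u$ and every private neighbour of $w^\star$ is adjacent to $u$ is left to an unproved iteration. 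Replacing $w^\star$ by $u$ need not decrease the number of lonely vertices (a vertex whose only $D$-neighbour was $w^\star$ and which is not adjacent to $u$ becomes lonely), and the Case (B)-type failure of this second swap does not reproduce the earlier $P_5$: the analogous five vertices $z',c,w^\star,u,v$ may carry the chords $cv$ (nothing forbids a common neighbour of $z'$ and $w^\star$ from being adjacent to $v$), so no forbidden subgraph is forced. Asserting that minimising $\vert N(u)\cap D\vert$ over the initial choices ``guarantees termination'' is not an argument; no decreasing invariant is exhibited. The sub-case where $w$ is itself lonely is likewise only gestured at.

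The paper closes exactly this difficulty by performing a different exchange. Having fixed the bad vertex $v$ and a witness $w$ at distance two, it first shows that $N_H(v)$ and $N_H(w)$ are incomparable, and then proves that for a common neighbour $y$ (your $u$) one of the two sets $N_H(v)\setminus N_H(w)$ and $N_H(w)\setminus N_H(v)$ must be complete to $y$: otherwise one vertex from each, together with $v$, $y$, $w$, induces a $P_5$ or a $C_5$. Whichever side is complete to $y$ determines which of $v,w$ to trade for $y$, and this completeness is precisely what guarantees that every domination and witnessing duty of the discarded vertex is taken over by $y$ or by the remaining one of $v,w$, so the exchange succeeds in a single step. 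If you want to repair your write-up, replacing your Case (A) by this ``one private side is complete to the common neighbour'' dichotomy is the missing idea; your Case (B) and the rest of your setup can stay as they are.
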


\begin{proof}
If the equality holds for any induced subgraph of $G$ with no isolated vertex, it is clear that $G$ is $(C_5, P_5)$-free as $\gamma_t(C_5) = \gamma_t(P_5) = 3$ and $\gamma_{t2} (C_5) = \gamma_{t2} (P_5) = 2$. 

Conversely, given an induced subgraph $H$ with no isolated vertex of a $(C_{5}, P_{5})$-free graph $G$, we show that there exists a minimum semitotal dominating set of $H$ which is a total dominating set of $H$. This would imply that $\gamma_{t}(H) \leq \gamma_{t2}(H)$ and thus conclude the proof. We say that a vertex $v$ in a semitotal dominating set of $H$ is \textit{bad} if $v$ has no witness at distance $1$ (in other words, every witness for $v$ is at distance exactly $2$). Suppose, to the contrary, that no minimum semitotal dominating set of $H$ is a total dominating set of $H$ and let $S$ be a minimum semitotal dominating set of $H$ having the minimum number of bad vertices. 

Consider a bad vertex $v \in S$ and let $w$ be a witness for $v$ with respect to $S$ (by assumption, $d_{H}(v, w) = 2$). We now show that $v$ and $w$ have incomparable neighbourhoods. Indeed, if $N_{H}(w) \subseteq N_{H}(v)$ then, since $N_{H}(v) \cap S = \varnothing$, the only vertex of $H$ witnessed by $w$ is $v$. Therefore, denoting by $u$ an arbitrary neighbour of $w$, we have that $(S \setminus \{w\}) \cup \{u\}$ is a semitotal dominating set of $H$ with fewer bad vertices than $S$, a contradiction. Hence, $N_{H}(w) \nsubseteq N_{H}(v)$. Similarly, it is easy to see that $N_{H}(v) \nsubseteq N_{H}(w)$. 

As shown in the previous paragraph, $v$ and $w$ have incomparable neighbourhoods. Consider now a vertex $y \in N_{H}(v) \cap N_{H}(w)$. We have that either $N_{H}(v) \setminus N_{H}(w)$ or $N_{H}(w) \setminus N_{H}(v)$ is complete to $y$, for otherwise there exist $v' \in N_{H}(v) \setminus N_{H}(w)$ and $w' \in N_{H}(w) \setminus N_{H}(v)$ both non-adjacent to $y$ and so $\{v', v, y, w, w'\}$ induces either a $C_{5}$ or a $P_{5}$, a contradiction. But if $N_{H}(v) \setminus N_{H}(w)$ (or $N_{H}(w) \setminus N_{H}(v)$) is complete to $y$, then $(S \setminus \{v\}) \cup \{y\}$ (or $(S \setminus \{w\}) \cup \{y\}$) is a semitotal dominating set of $H$ with fewer bad vertices than $S$, a contradiction. 
\end{proof}

We now turn to the graphs hereditarily satisfying $\gamma = \gamma_{t2}$. Note that if $G$ has no isolated vertex and contains a dominating vertex, then $\gamma(G) = 1$ but $\gamma_{t2}(G) = 2$. Therefore, it makes sense to modify the definition of $\gamma_{t2}$ as follows: 

\begin{equation*}
  \widetilde{\gamma}_{t2}(G)=
     \begin{cases}
        1 & \text{if $G$ has a dominating vertex;} \\
        \gamma_{t2}(G) & \text{otherwise.}
     \end{cases}
\end{equation*}

The graph $P$ depicted in \Cref{graphP} appears in the characterisation.

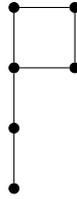
\begin{figure}[htb]
\centering
\begin{tikzpicture}[scale=.8]
\node[circ] (1) at (0,0) {};
\node[circ] (2) at (0,1) {};
\node[circ] (3) at (0,2) {};
\node[circ] (4) at (0,3) {};
\node[circ] (5) at (1,2) {};
\node[circ] (6) at (1,3) {};
\draw (1) edge[-] (2)
(2) edge[-] (3)
(3) edge[-] (4)
(3) edge[-] (5)
(4) edge[-] (6)
(5) edge[-] (6);
\end{tikzpicture}
\caption{The graph $P$.}
\label{graphP}
\end{figure}

\begin{theorem} For every graph $G$, the following are equivalent:
\begin{itemize}
\item For any induced subgraph $H$ of $G$ with no isolated vertex, $\gamma(H) = \widetilde{\gamma}_{t2}(H)$. 
\item $G$ is $(C_6,P_6,P)$-free.
\end{itemize}
\end{theorem}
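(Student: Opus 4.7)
For the forward direction, I would verify directly that each of $C_{6}$, $P_{6}$, and $P$ has no dominating vertex and satisfies $\gamma=2<3=\gamma_{t2}=\widetilde{\gamma}_{t2}$: a size-$2$ dominating set exists in each case by inspection, but no size-$2$ subset at mutual distance at most $2$ dominates the whole graph. Since these three graphs are themselves without isolated vertices, the hereditary assumption on $G$ forbids each of them as an induced subgraph.

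For the reverse direction, let $H$ be an induced subgraph of $G$ with no isolated vertex; the class of $(C_{6},P_{6},P)$-free graphs is hereditary, so $H$ belongs to it. If $H$ has a dominating vertex, then $\widetilde{\gamma}_{t2}(H)=1=\gamma(H)$, and we are done. Otherwise, since $\gamma(H)\le \gamma_{t2}(H)$ is automatic, it suffices to produce a semi-TD-set of size $\gamma(H)$. I plan to select a minimum dominating set $D$ of $H$ minimizing the number of \emph{lonely} vertices (those $v\in D$ with $D\cap N^{2}[v]=\{v\}$) and to derive a contradiction from the existence of a lonely vertex. Assume $v\in D$ is lonely. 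Since $H$ has no dominating vertex, some vertex lies outside $N[v]$ and must be dominated by some $w\in D\setminus\{v\}$, so $d_{H}(v,w)\ge 3$; choosing $w$ with $d_{H}(v,w)$ minimum and observing that otherwise the vertex at distance $2$ from $v$ on a shortest $v$-$w$ path would be dominated by a strictly closer $D$-vertex, one forces $d_{H}(v,w)=3$. Fix such a shortest path $v\,p_{1}\,p_{2}\,w$.

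Since $N_{H}(p_{1})\cap D=\{v\}$, the vertex $p_{1}$ lies in the private neighbourhood of $v$. If every private neighbour of $v$ is contained in $N_{H}[p_{1}]$, then $D':=(D\setminus\{v\})\cup\{p_{1}\}$ is a minimum dominating set in which $p_{1}$ has $w$ as a distance-$2$ witness; the loneliness status of every other vertex of $D$ is preserved because $v$ was at distance $>2$ from each, so $D'$ has strictly fewer lonely vertices than $D$, contradicting the minimality. Hence some $u\in N_{H}(v)\setminus\{p_{1}\}$ with $N_{H}(u)\cap D=\{v\}$ satisfies $up_{1}\notin E(H)$. Since $D$ is minimum, $w$ also has a private neighbour, and an analogous analysis at $w$ (the case $u'=p_{2}$ being handled by the symmetric swap $w\to p_{2}$) produces a vertex $u'\in N_{H}(w)\setminus\{p_{2}\}$ with $N_{H}(u')\cap D=\{w\}$ and $u'p_{2}\notin E(H)$.

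The induced subgraph on $\{u,v,p_{1},p_{2},w,u'\}$ has the forced non-edges $uw$, $vu'$, $vw$, $vp_{2}$, $p_{1}w$ arising from $d_{H}(v,w)=3$ and the shortest-path structure, so only the three edges $up_{2}$, $uu'$, $p_{1}u'$ are undetermined. A direct case check then shows: if none of them is present, $\{u,v,p_{1},p_{2},w,u'\}$ induces $P_{6}$; if only $uu'$ is present, it induces $C_{6}$; and if only $p_{1}u'$ is present, the pendant path $u\,v\,p_{1}$ attached to the $4$-cycle $p_{1}p_{2}wu'$ gives an induced copy of $P$. The remaining configurations (either $up_{2}\in E(H)$, or both $uu'$ and $p_{1}u'$ present) are handled by a simultaneous double swap $v\to p_{1}$, $w\to u'$ which I would show strictly decreases the number of lonely vertices. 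The main obstacle is this terminal case analysis---specifically, arguing that in the double-swap case no previously non-lonely vertex of $D\setminus\{v,w\}$ becomes lonely, which requires exploiting the $(C_{6},P_{6},P)$-freeness to control the second-neighbourhoods of the remaining $D$-vertices.
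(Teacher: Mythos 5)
Your forward direction and the skeleton of the reverse direction (minimum dominating set with fewest ``lonely''/bad vertices, the closest second vertex $w\in D$ at distance exactly $3$, and the single swap $v\to p_1$) coincide with the paper's argument, and that part is sound: since $v$ is lonely, no other vertex of $D$ loses a witness when $v$ is removed, so that swap strictly decreases the count. The trouble starts with the ``analogous analysis at $w$''. Unlike $v$, the vertex $w$ is not lonely, so the swap $w\to p_2$ can deprive some $d\in D\setminus\{v,w\}$ of its only witness (a $d$ at distance $2$ from $w$ is only guaranteed to be at distance $3$ from $p_2$); you gain $v$ as a non-lonely vertex but may create new lonely ones, so no strict decrease follows and the existence of your $u'$ is not established. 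The paper sidesteps this by phrasing its swap hypotheses about \emph{all} neighbours (``every neighbour of $u$ is adjacent to $y$''), which forces every vertex within distance $2$ of $u$ to be within distance $2$ of $y$; your hypothesis about private neighbours only is too weak to preserve witnesses.

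The second, more serious gap is the terminal case analysis, which you correctly flag as the main obstacle but which actually fails as stated. Knowing only the two private neighbours $u$ and $u'$ does not control the rest of $N_H(v)\cup N_H(w)$, so the double swap $v\to p_1$, $w\to u'$ need not even remain dominating: concretely, in the configuration $up_2\in E(H)$, $p_1u'\in E(H)$, $uu'\notin E(H)$, the vertex $u$ is a private neighbour of $v$ adjacent to neither $p_1$ nor $u'$ and becomes undominated; and even when $u$ happens to be covered, other private neighbours of $v$ or $w$ outside your six-vertex window may not be. This is precisely the work the paper's proof does with its two displayed claims: using $(C_6,P_6,P)$-freeness on six-tuples $\{a,u,y,z,v,x\}$ for \emph{every} $a\in N_H(u)\cup N_H(v)$, it shows each such $a$ is adjacent to $x$ or $y$, which simultaneously guarantees domination and witness preservation for the final double swap. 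Your plan would need an analogous universally quantified structural claim, not just the analysis of one induced six-vertex configuration, so as written the proof is incomplete. (A minor point in your favour: the configuration where only $up_2$ is present also induces a copy of $P$ — a $4$-cycle $u\,v\,p_1\,p_2$ with the pendant path $p_2\,w\,u'$ — so one more of your ``remaining'' cases is killed directly; but this does not rescue the others.)
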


\begin{proof}
If the equality holds for any induced subgraph of $G$ with no isolated vertex, then it is clear that $G$ is $(C_6,P_6,P)$-free as $\gamma (C_6) = \gamma (P_6) = \gamma (P) = 2$ and $\widetilde{\gamma}_{t2} (C_6) = \widetilde{\gamma}_{t2} (P_6) =\widetilde{\gamma}_{t2} (P) =3$. 

Conversely, given an induced subgraph $H$ with no isolated vertex of a $(C_6,P_6,P)$-free graph $G$, we show that there exists a minimum dominating set of $H$ which is also a semitotal dominating set of $H$. This would imply that $\widetilde{\gamma}_{t2}(H) \leq \gamma (H)$ and thus conclude the proof. Clearly, we may assume that $H$ has no dominating vertex. A vertex in a dominating set of $H$ is said to be \textit{bad} if it has no witness. Suppose, to the contrary, that no minimum dominating set of $H$ is a semitotal dominating set of $H$ and let $S$ be a minimum dominating set of $H$ having the minimum number of bad vertices. Note that $|S| > 1$, as $H$ has no dominating vertex. We now show that we can construct from $S$ a minimum dominating set with fewer bad vertices, thus contradicting the minimality of $S$ and concluding the proof. 

Consider a bad vertex $v \in S$. If $d_{H}(v) = 1$ then, denoting by $u$ the only neighbour of $v$ in $H$, the minimum dominating set $S' = (S\setminus \{v\}) \cup \{u\}$ has fewer bad vertices. Indeed, since $H$ has no dominating vertex, every neighbour of $u$ different from $v$ is dominated by a vertex in $S$ and so $u$ has a witness in $S'$. 

Assume henceforth that $d_{H}(v) > 1$ and consider a vertex $u \in S$ at minimum distance from $v$. We denote by $P$ a shortest path from $v$ to $u$. Since $v$ is bad, $N_H(v) \cap N_H(u) = \varnothing$, i.e. $d_H(v,u) > 2$. We now claim that $d_H(v,u) = 3$. Suppose, to the contrary, that $d_H(v,u) > 3$. Since no vertex of $P-\{v,u\}$ belongs to $S$, denoting by $w$ the vertex on $P$ at distance $2$ from $v$, there exists $a \in S \cap N_H(w)$ dominating $w$. But then $d_H(v,a) \leq 3$, which contradicts the definition of $u$.

Therefore $d_H(v,u) = 3$. Moreover $d_{H}(u) > 1$, for otherwise, denoting by $w$ the only neighbour of $u$ in $H$, the minimum dominating set $(S \setminus \{u\}) \cup \{w\}$ has fewer bad vertices than $S$. Let now $z$ and $y$ be the neighbours of $v$ and $u$ on $P$, respectively. We distinguish two cases according to whether or not there exists a neighbour of $v$ non-adjacent to both $z$ and $y$.

\bigskip

\noindent
\textbf{Case 1.} There exists $x \in N_H(v)$ non-adjacent to both $z$ and $y$. 

\noindent
Observe first that there exists $t \in N_H(u)$ non-adjacent to $y$. Indeed, $d_{H}(u) > 1$, and if every neighbour of $u$ is adjacent to $y$, then the minimum dominating set $(S\setminus \{u\}) \cup \{y\}$ of $H$ has fewer bad vertices than $S$, as $y$ is a witness for $v$ and for all the vertices previously witnessed by $u$, a contradiction. Observe now that $t$ is adjacent to both $z$ and $x$, for otherwise $\{x, v, z, y, u, t\}$ induces either a $P_{6}$, a $C_{6}$ or a $P$. Moreover, there exists $w \in N_H(u) \cup N_H(v)$ non-adjacent to both $z$ and $t$. Indeed, if every $w \in N_H(u) \cup N_H(v)$ is adjacent to either $z$ or $t$, then the minimum dominating set $(S\setminus \{u,v\}) \cup \{z,t\}$ of $H$ contains fewer bad vertices than $S$.

\begin{figure}[htb]
\centering
\begin{tikzpicture}[node distance=1cm]
\node[circ,label=below:{\footnotesize $v$}] (v) {};
\node[circ,label=below:{\footnotesize $z$},right of=v] (z) {};  
\node[circ,label=below:{\footnotesize $y$},right of=z] (y) {};
\node[circ,label=above:{\footnotesize $x$},above of=v] (x) {};
\node[circ,label=above:{\footnotesize $t$},right of=x] (t) {};
\node[circ,label=above:{\footnotesize $u$},right of=t] (u) {};

\draw (v) -- (z)
(v) -- (x)
(z) -- (y)
(z) -- (t)
(y) -- (u)
(x) -- (t)
(t) -- (u);
\end{tikzpicture}
\caption{Case 1.}
\end{figure}

The following two observations follow from the fact that $H$ is $(C_6, P_6, P)$-free.

\begin{equation}
\text{$w$ is adjacent to both $x$ and $y$.}
\label{c2}
\end{equation}

By symmetry, we may assume that $w$ is a neighbour of $v$. But then $w$ is adjacent to $y$, for otherwise $\{t, u, y, z, v, w\}$ induces a $P$, and so $w$ is adjacent to $x$ as well, or else $\{w,v,x,t,u,y\}$ induces a $C_6$. $\diamond$

\begin{equation}
\label{c1}
\text{Each vertex in } (N_H(u) \cup N_H(v)) \setminus \{x, y\} \text{ is adjacent to either $x$ or $y$.}
\end{equation} 

Suppose, to the contrary, that there exists $a \in (N_H(u) \cup N_H(v)) \setminus \{x, y\}$ adjacent to neither $x$ nor $y$. By symmetry, we may assume that $a$ is a neighbour of $u$. But then $a$ is adjacent to $z$, for otherwise $\{a, u, y, z, v, x\}$ induces a $P_6$, and so $\{a, u, y, z, v, x\}$ induces a $P$, a contradiction. $\diamond$ \\


 
\smallskip

By \labelcref{c2}, $x$ and $y$ are at distance two. It then follows from \labelcref{c1} that $(S\setminus \{u,v\}) \cup \{x,y\}$ is a minimum dominating set of $H$ with fewer bad vertices than $S$, a contradiction.

\bigskip

\noindent
\textbf{Case 2.} Every vertex in $N_{H}(v)$ is adjacent to either $z$ or $y$.

\noindent
If every neighbour of $u$ is adjacent to either $z$ or $y$, then the minimum dominating set $(S \setminus \{u, v\}) \cup \{z, y\}$ contains fewer bad vertices than $S$, a contradiction. Therefore, there exists $t \in N_H(u)$ non-adjacent to both $z$ and $y$ and we proceed as in the previous case. We have that there exists $x \in N_H(v)$ non-adjacent to $z$, for otherwise the minimum dominating set $(S\setminus \{v\}) \cup \{z\}$ of $H$ has fewer bad vertices than $S$. Therefore $x$ is adjacent to $y$ and, since $H$ is $P$-free, also to $t$. Moreover, there exists $w \in N_H(u) \cup N_H(v)$ non-adjacent to both $x$ and $y$, for otherwise the minimum dominating set $(S\setminus \{u, v\}) \cup \{x, y\}$ of $H$ contains fewer bad vertices than $S$. 

\begin{figure}[htb]
\centering
\begin{tikzpicture}[node distance=1cm]
\node[circ,label=below:{\footnotesize $z$}] (v) {};
\node[circ,label=below:{\footnotesize $y$},right of=v] (z) {};  
\node[circ,label=below:{\footnotesize $u$},right of=z] (y) {};
\node[circ,label=above:{\footnotesize $v$},above of=v] (x) {};
\node[circ,label=above:{\footnotesize $x$},right of=x] (t) {};
\node[circ,label=above:{\footnotesize $t$},right of=t] (u) {};

\draw (v) -- (z)
(v) -- (x)
(z) -- (y)
(z) -- (t)
(y) -- (u)
(x) -- (t)
(t) -- (u);
\end{tikzpicture}
\caption{Case 2.}
\end{figure}

But then as in \labelcref{c2} and \labelcref{c1}, we have that $w$ is adjacent to both $t$ and $z$, and that each vertex in $(N_H(u) \cup N_H(v)) \setminus \{t, z\}$ is adjacent to either $t$ or $z$. Therefore, $(S\setminus \{u, v\}) \cup \{t, z\}$ is a minimum dominating set of $H$ with fewer bad vertices, a contradiction.
\end{proof}

\section{Concluding remarks and open problems}

In this paper, we continued the systematic study on \textsc{Semitotal Dominating Set} initiated in \citep{HP17}. In particular, we showed that there exists a $O(n^{4 + 6w})$ time algorithm for graphs with bounded mim-width, assuming a decomposition tree with mim-width $w$ is provided as part of the input. The idea consists in reducing the problem to \textsc{Total Dominating Set} via a graph transformation preserving this graph class and then use known algorithms for \textsc{Total Dominating Set}. We remark that there certainly is room for improvement in the $O(n^{4 + 6w})$ running time for specific graph classes with bounded mim-width. We also observed that the class of dually chordal graphs is closed under the same transformation and so a polynomial-time algorithm for \textsc{Total Dominating Set} for dually chordal graphs would provide a polynomial-time algorithm for \textsc{Semitotal Dominating Set} for that same class. The search for such an algorithm seems to be an interesting open problem.   

In \Cref{table} and \Cref{fig:dia}, we highlighted other graph classes for which the status of \textsc{Semitotal Dominating Set} is unknown. Most notably, AT-free graphs and its subclass of co-comparability graphs, circle graphs and tolerance graphs. The former three classes all have unbounded mim-width and it is an open problem to determine whether the mim-width of tolerance graphs is bounded or not \citep{Men17}. It is easy to see that the class of AT-free graphs is not closed under our transformation but it is not immediately clear whether the same holds for the other three classes.   

In \Cref{secapp}, we showed that \textsc{Semitotal Dominating Set} is $\mathsf{APX}$-complete when restricted to cubic graphs and to subcubic bipartite graphs. It is natural to expect that the same holds for cubic bipartite graphs and we leave the verification as an open problem. 

In \Cref{sec:dicho}, we obtained complexity dichotomies in monogenic classes for \textsc{Semitotal Dominating Set} and \textsc{Total Dominating Set} and observed that the complexities of \textsc{Dominating Set}, \textsc{Semitotal Dominating Set}, \textsc{Total Dominating Set} and \textsc{Connected Dominating Set} all agree when restricted to monogenic classes. 
Moreover, we asked the following: if \textsc{Dominating Set} is in $\mathsf{P}$ for a hereditary class $\mathcal{C}$, is it true that \textsc{Semitotal Dominating Set} is in $\mathsf{P}$ for $\mathcal{C}$ as well? In case of a negative answer, it would be interesting to investigate the following: if \textsc{Dominating Set} is polynomial for a class $\mathcal{C}$, can we improve on the trivial $2$-approximation algorithm for \textsc{Semitotal Dominating Set} restricted to $\mathcal{C}$?  

In \Cref{sec:perfect}, we showed that it is $\mathsf{NP}$-complete to recognise the graphs such that $\gamma_{t2}(G) = \gamma_{t}(G)$ and those such that $\gamma(G) = \gamma_{t2}(G)$, even if restricted to be planar and with maximum degree at most $4$. We conclude by asking what happens in the case of subcubic graphs.

\section*{Acknowledgments}

A. M. would like to thank Martin Milani\v{c} and Arne Leitert for useful discussions.
\bibliographystyle{plainnat}
\bibliography{references}

\end{document}